\newtheorem{theorem}{Theorem}[section]
\newtheorem{corollary}[theorem]{Corollary}
\newtheorem{definition}[theorem]{Definition}
\newtheorem{lemma}[theorem]{Lemma}
\newtheorem{proposition}[theorem]{Proposition}
\newtheorem{assumption}[theorem]{Assumption}
\newtheorem{myremark}[theorem]{Remark}
\newcommand{\er}{\mathbb{R}}
\newcommand{\WW}{\Omega}
\newcommand{\cW}{{\mathcal{W} }}
\newcommand{\ocW}{{\overline{\mathcal{W} }}}
\newcommand{\Ca}{\mathbbm{A}}
\newcommand{\SCM}{\mathscr{C}}
\newcommand{\cS}{\mathbb S}
\newcommand{\pguess}{{\tau^{\text{\rm lower}}}}
\newcommand{\opguess}{{\tau^{\text{\rm higher}}}}
\newcommand{\dom}{\text{Dom}}
\newcommand{\Of}{{\mathcal O}} 
\newcommand{\Off}{{\mathcal O}\!\!\!\,\!\!{\mathcal O}}
\newcommand{\bfs}{{\bf s}}
\newcommand{\bfw}{{\bf w}}
\newcommand{\bfA}{{\bf A}}
\newcommand{\bfSA}{{\mathcal A}\!\!\!\,\!\!{\mathcal A}}
\newcommand{\bfAunder}{{\bf A}^{\tiny \cW}}
\newcommand{\Aunder}{{A}^{\tiny \cW}}
\newcommand{\bfAupper}{{\bf A}^{\tiny \ocW}}
\newcommand{\Aupper}{A^{\tiny \ocW}}
\newcommand{\bfB}{{\bf B}}
\newcommand{\bfBE}{{\bf {B}}}
\newcommand{\bfEp}{{\bf {E'}}}
\newcommand{\bfE}{{\bf {E}}}
\newcommand{\bfC}{{\bf C}}
\newcommand{\op}{\bar{p}}
\newcommand{\uup}{\underbar{$p$}}
\newcommand{\plol}{p_{\text{lolc}}}
\def\co2{$\text{CO}_{2}$}
\newcommand{\pcar}{{p^{\tiny{\text{\rm\tiny\co2}}}}}
\newcommand{\Pnlt}{{\mathfrak{p}}}
\newcommand{\SSigma}{{\Upsigma}\!\!\!\!\!{\Upsigma}}
\newcommand{\sa}{s^{\ast}}
\newcommand{\bfsa}{{\bf s}^{\ast}}
\def\ind{{\color{black}{\mathbbm{1}}}}
\newcommand{\pelec}{p^{\rm{elec}}}
\def\etc{\emph{etc}}
\numberwithin{equation}{section}
\title{Game theory analysis for carbon auction market\\through electricity market coupling}
\author[1]{Mireille Bossy\thanks{mireille.bossy@inria.fr}}
\author[2]{Nadia Ma\"{i}zi\thanks{nadia.maizi@mines-paristech.fr}}
\author[3]{Odile Pourtallier\thanks{odile.pourtallier@inria.fr}}
\affil[1]{TOSCA Laboratory, INRIA Sophia Antipolis -- M\'editerran\'ee, France}
\affil[2]{MINES ParisTech, Centre for Applied Mathematics, France}
\affil[3]{HEPHAISTOS Laboratory, INRIA Sophia Antipolis -- M\'editerran\'ee, France}
\date{\today}
\begin{document}
\maketitle
\begin{abstract}
In this paper, we analyze  Nash equilibria between electricity producers selling their production on an electricity market and buying \co2 emission allowances on an auction carbon market. The producers' strategies integrate the coupling of the two markets via the cost functions of the electricity production. We set out a  clear Nash equilibrium on the power market that can be used to compute equilibrium prices on both markets as well as the related electricity produced and \co2 emissions released.

\medskip
{AMS 2010 Subject Classifications 91A80, 91B26.}
\end{abstract}

\section{Introduction}  

The aim of this paper is to develop analytic tools in order to design a relevant mechanism for carbon markets, where relevant refers to emissions reduction. For this purpose, we focus on electricity producers in a power market linked to a carbon market. The link between markets is established through a market microstructure approach. In this context, where the number of agents is limited, standard game theory applies.
The producers are considered as players behaving on the two financial markets represented here by carbon and electricity.  We establish a Nash equilibrium for this non-cooperative $J$-player game through a coupling mechanism between the two markets.
  
The original idea comes from the French electricity sector, where the spot electricity market is often used to satisfy peak demand. Producers' behavior is demand driven and linked to the maximum level of electricity production. Each producer strives to maximize its market share.  In the meantime, it has to manage the environmental burden associated with its electricity production through a mechanism inspired by the  EU ETS (European Emission Trading System)  framework: each  producer unit of emissions must be counterbalanced by a permit or through the payment of a penalty. Emission permit allocations are simulated through a carbon market that allows the producers to buy allowances at an auction. Our focus on the electricity sector is motivated by its prevalence in the emission share (45\% of the whole emission level worldwide), and the introduction in phase III of the EU ETS of an auction-based allowance allocation mechanism. In the present paper, the design assumptions made on the carbon market aim to foster emissions reduction in the entire electricity sector.

Our approach proposes an original framework for the coupling of bidding strategies
on two markets.

Given a  static elastic demand curve on the electricity market (referring to the time stages in an organized electricity market, mainly day-ahead and intra-day), we solve the local problem (just a single time period of the same length for both markets)  
of establishing a non-cooperative Nash equilibrium for the two coupled markets.
This simplification is justified here, as we aim to raise  the condition under which a carbon market would be a real efficient instrument for carbon mitigation policies. 

This analysis is conducted for
non-continuous and non-strictly monotone supply functions and bidding strategies
on both markets in the complete information framework. 

While literature on applied game theory  to strategic  bidding on power markets   mainly addresses profit maximization (see eg \cite{ChiesaDenicolo2009} with complete information, \cite{Hortacsu2011} with private information, \cite{HortacsuPuller2008} with incomplete information),  our objective function is share maximization.  

The equilibria of the coupled markets are based on the  full characterization of the equilibrium electricity price (on the electricity market alone).  
 We prove the uniqueness of the  price and shares,  for share maximization  whereas, to our knowledge  this property is not established (under our hypotheses) for profit maximization   (see eg  \cite{bossy-maizi-al-05}).

Moreover, share maximization approach deals with profit by making specific assumptions, i.e. no-loss sales, and a tradeoff  between the purchase of allowances and the carbon footprint of the electricity generated. Hence,  this work is the first attempt on  power and carbon markets coupling through game theory approach.  Other coupling  approaches use,  for instance, models that produce dynamics for both electricity and carbon prices jointly, as in   \cite{carmona-coulon-schwarz-12}, \cite{carmona-delarue-etal-13}. 

In Section \ref{sec:market-rules}, we formalize the market (carbon and electricity) rules and the associated admissible set of players' coupled strategies.

We start by studying (in section \ref{sec:power-market}) the set of  Nash equilibria  on the electricity market alone (see Proposition \ref{propo-Nash}). This set  constitutes an equivalence class (same prices and market shares) from which we exhibit a dominant  strategy. 

Section \ref{sec:design} is devoted to the analysis of coupled markets equilibria: given a specific carbon market design (in terms of penalty level and allowances), we compute the bounds of the interval where carbon prices  (derived from the previous dominant strategy) evolve. 
We specify the properties of the associated equilibria.

\section{Coupling markets mechanism}\label{sec:market-rules}
\subsection{Electricity market}
In the electricity market, demand is  aggregated and summarized by a function  $p\mapsto D(p)$, where $D(p)$ is the quantity of electricity that buyers are ready to obtain at maximal unit price  $p$.  We assume the following:
\begin{assumption}\label{ass:demande} 
The demand function $D(\cdot):\er^{+} \rightarrow \er^{+}$  is non-increasing, left continuous, and such that  $D(0) >0$.
\end{assumption}

\begin{figure}[ht]
\centering
\subfigure[{\small{delivery 9-10 am}}\label{Clearing1-Epex}]{
\includegraphics[width=.48\textwidth]{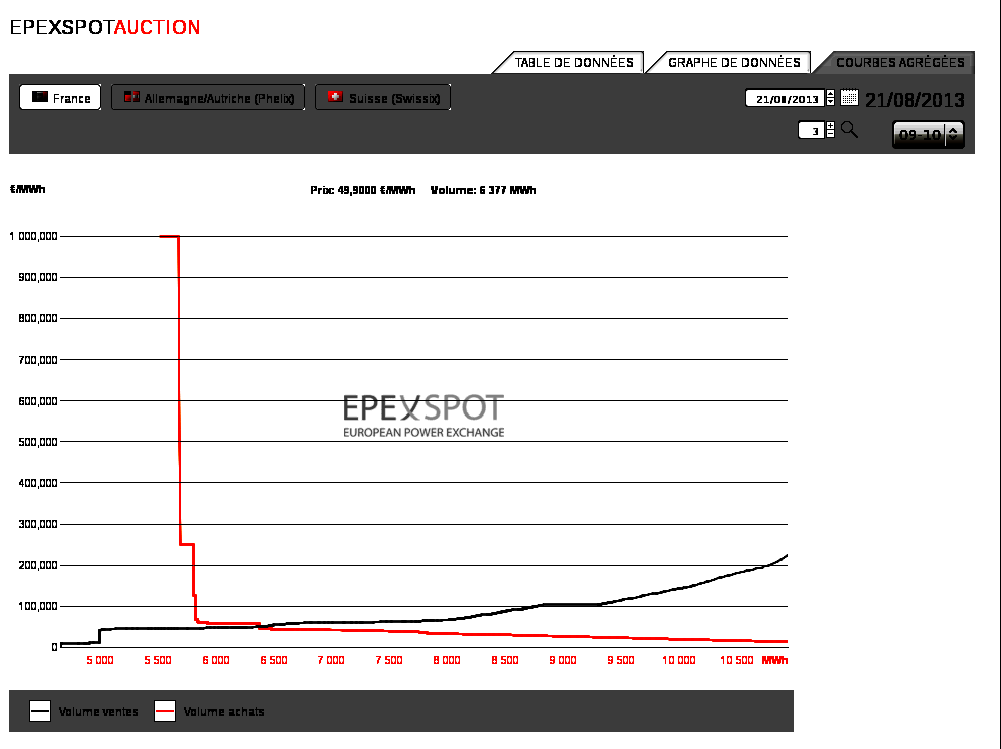}}
\subfigure[{\small{delivery 3-4 pm}}\label{Clearing2-Epex}]{
\includegraphics[width=.48\textwidth]{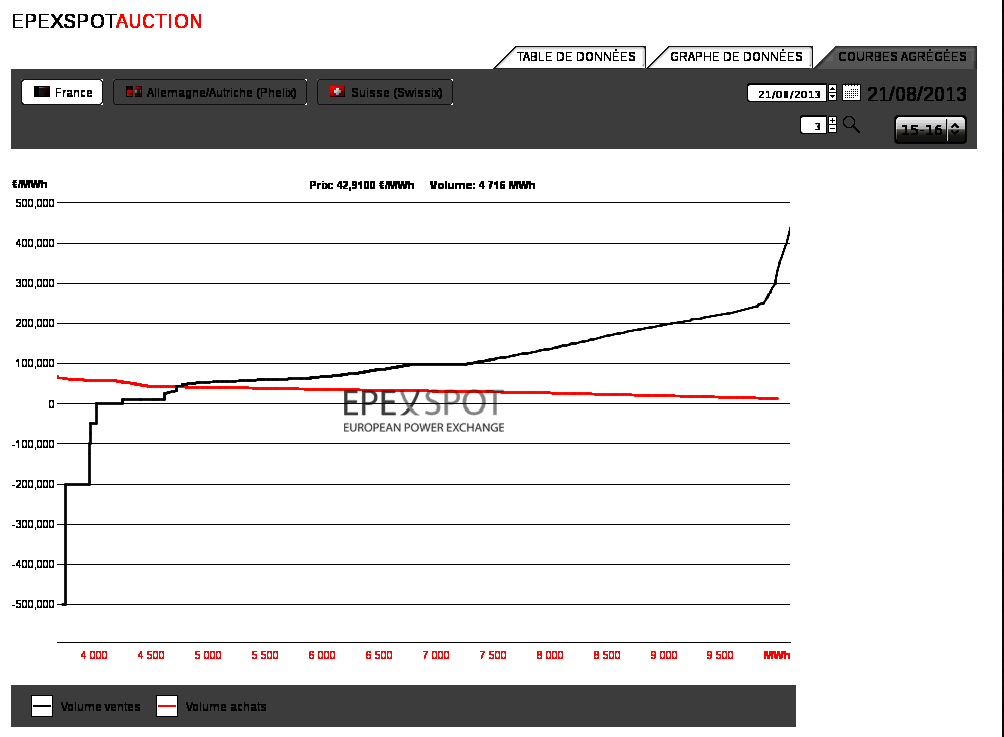}}
\caption{The orange curve is the function $q \mapsto D^{-1}(q)$ on the EPEX market. The evolution of the spot market confirms the relevance of Assumption \ref{ass:demande} on the Demand function $p \mapsto D(p)$.}
\end{figure}

Each producer $j \in \{1, \ldots, J \}$ is characterized by a finite production capacity $\kappa_j$ and a 
bounded and non-decreasing function 
$ c_{j}: [0,\kappa_{j}] \longrightarrow \er^{+}$ that associates a  marginal production cost to any quantity $q$ of 
electricity. These marginal production costs depend on several exogenous parameters 
reflecting the technical costs associated with electricity production e.g. energy prices, O\&M costs, taxes, carbon penalties, \etc. This parameter dependency makes it possible to build different market coupling mechanisms. In the following we use it to link the carbon and electricity markets. 

The merit order ranking features marginal cost functions sorted according to their production costs. These are therefore non-decreasing step functions whereby each step refers to the marginal production cost of a specific unit owned by the producer.

The producers trade their electricity on a dedicated market. For a given producer $j$, the strategy  consists of a function that makes it possible to establish an asking price on the electricity market, defined as
\begin{align*}
s_{j} : & \SCM_j \times \er^{+} \longrightarrow \er^{+} \\
& (c_{j}(\cdot), q) \longrightarrow s_{j}(c_{j}(\cdot), q), 
\end{align*}
where $\SCM_j$ the set of  marginal production cost functions  are explicitly given in the following (see \eqref{def:set-C_j}); 
$s_{j}(c_{j}(\cdot), q)$ is the unit price at which the producer is ready to sell quantity  $q$ of electricity. An admissible strategy  carries out the following sell at no loss constraint
\begin{equation}\label{contrainteStrategie}
s_{j}(c_{j}(\cdot), q) \geq c_{j}(q), \quad \forall q \in  \dom(c_{j}).
\end{equation}
A possible example of such strategy  is $s_{j}(c_{j}(\cdot), q) = c_{j}(q)$ or $s_{j}(c_{j}(\cdot), q) = c_{j}(q)+ \lambda(q)$, where $\lambda(q)$ stands for any additional  profit. 

The constraint \eqref{contrainteStrategie} guarantees profitable trade and  incorporates  an aspect of profit maximization (ie, loss avoidance) into the market share approach. In what follows, we include this profit constraint in the considered class of admissible strategies. 
 
We define the class of admissible strategy profiles on electricity market $\cS$ as: 
\begin{equation}\label{eq:classeStratAdmiss}
\begin{aligned}
\cS = \left \{ 
\begin{array}{l}
\begin{array}{rcl}
\bfs = (s_1,\ldots,s_J); \;  s_{j}: \SCM_j \times \er^{+} & 
\longrightarrow & 
\er^{+} \\ 
(c_{j}(\cdot), q) &
\longrightarrow &
s_{j}(c_{j}(\cdot), q)
\end{array} \\ 
\begin{array}{l}
\mbox{ such that }s_{j}(c_{j}(\cdot), q) \geq c_{j}(q), \quad \forall q \in \dom(c_{j})
\end{array}
\end{array}
\right\}.
\end{aligned}
\end{equation}
As a function of $q$, $s_{j}(c_{j}(\cdot),q)$ is bounded on $\dom(c_{j})$. 
For the sake of clarity, we define for each $q \not \in \dom(c_{j}) $, $s_{j}(c_{j}(\cdot),q) = \plol$, where $\plol$ is the loss of load cost, chosen as any overestimation of the  maximal production costs.

For producer $j$'s  strategy  $s_{j}$,  we define the associated asking size at price   $p$ as 
\begin{equation}\label{defOffrej}
\Of(c_{j}(\cdot),s_{j};p) := \sup\{q, \; s_{j} (c_{j}(\cdot), q) < p \}
\end{equation}
with $\sup \emptyset = 0$. 
Hence $\Of(c_{j}(\cdot),s_{j};p)$ is the maximum quantity of electricity at unit price $p$  supplied by producer $j$ on the market. We also call $p \mapsto \Of(c_{j}(\cdot),s_{j};p)$ the offer function of producer $j$. 

\begin{myremark}\label{property:offre croissante}
\item[(i)] The asking size function  $p\mapsto \Of(c_{j}(\cdot),s_{j};p)$  is, with respect to $p$,  an non-decreasing surjection from   $[0,+\infty)$ to $[0,\kappa_j]$, right
continuous and such that $\Of(c_{j}(\cdot),s_{j};0)=0$.  
For a non-decreasing  strategy  $s_{j}$, $\Of(c_{j}(\cdot),s_{j};.)$ is its generalized inverse function with respect to $q$.
\item[(ii)] Given two strategies  $q\mapsto s_{j}(c_{j}(\cdot), q)$ and  $q\mapsto s_j'(c_{j}(\cdot), q)$  such that $s_{j}(c_{j}(\cdot), q) \leq 
s_j'(c_{j}(\cdot), q)$,  for all $q\in \dom(c_{j})$ 
we have for any positive $p$
\begin{equation*}
\Of(c_{j}(\cdot),s_{j};p) \geq \Of(c_{j}(\cdot),s_j';p).
\end{equation*}
Indeed, if $p_{1} \geq p_{2}$ then $\{ q, \; s_{j}(c_{j}(\cdot), q) \leq p_{2} \} \subset \{ q, \; s_{j}(c_{j}(\cdot), q) \leq p_{1} \}$  from which we deduce that $\Of(c_{j}(\cdot),s_j;\cdot)$ is
non-decreasing. 
Next, if $s_j(c_{j}(\cdot),\cdot) \leq s_j'(c_{j}(\cdot),\cdot)$, for any fixed $p$, we have  
$\{ q, \; s_{j}'(c_{j}(\cdot),q) \leq p \} \subset \{ q, \; s_{j}(c_{j}(\cdot), q) \leq p \}$ from which 
the reverse order follows for the requests. 
\end{myremark}

We shall now describe the electricity market clearing. Note that from a market view point, the dependency of the supply with respect to the marginal cost does not need to be explicit. For the sake of clarity, 
we  write $s_{j}(q)$ and $\Of(s_j;p)$ instead of $s_{j}(c_{j}(\cdot),q)$ and  $\Of(c_{j}(\cdot),s_j;p)$. The dependency will be expressed explicitly whenever needed.

By aggregating the  $J$  asking size functions, we can define the overall asking function $p\mapsto \Off(\bfs;p)$ a producer strategy 
profile $\bfs= (s_{1}, \ldots, s_{J})$ as:
\begin{equation}
\Off(\bfs; p) = \sum_{j = 1}^{J} \Of(s_{j};p).
\end{equation}
Hence, for any producer strategy profile  $\bfs$, $\Off(\bfs ; p)$ is the quantity of electricity that can be sold on 
the market at unit price  $p$.  

The overall supply function $p\mapsto \Off(\bfs; p)$ is a non-decreasing surjection defined from  $[0,+\infty)$ to $[0,\sum_{j=1}^J\kappa_j]$,
such that $\Off(\bfs;0)=0$. 

\subsubsection{Electricity market clearing}
Taking producer strategy profile $\bfs= (s_{1}(\cdot), \ldots, s_{J}(\cdot))$ the market sets the electricity market
price $\pelec(\bfs)$ together with the quantities $(\varphi_{1}(\bfs), \ldots, \varphi_J(\bfs))$ of electricity sold by each producer. 

The market clearing price $\pelec(\bfs)$ is the unit price paid to each producer for the quantities $\varphi_{j}(\bfs)$ of  electricity. The price  $p(\bfs)$ may be defined as a price whereby supply satisfies demand. As  we are working with a general non-increasing demand curve (possibly locally inelastic), the price that satisfies the demand is not necessarily unique.  We thus define the clearing price generically with the following definition.  
\begin{definition}[The clearing electricity price]\label{def:clearingElec}
Let us define
\begin{equation}\label{reacmarche-prix}
\begin{aligned}
&\uup(\bfs) = \inf \left\{ p > 0 ; \;  \Off(\bfs ; p) >  D(p) \right\} \\
\mbox{ and } \quad &
\op(\bfs) = \sup \left\{ p\in [\uup(\bfs),\plol];  D(p) = D(\uup(\bfs))\right\}
\end{aligned}
\end{equation}
with the convention that $\inf\emptyset = \plol$. The clearing price may then be established as any  $\pelec(\bfs) \in [\uup(\bfs), \op(\bfs)]$ as an output of a specific market clearing rule. To keep the price consistency, the market rule  must be such that for any two strategy profiles $\bfs$ and $\bfs '$, 
\begin{equation}\label{regleChoixPrix}
\begin{aligned}
\mbox{if } \uup({\bfs}) < \uup({\bfs '}) \mbox{ then } \pelec({\bfs}) <  \pelec({\bfs '}), \\
\mbox{if } \uup({\bfs}) = \uup({\bfs '}) \mbox{ then } \pelec({\bfs}) =  \pelec({\bfs '}). 
\end{aligned}   
\end{equation}
\end{definition}

Note that $\uup(\bfs)\neq \op(\bfs)$ only if the demand curve $p\mapsto D(p)$ is constant on some intervals $[\uup(\bfs),\uup(\bfs)+ \epsilon]$. In that case, 
$\uup(\bfs)$ corresponds to the best ask price, while $\op(\bfs)$ is the best bid price. 
The demand/offer curves that result from the buyer/seller aggregation in a given  time period implies  some market fixing rules that allocate  buyer surplus and seller surplus. In that sense $\pelec({\bfs})$ is a fixing price\footnote{One can imagine that  Power market  participants have access to the detailed fixing rules, but information proves hard to be found.}.  
Note that $\pelec({\bfs})=\uup(\bfs)$ maximizes buyer surplus while $\pelec({\bfs})=\op(\bfs)$  maximizes seller surplus. 

\begin{figure}[ht]
\begin{center}
\begin{tikzpicture}[xscale=10,yscale=0.03]\footnotesize
 \newcommand{\xone}{-.02}
 \newcommand{\xtwo}{ 1.04}
 \newcommand{\yone}{-.4}
 \newcommand{\ytwo}{185}

\begin{scope}<+->;
\draw[black] (0,0) node[anchor=north east] {$0$};
\draw[black,thick,->] (\xone, 0) -- (\xtwo, 0);
\draw[black,thick,] (0.95, -15)  node[right] {price};
\draw[black,thick,->] (0, \yone) -- (0, \ytwo)node[left] {quantity};
\end{scope}
\begin{scope}[thick,blue]
\draw (0.0,0) node {$\bullet$} ;
\draw (0.0,0) -- (0.1,0);
\draw (0.1,25) node {$\bullet$} ;
\draw (0.1,25) -- (0.2,25);
\filldraw[very thin,opacity=.2] (0.1,0.0) rectangle (0.2,25);

\draw (0.2,40) node {$\bullet$} ;
\draw (0.2,40) -- (0.4,40);
\filldraw[very thin,opacity=.2] (0.2,0) rectangle (0.4,40);

\draw (0.4,50) node {$\bullet$} ;
\draw (0.4,50) -- (0.55,50);
\filldraw[very thin,opacity=.2] (0.4,0) rectangle (0.55,50);

\draw (0.55,100) node {$\bullet$} ;
\draw (0.55,100) -- (0.65,100);
\filldraw[very thin,opacity=.2] (0.55,0) rectangle (0.65,100);
\draw (0.59,80) node[right] {Total offer $p\mapsto \Off(p)$};

\draw (0.65,120) node {$\bullet$} ;
\draw (0.65,120) -- (0.9,120);
\filldraw[very thin,opacity=.2] (0.65,0) rectangle (0.9,120);

\draw (0.9,150) node {$\bullet$} ;
\draw (0.9,150) -- (1.0,150);
\filldraw[very thin,opacity=.2] (0.9,0) rectangle (1.0,150);
\end{scope}

\begin{scope}[thick,red]
\draw (0.17,100) node[right,above] {Demand $p \mapsto D(p)$} ;
\draw (0.15,175) node {$\bullet$} ;
\draw (0,175) -- (0.15,175);
\filldraw[very thin,opacity=.2] (0.0,0.0) rectangle (0.15,175);

\draw (0.27,143) node {$\bullet$} ;
\draw (0.15,143) -- (0.27,143);
\filldraw[very thin,opacity=.2] (0.15,0) rectangle (0.27,143);

\draw (0.32,130) node {$\bullet$} ;
\draw (0.27,130) -- (0.32,130);
\filldraw[very thin,opacity=.2] (0.27,0) rectangle (0.32,130);

\draw (0.7,65) node {$\bullet$} ;
\draw (0.32,65) -- (0.7,65);
\filldraw[very thin,opacity=.2] (0.32,0) rectangle (0.7,65);

\draw (0.9,32) node {$\bullet$} ;
\draw (0.7,32) -- (0.9,32);
\filldraw[very thin,opacity=.2] (0.7,0) rectangle (0.9,32);

\draw (0.9,19) -- (1,19);
\filldraw[very thin,opacity=.2] (0.9,0) rectangle (1,19);
\end{scope}

\begin{scope}[black]
\draw[dashed] (0.55,0.0) -- (0.55,150);
\draw (0.55,0.0) node {$\bullet$} ; 
\draw[thin,<-]  (0.55,-4) -- (0.45,-14) node[below] {$\uup(\bfs)$};

\draw[dashed] (0.7,0.0) -- (0.7,150);
\draw (0.7,0.0) node {$\bullet$} ; 
\draw[thin,<-]  (0.7,-4) -- (0.80,-14) node[below] {$\op(\bfs)$};

\draw[dashed]  (0.32,65)  --  (-0.0,65)  node {$\bullet$} node[left]{quantity sold};
\end{scope}
\end{tikzpicture}
\end{center}
\caption{Electricity  price $\uup(\bfs)$ and $\op(\bfs)$.\label{clearing} }
\end{figure}

Note also that price $\uup(\bfs)$ is well defined in the case 
where demand does not strictly decrease. This includes the case where demand is constant. 
In such case, $\uup(\bfs)=\plol$ only if the demand curve never crosses the supply. 

Next, we define the quantity of electricity sold at price $\pelec(\bfs)$. 
When $\pelec(\bfs)$ is such that $\Off(\bfs;\pelec(\bfs)) \leq D(\pelec(\bfs))$, each producer sells $\Of(\bfs_{j};\pelec(\bfs))$, but cases where  $\Off(\bfs;\pelec(\bfs)) >  D(\pelec(\bfs))$ may occur, requiring the introduction of an auxiliary rule to share $D(\pelec(\bfs))$ among the producers that propose $\Off(\bfs;\pelec(\bfs))$. 
Note that in this last case, due to the clearing property \eqref{regleChoixPrix} on $\pelec(\cdot)$, we have 
$
\Off(\bfs;\uup(\bfs)) > D(\pelec(\bfs))= D(\uup(\bfs)).
$
Hence the $D(\pelec(\bfs))$ is totally provided by producers with non null offer at price $\uup(\bfs)$. The rule of the market is to share $D(\pelec(\bfs))$  among these producers only. This gives an explicit  priority to the best offer prices $\uup(\bfs)$. 

Let us break down supply as follows: 
\begin{align*}
\Off(\bfs ; \uup(\bfs)) = \sum_{j=1}^J \Of(s_j;\uup(\bfs)^-) + \sum_{j=1}^J \Delta^- \Of(s_j;\uup(\bfs)), 
\end{align*}
where
$ \Delta^- \Of(s_j;\uup(\bfs)) := \Of(s_{j};\uup(\bfs)) - \Of(s_{j}; \uup(\bfs)^{-})$ and $f(x^{-})$ denotes the left value at $x$ of a function $f$.

The market's choice is to fully accept the asking size of producers with continuous asking size curve at point $\uup(\bfs)$. For producers with discontinuous asking size curve at $\uup(\bfs)$, a market rule based on proportionality that favors abundance
is used to share the remaining part of the supply: any extra supply available at the clearing price $\Off(\bfs ; \uup(\bfs)) - D(\uup(\bfs))$
is split among all generators offering at that price such that they each get the same percentage
of their offered quantity allocated to production. 

We summarize the market rule on quantities as follows. 
\begin{definition}[Clearing electricity quantities]\label{def:quantities}
The  quantity $\varphi_{j}(\bfs)$ of electricity sold by Producer $j$ on the electricity market  is 
\begin{equation}\label{reacmarche-quantite-elec}
\begin{aligned}
\quad \varphi_{j}(\bfs) = \left\{
\begin{array}{l}
\Of(\bfs_{j}; \pelec(\bfs)) { = \Of(\bfs_{j}; \uup(\bfs))},
\quad\mbox{ if }D(\pelec(\bfs))\geq \Off(\bfs ; \pelec(\bfs)),  \\ \\ 
\Of(\bfs_{j}; \uup(\bfs)^-)+\Delta^-\Of(\bfs_{j};\uup(\bfs))\dfrac{D(\uup(\bfs)) - \Off(\bfs ; \uup(\bfs)^{-})}{\displaystyle \Delta^- \Off(\bfs ; \uup(\bfs))},\\
\quad \quad\quad\mbox{ if }D(\uup(\bfs)) < \Off(\bfs ; \uup(\bfs)),
\end{array}\right.
\end{aligned}
\end{equation}
where
$ \Delta^- \Off(\bfs ; \uup(\bfs)) := \displaystyle\sum_{j=1}^{J}  \Delta^- \Of(s_j;\uup(\bfs)) > 0$. 
\end{definition}
Note that, when $D(\uup(\bfs)) < \Off(\bfs ; \uup(\bfs))$, we have $\Delta^- \Off(\bfs ; \uup(\bfs)) > 0$.  
Note also that 
\begin{equation}\label{eq:clearingQuantities}
\begin{aligned}
\sum_{i=1}^J \varphi_{j}(\bfs)  & = D(\pelec(\bfs))\wedge \Off(\bfs ; \pelec(\bfs))
  = D(\uup(\bfs))\wedge \Off(\bfs ; \uup(\bfs)), 
\end{aligned}
\end{equation}
\begin{equation}\label{eq:clearingQuantity}
\mbox{and for all } j, \quad  \Of(s_j;\uup(\bfs)^-) \leq \varphi_j(\bfs) \leq \Of(s_j;\uup(\bfs)).
\end{equation}

\subsection{Carbon market}

Let us recall the \co2 regulation principle on which we base our analysis. 
Producers are penalized according to their emission level if they do not own allowances. Hence, in parallel 
 to their position on the  electricity market,  producers  buy \co2 emission allowances on a separate \co2 auction market. 
In the following, we formalize producer strategy  on the \co2 market only. 

If they are allowed to, producers buying permits on the \co2 market will use them 
either to cap their own power production emissions, either to prevent other players from  buying permits.  The following assumption introduces some market design rules that control  players behavior  on that market.

\begin{assumption}[Capped carbon market]\label{ass:co2Cap}
\item[(i)] The carbon market  is capped and has a finite known quantity $\WW$ of \co2  emission allowances available. 

\item[(ii)] Each producer $j$ can buy a capped number of allowances $\mathcal{E}_{j}$, related to its own \co2  emission capacity. 

\item[(iii)] Emissions that are not covered by allowances are penalized at a unit rate $\Pnlt$. 
\end{assumption}

{Note that if one chose $\mathcal{E}_{j}\geq  \WW$ for all producers then item {\it{(ii)}} is void.  Other choice for the $\mathcal{E}_{j}$ can  be seen as strengthen regulation tool. }

On this market, producers adopt a strategy that consists of an offer function $\tau \mapsto A_j(\tau)$ defined from  $[0,\Pnlt]$ to $[0,\mathcal{E}_j]$. Quantity $A_{j}(\tau)$  is the quantity of allowances that producer $j$ is ready to buy at price $\tau$.  This offer may not be a monotonic function.  We denote $\Ca$ the strategy profile set on the \co2 market,
\[
\Ca := \{ \bfA = (A_{1},\ldots ,A_{J}); \mbox{ s.t.   }A_j:[0,\Pnlt]\rightarrow[0,\mathcal{E}_j]\}.
\]

The \co2 market reacts by aggregating the $J$ offers by
\[\bfSA (\tau) := \sum_{j=1}^J A_j(\tau),\] 
and the clearing market price is established following a {\it second item auction}\footnote{
Also called {\it Dutch auction market} with several units to sell,   in a {\it second item auction} market, the seller begins with a very high price and reduces it. The price is lowered until a bidder accepts the current price.}  as:

\begin{align}\label{reacmarche-prix-quotas}
\pcar(\bfA) := \sup\{\tau; \bfSA(\tau) > \WW\}, \quad \mbox{ with the convention }
\sup \emptyset = 0. 
\end{align}

Note that $\pcar(\bfA) =0$ indicates that there are too many allowances to sell. It is worth a reminder here that the aim of allowances is to decrease emissions. In section \ref{sec:design}, we discuss a design hypothesis (Assumption~\ref{ass:hypoTW}) that guarantees an equilibrium price $\pcar(\bfA) >0$.
Therefore, in the following, we assume that the overall quantity $\WW$ of allowances, is such that $\pcar(\bfA) >0$.

\medskip
Next, we define the amount of allowances bought at price $\pcar(\bfA)$ by the producers. 
By Definition \eqref{reacmarche-prix-quotas}, we have 
$\bfSA( \pcar(\bfA))\geq \WW ~\mbox{ and }~ \bfSA({\pcar}(\bfA)^{+})\leq \WW$. When $\bfSA( \pcar(\bfA))> \WW$, the \co2 market must decide between the producers with an additional rule. We define  
\[\Delta(A_{i}) := A_i(\pcar(\bfA) ^+) - A_i(\pcar(\bfA)).\]

For a producer $i$, $\Delta(A_{i}) \geq 0$ means that its \co2 demand does not decrease if the price increases. It is therefore ready to pay more to obtain the quantity of allowances it is asking for at price $\pcar(\bfA)$.  The \co2 market gives priority to this kind of producer, which will be fully served.  
The producers such that $\Delta(A_{j}) < 0$ share the remaining allowances. This can be written as follows. 

Each producers with  $A_j(\pcar(\bfA))>0$ obtains the following quantity $\delta_j(\bfA)$ of  allowances 
\begin{equation}\label{reacmarche-quantite-quotas}
\delta_{j}(\bfA) := 
\left\{ 
\begin{array}{l}
A_{j}(\pcar(\bfA)), \quad  \mbox{ if } \Delta(A_{j})  \geq 0 ,\\ \\
A_j(\pcar(\bfA) ^+) 
 +   \frac{\displaystyle(-\Delta(A_{j}))^+}{\displaystyle\sum_{i=1}^{J} (-\Delta(A_{i}))^{+} }\left(\WW -  \displaystyle \sum_{i=1}^J  A_{i}(\pcar(\bfA)) \ind_{\{\Delta(A_{i})  \geq 0 \}}  \right) 
,\\
\quad \mbox{ otherwise.}
\end{array}\right.
\end{equation}

\subsection{Carbon and electricity market coupling}

In the following, we formalize the coordination of a producer's strategy  on the \co2 and electricity markets.  This could be seen as if both markets were synchronized during a single time period with the same length (eg, one hour).

As mentioned earlier, for each producer, the marginal cost function is parametrized by the positions 
$\bfA$ of the producers  on the carbon market. Indeed, producer $j$ can obtain  \co2 emission 
allowances on the market to avoid penalization for  (some of) its emissions. Those emissions that are not covered by allowances are penalized at a unit rate $\Pnlt$. 

A profile of an offer to buy from the producers  $\bfA = (A_1, \ldots, A_J)$, through the \co2 market clearing, corresponds to a 
unit price of $\pcar(\bfA)$ of the allowance and quantities $\delta_{j}(\bfA)$ of allowances  bought by each producer (defined by the market rules  ~\eqref{reacmarche-prix-quotas},\eqref{reacmarche-quantite-quotas}). 

This yields to the following modified marginal production cost function\footnote{Note that this representation might also include the allowances possibly stored  by the producers in the previous periods.}   
 $c_{j}^\bfA(\cdot)$, parametrized by the emission regulations: 
\begin{equation}\label{coutsRegulation}
q\mapsto c_{j}^\bfA(q) = \left \{ 
\begin{array}{ll}
c_j(q) + {e_j}(q) \pcar(\bfA),&  \mbox{ for  }  q\in [0,\kappa^{\text{\tiny \co2}}_j (\bfA)\wedge \kappa_j] \\
c_j(q) + {e_j}(q) \Pnlt, & \mbox{ for  }  q\in [\kappa^{\text{\tiny \co2}}_j (\bfA)\wedge \kappa_j, \kappa_j] 
\end{array}
\right.
\end{equation}
where  for all producers $\{j=1,\ldots, J\}$,  
\begin{itemize}
\item $q\mapsto e_{j}(q)$  is the emission rate (originally in \co2 t/Mwh),   

\item $\kappa^{\text{\tiny \co2}}_j (\bfA)$ is  the electricity capacity covered by the bought allowances $\delta_{j} (\bfA) \leq \mathcal{E}_{j}$:  
\[\kappa^{\text{\tiny \co2}}_j (\bfA) = \text{argmax} \{k;  \int_0^{k} e_j(z) dz \leq  \delta_{j} (\bfA)\}.\]
\end{itemize}

In this coupled market  setting, the strategy of producer  $j$ thus makes a pair $(A_{j}, s_{j})$. The set of admissible strategy profile is defined as 
\begin{align}\label{def:setOfAdmissibleStrategies}
\SSigma = \left \{ (\bfA,\bfs); \;\bfA\in\Ca, \;\bfs\in \cS \right\}, 
\end{align}
where in the definition of $\cS$ in \eqref{eq:classeStratAdmiss}, we use  
\begin{align}\label{def:set-C_j}
\SCM_j = \left\{ c^\bfA_j; \;\bfA \in \Ca \right\}. 
\end{align}
Prices for allowances and electricity, $\pcar({(\bfA,\bfs)})$ and $\pelec({(\bfA,\bfs)})$, quantities of allowances bought by each producer, $\delta_{j}({(\bfA,\bfs)})$ and market shares on electricity market $\varphi_{j}({(\bfA,\bfs)})$ of each producer corresponds to any strategy profile ${(\bfA,\bfs)} \in \SSigma$, through the market mechanisms described.

\section{Nash Equilibrium analysis}\label{sec:nash}

We suppose that the $J$ producers behave non-cooperatively, aiming at maximizing their individual market share on the electricity market. For a strategy profile ${(\bfA,\bfs)} \in \SSigma$, the market share of a producer $j$ depends upon its strategy 
$(A_{j},s_{j}(\cdot))$ but also on the strategies $(\bfA_{-j},\bfs_{-j})$ of the other producers\footnote{Here we use the generic notation ${\boldsymbol{b}}_{-j}$ that  stands for
the profile set $(b_{1},\cdots, b_{j-1},b_{j+1},\cdots, b_{J})$.}. 
In this set-up the natural solution is the Nash equilibrium (see e.g. \cite{basar-olsder-98}).
More precisely we are looking for a strategy profile 
\[
{(\bfA^{\ast},\bfs^{\ast})} = ( (A_{1}^{\ast},s_{1}^{\ast}), \cdots, (A_{J}^{\ast},s_{J}^{\ast}) ) \in \SSigma
\]
that satisfies Nash equilibrium conditions:  none of the 
producers would strictly benefit, that is, would strictly increase its market share from a unilateral deviation. Namely, for any producer $j$ strategy ${(\bfA_{j},\bfs_{j})}$ such that 
$({(\bfA^{\ast}_{-j},\bfs^{\ast}_{-j})}, {(A_j,s_j)}) \in \SSigma $, we have
\begin{align}\label{NashGlob}
\varphi_{j}({{(\bfA^{\ast},\bfs^{\ast})}}) \geq  \varphi_{j}({{(\bfA^{\ast}_{-j},\bfs^{\ast}_{-j})}},{(A_j,s_j)}), 
\end{align}
where $\varphi_j$ is the quantity of electricity sold. Note that the dependency in terms of $\bfA$ through the marginal cost $c_j^\bfA$ is now made explicit in $\varphi_j$. 

Condition \eqref{NashGlob} has to be satisfied for any unilateral deviation of any producer $j$. In particular  \eqref{NashGlob}  has to be satisfied for a producer $j$ 
admissible deviation $(A_{j}^{\ast}, s_{j})$ such that  $({(\bfA^{\ast}_{-j},\bfs^{\ast}_{-j})}, {(A_j^{\ast},s_j)}) \in \SSigma $ where producer $j$ would only change its behavior on the electricity market. 

\begin{myremark}\label{rem:NashPartial}~

The electricity  strategy component  $\bfs ^{\ast}$ of the Nash equilibrium ${(\bfA^{\ast},\bfs^{\ast})} $  is also a Nash equilibrium for the restricted  electricity game, where producers only behave on the  electricity market with  marginal electricity  production costs $c_{j}^{\bfA^{\ast}}(\cdot)$, $j=1, \cdots J$.

\end{myremark}

The next section focuses on determining a Nash equilibrium on the game restricted to the electricity market.

\subsection{Equilibrium on the power market}\label{sec:power-market}

In this restricted set-up, we consider that the marginal costs   $\{c_j,j=1\ldots,J\}$ are known data, possibly fixed through the position $\bfA$ on the \co2 market. In this section, we refer to $\cS$ as the set  of admissible strategy profiles, in the particular case where $\SCM_j=\{c_j\}$ for each $j=1,\ldots,J$. 

The Nash equilibrium problem is as follows: find  $\bfsa = (\sa_{1}, \ldots, \sa_{J}) \in \cS$  such that 
\begin{align}\label{nashQuantiteElec}
\begin{aligned}
\forall j, \forall \;s_{j}\neq \sa_{j}, \quad \varphi_{j}(\bfsa) \geq \varphi_{j}(\bfsa_{-j}, s_{j}).
\end{aligned}
\end{align}

The following proposition exhibits a Nash equilibrium, whereby each producer must choose the strategy denoted by $C_{j}$, and referred to as {\it marginal production cost strategy}. It is defined by 
\begin{align}\label{stratCoutMarg}
C_{j}(q) = \left \{ 
 \begin{array}{l}
   c_{j}(q), \mbox{ for } q \in \dom(c_{j}) \\
   \plol , \mbox{ for } q \not \in \dom(c_{j}) .
 \end{array} \right.
\end{align}
\begin{proposition}\label{propo-Nash}
\item[(i)] 
For any strategy profile ${\bfs} = (s_1,\ldots,s_J)$, no producer  $j\in\{1,\ldots,J\}$ can
be penalized by deviating from strategy  $s_j$ to its marginal production cost strategy $C_j$, namely, 
\begin{equation}\label{propoPti}
\varphi_j(\bfs)\leq \varphi_{j}({\bf{s}}_{-j},C_j).
\end{equation}
In other words, for any producer $j$, $C_{j}$ is a dominant strategy.
\item[(ii)] The strategy profile $\bfC =(C_{1},\dots C_{J})$ is a Nash equilibrium.
\item[(iii)] If the strategy profile $\bfs \in \cS$ is a Nash equilibrium, then we have 
 $\op(\bfs) = \op(\bfC)$,    and for any producer $j$, $\varphi_{j}(\bfs) = \varphi_{j}(\bfC)$.
\end{proposition}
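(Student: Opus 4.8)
For part (i), the plan is to turn the no-loss constraint into a monotonicity statement on offers. Since every admissible $s_j$ satisfies $s_j\ge c_j=C_j$ on $\dom(c_j)$ (and both equal $\plol$ off $\dom(c_j)$), we have $C_j\le s_j$ pointwise, so by Remark \ref{property:offre croissante}(ii) the offer functions are ordered, $\Of(C_j;p)\ge\Of(s_j;p)$ for all $p$. Writing $\bfs'=(\bfs_{-j},C_j)$ this gives $\Off(\bfs';p)\ge\Off(\bfs;p)$ for every $p$, hence $\uup(\bfs')\le\uup(\bfs)$. It then remains to compare $\varphi_j(\bfs)$ and $\varphi_j(\bfs')$ through Definition \ref{def:quantities}, splitting according to the non-rationed branch ($D\ge\Off$) versus the proportional-rationing branch, and according to whether $\uup$ drops strictly or stays put. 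In each case the larger offer of producer $j$ at the (weakly lower) best-ask price, combined with the priority the clearing rule gives to offers at $\uup$, yields $\varphi_j(\bfs)\le\varphi_j(\bfs_{-j},C_j)$; this per-case bookkeeping is the only laborious part of (i). Part (ii) is then immediate: applying (i) to the profile $(\bfC_{-j},s_j)$ gives $\varphi_j(\bfC_{-j},s_j)\le\varphi_j(\bfC_{-j},C_j)=\varphi_j(\bfC)$, which is exactly the Nash inequality \eqref{nashQuantiteElec} for $\bfC$.

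For part (iii), I would first combine (i) with the equilibrium property of $\bfs$. Deviating to $C_j$ can never strictly help at a Nash equilibrium, while (i) says it weakly helps, so we obtain the equality $\varphi_j(\bfs)=\varphi_j(\bfs_{-j},C_j)$ for every $j$. Next I record that $\bfC$ is the pointwise-smallest admissible profile, so $\Off(\bfC;\cdot)\ge\Off(\bfs;\cdot)$ and therefore $\uup(\bfC)\le\uup(\bfs)$; since $\op(\cdot)$ is the right endpoint of the demand-flat interval through $\uup(\cdot)$ and $D$ is non-increasing, this already delivers one inequality, $\op(\bfC)\le\op(\bfs)$.

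The crux is the reverse inequality $\uup(\bfs)\le\op(\bfC)$, equivalently $D(\uup(\bfs))=D(\uup(\bfC))$, i.e. $\op(\bfs)=\op(\bfC)$. I would argue by contradiction. If $\uup(\bfs)>\op(\bfC)$, then every $p\in(\op(\bfC),\uup(\bfs))$ satisfies $\Off(\bfs;p)\le D(p)<\Off(\bfC;p)$ — the left inequality because $p<\uup(\bfs)$, the right because $p>\uup(\bfC)$ forces $\Off(\bfC;p)>D(p)$. Hence aggregate offer is strictly larger under $\bfC$ than under $\bfs$ on this interval, so some producer $j_0$ bids strictly above its marginal cost there; switching $j_0$ to $C_{j_0}$ should pull the clearing price down into the strictly-higher-demand region and strictly raise $j_0$'s cleared quantity, contradicting the equality just established. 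The main obstacle lies precisely here: a single producer's extra offer need not, on its own, move the clearing price, so making the strict gain rigorous requires either locating a producer who is pivotal at the relevant price or exploiting the priority/proportionality rules of Definition \ref{def:quantities} directly, and it must be done while keeping track of demand-flat regions (which decouple $\uup$ from $\op$).

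Finally, once $\op(\bfs)=\op(\bfC)$ and the total cleared quantities coincide, I would recover the individual shares from the equality $\varphi_j(\bfs)=\varphi_j(\bfs_{-j},C_j)$ together with $\Of(s_j;\cdot)\le\Of(C_j;\cdot)$ and the bounds \eqref{eq:clearingQuantity}: the offers are squeezed to agree at the common clearing level, so the non-rationed branch gives $\varphi_j(\bfs)=\varphi_j(\bfC)$ termwise, while in the proportional-rationing branch the matching total together with formula \eqref{reacmarche-quantite-elec} forces termwise equality as well. The recurring technical difficulty throughout (iii) is the simultaneous handling of the demand-flat case and the proportional rationing rule, which is where essentially all the care is needed.
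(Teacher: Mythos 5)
Your parts (i) and (ii) follow the paper's route exactly: the no-loss constraint gives $C_j\le s_j$, hence the offer ordering, hence $\uup(\bfs_{-j},C_j)\le\uup(\bfs)$, and the rest is the case analysis on the rationing branches of Definition \ref{def:quantities} (the paper's bookkeeping there rests on the observation that $x\mapsto\frac{A-x}{B-x}$ is decreasing on $[0,A)$ when $A\le B$). Your first two steps of (iii) are also correct: Nash plus dominance gives $\varphi_j(\bfs)=\varphi_j(\bfs_{-j},C_j)$ for each $j$, and the offer ordering gives $\op(\bfC)\le\op(\bfs)$. But the gap you flag yourself is genuine and fatal to your price-first plan. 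A single producer's switch to $C_{j_0}$ need not be pivotal, so "pull the clearing price down and strictly raise $j_0$'s cleared quantity" does not follow; worse, the contradiction you aim for is with the per-player equalities $\varphi_{j_0}(\bfs)=\varphi_{j_0}(\bfs_{-j_0},C_{j_0})$, and when the unilateral deviation moves neither $\uup$ nor the shares, there is simply nothing to contradict — those equalities are perfectly consistent with $\uup(\bfs)>\op(\bfC)$ as far as your argument shows. Your closing step contains a second instance of the same gap: from $\varphi_j(\bfs)=\varphi_j(\bfs_{-j},C_j)$ you cannot reach $\varphi_j(\bfC)$ without changing the other $J-1$ coordinates as well, and one-player dominance says nothing about that passage.

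The paper closes both holes with one auxiliary comparative-statics result (Lemma \ref{lem:dominancebis}) and by reversing your order: quantities first, price second. The lemma states that if $\uup(\bfs)=\uup(\bfC)$ and $s_j=C_j$, then $\varphi_j(\bfs)\ge\varphi_j(\bfC)$ — a producer playing its cost strategy can only gain when opponents withdraw supply — and its proof is again the monotonicity of $x\mapsto\frac{A-x}{B-x}$ inside the proportional-rationing formula. Given a Nash equilibrium $\bfw$ whose shares differ from those of $\bfC$, the ordering $\Off(\bfw;\cdot)\le\Off(\bfC;\cdot)$ forces $\sum_j\varphi_j(\bfw)\le\sum_j\varphi_j(\bfC)$, so some producer has $\varphi_1(\bfw)<\varphi_1(\bfC)$; its deviation to $C_1$ is then strictly favorable, via the lemma when $\uup(\bfw_{-1},C_1)=\uup(\bfC)$, and via \eqref{eq:clearingQuantity} when $\uup(\bfw_{-1},C_1)>\uup(\bfC)$, since $\varphi_1(\bfC)\le\Of(C_1;\uup(\bfC))\le\Of(C_1;\uup(\bfw_{-1},C_1)^-)\le\varphi_1(\bfw_{-1},C_1)$. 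This contradicts the equilibrium property, so all shares coincide; the price statement then falls out of the totals: $\op(\bfw)>\op(\bfC)$ would force $D(\uup(\bfw))<D(\uup(\bfC))$, whence $\sum_j\varphi_j(\bfw)\le D(\uup(\bfw))<D(\uup(\bfC)^+)\le\sum_j\varphi_j(\bfC)$, contradicting the share equality just proved. No pivotal producer is ever needed. If you insist on proving $\op(\bfs)=\op(\bfC)$ before the shares, you will end up proving the lemma anyway, so adopting the paper's ordering is the efficient repair.
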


Point \textit{(ii)}
exhibits a Nash equilibrium strategy profile as a direct consequence of the dominance property \textit{(i)}. Clearly the Nash equilibrium is non-unique, since we can easily show that a producer's given supply can follow from countless different strategies. Nevertheless point \textit{(iii)} shows that there is a unique associated  quantities of electricity sold by producers.
The market coupling mechanism that we propose in the following section is based on this uniqueness property which allows the computation of the equilibrium shares on electricity and carbon markets.
Moreover, any Nash equilibrium price evolves in the interval $\pelec(\bfs) \in [\uup(\bfC), \op(\bfC)]$, which reduces to the point $\{\op(\bfC)\}$ in various situations, in particular when $D(\cdot)$ strictly decreases at $\uup(\bfC)$, or when $\pelec$ is chosen equal to $\op$.

Proofs of   \textit{(i)} and \textit{(iii)}, which are rather tedious due to  non-strictly monotony and  possible discontinuity of  supply and offers,    are postponed  to  Appendix \ref{appendix:proofNash}.

\subsection{Coupled market design using the Nash equilibrium}\label{sec:design}

From this point we restrict our attention to a particular market design. In the following, the scope of the analysis applies to a special class of producers, a specific electricity market price clearing (satisfying Definition \ref{def:clearingElec}) and a range of quantities $\WW$ of allowances available on the \co2 market. Although not necessary, the following  restriction simplifies the development. 

\begin{assumption}{\bf On the producers.~}\label{ass:producers}
Each producer $j$ operates a single production unit, for which  
\begin{itemize}
\item[(i)] ~the initial marginal cost contribution (that does not depend on the producer positions $\bfA$ in the \co2 market) is constant,
$q\mapsto c_{j}(q) = c_{j}\index{q \in [0,\kappa_{j}]}$. The related emission rate $q\mapsto e_{j}(q) = e_j$ is also assumed to be a positive constant,

\item[(ii)] ~the producers are different pairwise: $\forall i, j \in \{1, \cdots J \}, (c_{i},e_{i}) \neq (c_{j},e_{j})$.
\end{itemize}
\end{assumption}

In what  follows (according to Assumption \ref{ass:co2Cap}), in order to limit the number of parameters involved in the discussion, the maximal cap of allowances that  each producer $j$ may buy is  set to $\mathcal{E}_{j}=e_j {\kappa_{j}}$. This arbitrary but natural choice does not penalize producers capacity level, and does not bring any  restriction to the following equilibrium analysis. 

As a consequence of Assumption \ref{ass:producers}, the marginal production cost in \eqref{coutsRegulation}  can simply be written as 
\begin{equation}\label{coutsRegulationbis}
q\mapsto c_{j}^\bfA(q) = \left \{ 
\begin{array}{ll}
c_j+ {e_j}\pcar(\bfA),&  \mbox{ for  }  q\in [0, \displaystyle \frac{\delta_{j} (\bfA)}{e_j}  \wedge \kappa_j] \\
c_j+ {e_j} \Pnlt, & \mbox{ for  }  q\in [\displaystyle \frac{\delta_{j} (\bfA)}{e_j} \wedge \kappa_j, \kappa_j].
\end{array}
\right.
\end{equation}

For a given  strategy profile on the electricity market, Definition \ref{def:clearingElec} gives a range of possible determinations for the electricity price. 
Previously, the analysis of the  Nash Equilibrium restricted to the  electricity market did not require a precise clearing price determination. 
Nevertheless to extend our analysis to the coupling we need to make explicit this determination and assume the following:
\begin{assumption}{\bf On the electricity market.~}\label{ass:ElecClearingPrice}
For a given strategy profile $\bfs$ of the producers, the clearing price of electricity is 
$\pelec(\bfs)$. The market rule fixes   $\pelec(\cdot)=\op(\cdot)$ or $\pelec(\cdot)=\uup(\cdot)$  as defined in  \eqref{reacmarche-prix}.
\end{assumption}
We will illustrate below that  this choice of clearing price ensures the increasing behavior of $\pelec(\cdot)$ and right continuity in terms of the carbon price (see Lemma \ref{lem:pelecCroissante}). 

The quantity $\WW$ of \co2 allowances available  plays a crucial role in the market 
design. If this quantity is too high, the allowance's market price will drop to zero, leaving 
the market incapable of fulfilling its role of decreasing \co2 emissions. Therefore we clearly
need to make an assumption that restricts the number of allowances available. 
Appropriately capping the maximum quantity of allowances available requires information on which producers are willing to obtain allowances. This is the objective of the following paragraph where we define a {\it willing to buy} function that plays a central role in the analysis of Nash equilibria.  

\subsubsection{Willing to buy functions}
In this paragraph, we aim at guessing a Nash equilibrium candidate. We base our reasoning on the dominant strategies on the electricity market alone (see Proposition~\ref{propo-Nash}). Remark \ref{rem:NashPartial} allows us to fix the electricity market strategy as a {\it marginal production cost strategy}, given the marginal cost functions $\bfC^\bfA = \{c_{j}^\bfA,j=1,\ldots J\}$ imposed by the output of the \co2 clearing, as in  \eqref{coutsRegulationbis}. 

In particular, when $\bfA\in\Ca$, we observe that the strategies $(\bfA,\{c_{j}^\bfA,j=1,\ldots J\})$ are in the set of admissible strategies defined in \eqref{def:setOfAdmissibleStrategies}. 

\medskip
From now on, all the strategy profiles that we consider on the carbon market are assumed to be admissible. 

In the following, as the discussion will mainly focus on the impact of strategies $\bfA$ through the carbon market, we denote the electricity market output as:
\begin{equation}\label{changeOfNotation}
\begin{aligned}
\pelec(\bfA) \quad & \mbox{ instead of } \quad  \pelec(\bfC^\bfA)\\
(\varphi_1(\bfA),\ldots,\varphi_J(\bfA)) \quad & \mbox{ instead of } \quad  (\varphi_1(\bfC^\bfA),\ldots,\varphi_J(\bfC^\bfA)).
\end{aligned}
\end{equation}

To begin with, we consider an exogenous  \co2 cost $\tau$ similar to a \co2 tax rate:  the producers' marginal cost becomes for any $\tau \in [0,\Pnlt]$, $c_{j}^{\tau}(\cdot)$,
\[c_{j}^{\tau}(q) = c_{j} + \tau e_{j}, \;\mbox{for}\; q \in [0,\kappa_{j}], ~j=1,\ldots,J.\]

In this {\it tax} framework, the dominant strategy on the electricity market is also parametrized by $\tau$ as $\bfC^{\tau}=\{c_{j}^\tau,j=1,\ldots J\}$ defined in \eqref{stratCoutMarg}. The clearing electricity price and quantities follow as
\begin{equation}\label{changeOfNotationTau}
\begin{aligned}
\pelec(\tau) &= \pelec (\bfC^{\tau}),\\
(\varphi_1(\tau),\ldots,\varphi_J(\tau)) &=  (\varphi_1(\bfC^\tau),\ldots,\varphi_J(\bfC^\tau)).
\end{aligned}
\end{equation}
Price $\pelec(\tau)$ will be referred to as the {\it taxed} electricity price, by contrast with price $\pelec({\bfA})$ issued from the {\it marginal production cost strategy}  that results from the position $\bfA$ on the carbon market.
\begin{myremark}\label{rem:diffDePrix}
Considering a carbon tax $\tau$ and a carbon market strategy $\bfA$ such that $\tau = \pcar(\bfA)$, 
we emphasize  the fact that the corresponding electricity prices are not equivalent, but we always have the following inequality 
\[ \pelec(\tau) \leq \pelec(\bfA).\]
This follows from the fact that  for all $i$, $c_i^\tau(\cdot) \leq c_i^A(\cdot)$ and hence $\Of(c_i^\bfA;\cdot) \leq \Of(c_i^\tau;\cdot)$.  The gap between $\bfC^\tau(\cdot)$ and $\bfC^A(\cdot)$ comes  both  from the width ($\Omega$ effect)  and the height (penalty effect) of their steps. 
\end{myremark}

We start with the following:
\begin{lemma}\label{lem:pelecCroissante} 
Under Assumption \ref{ass:ElecClearingPrice}, the map $\tau\mapsto \pelec(\tau)$ is non-decreasing  and right  continuous. 
\end{lemma}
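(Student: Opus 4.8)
The plan is to reduce everything to the explicit shape of the offer curves in the tax model. Under Assumption~\ref{ass:producers} the marginal cost strategy $\bfC^{\tau}$ reduces, for producer $j$, to the single step $c_{j}^{\tau}(q)=c_{j}+\tau e_{j}$ on $[0,\kappa_{j}]$, so I would first record that its offer is $\Of(c_{j}^{\tau};p)=\kappa_{j}\ind_{\{p>c_{j}+\tau e_{j}\}}$ and the aggregate offer is $\Off(\bfC^{\tau};p)=\sum_{j=1}^{J}\kappa_{j}\ind_{\{p>c_{j}+\tau e_{j}\}}$. Two elementary facts about this object drive the whole proof: for fixed $p$, the map $\tau\mapsto\Off(\bfC^{\tau};p)$ is non-increasing and right continuous (each summand equals $\kappa_{j}\ind_{\{\tau<(p-c_{j})/e_{j}\}}$, which is right continuous in $\tau$ since $e_{j}>0$); and for fixed $\tau$, the map $p\mapsto g(\tau,p):=\Off(\bfC^{\tau};p)-D(p)$ is non-decreasing, because $\Off(\bfC^{\tau};\cdot)$ is non-decreasing while $-D(\cdot)$ is non-decreasing.

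\textbf{Monotonicity.} For $\tau\le\tau'$ we have $c_{j}^{\tau}\le c_{j}^{\tau'}$ pointwise, so Remark~\ref{property:offre croissante}(ii) (the mechanism already used in Remark~\ref{rem:diffDePrix}) gives $\Off(\bfC^{\tau'};p)\le\Off(\bfC^{\tau};p)$ for every $p$. The super-level sets are then nested, $\{p:\Off(\bfC^{\tau'};p)>D(p)\}\subseteq\{p:\Off(\bfC^{\tau};p)>D(p)\}$, and taking infima yields $\uup(\tau)\le\uup(\tau')$. In the case $\pelec=\op$ I would write $\op(\tau)=\bar q(\uup(\tau))$ with $\bar q(u):=\sup\{p\in[u,\plol]:D(p)=D(u)\}$ the right endpoint of the demand plateau through $u$, and verify by a short two-case argument (using that $D$ is non-increasing) that $u\mapsto\bar q(u)$ is non-decreasing. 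Either way $\tau\mapsto\pelec(\tau)$ is non-decreasing.

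\textbf{Right continuity of $\uup$ (the core).} Fix $\tau$ and $\tau_{n}\downarrow\tau$; monotonicity gives $\uup(\tau_{n})\downarrow L\ge\uup(\tau)$, and the task is to exclude $L>\uup(\tau)$. Assuming $L>\uup(\tau)$, I would pick $p\in(\uup(\tau),L)$. For each $n$, $p<L\le\uup(\tau_{n})$, so $p$ lies strictly below the infimum defining $\uup(\tau_{n})$ and hence $\Off(\bfC^{\tau_{n}};p)\le D(p)$. Letting $n\to\infty$ and invoking right continuity of $\tau\mapsto\Off(\bfC^{\tau};p)$ gives $g(\tau,p)\le0$. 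But $p>\uup(\tau)=\inf\{p':g(\tau,p')>0\}$, and since $g(\tau,\cdot)$ is non-decreasing the set $\{p':g(\tau,p')>0\}$ is an up-set (if it contains $p'$ it contains every larger price); thus $p$ belongs to it, i.e. $g(\tau,p)>0$ --- a contradiction. Hence $L=\uup(\tau)$, which proves right continuity of $\uup$ and settles the lemma when $\pelec=\uup$.

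\textbf{The $\op$ case and the main obstacle.} What remains is to transfer right continuity from $\uup$ to $\op=\bar q\circ\uup$. Since $\uup$ is right continuous and $\bar q$ non-decreasing, it suffices that $\bar q$ be right continuous at $u_{0}=\uup(\tau)$, which is automatic whenever $D$ has no downward jump immediately to the right of $u_{0}$. The genuinely delicate configuration --- which I expect to be the main obstacle --- is the coincidence of an upward jump of the aggregate offer and a downward jump of $D$ exactly at the clearing price $u_{0}$: there the approximating prices $\uup(\tau_{n})$ may fall into a demand plateau lying just above $u_{0}$ while $\uup(\tau)=u_{0}$ still carries the higher left-continuous value $D(u_{0})$, so that $\bar q(\uup(\tau_{n}))$ need not tend to $\bar q(u_{0})$. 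My plan to close this gap is to lean on the clearing-consistency requirement~\eqref{regleChoixPrix} of Definition~\ref{def:clearingElec}: for $\op$ to qualify as an admissible clearing rule it must be \emph{strictly} increasing as a function of $\uup$, which excludes precisely the flat pieces of $\bar q$ that cause the jump; at such a coincidence one is thereby reduced to $\op(\tau)=\uup(\tau)$ and concludes via the previous step.
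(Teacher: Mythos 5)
Your proof is correct, and it reaches the statement by a genuinely different route than the paper on both halves. For the right continuity of $\uup$, the paper classifies $\uup(\tau)$ into three cases --- a marginal-cost level $c_\ell+\tau e_\ell$, a discontinuity point of $D$, or a point where $D$ meets an accumulation of capacities --- and runs an explicit $\delta$--$\epsilon$ estimate in each; that buys a local modulus (e.g.\ $\uup(\tau+\epsilon)-\uup(\tau)\leq e_\ell\,\epsilon$ in the first case, never used later), but it forces the awkward caveat ``$p$ far enough from the $c_i$'' because the paper writes the offer as $\kappa_i\ind_{\{p\geq c_i+\tau e_i\}}$, which is not right continuous in $\tau$ at equality. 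Your strict-inequality reading $\kappa_j\ind_{\{p>c_j+\tau e_j\}}$ is the one actually dictated by \eqref{defOffrej}, makes $\tau\mapsto\Off(\bfC^{\tau};p)$ right continuous at \emph{every} $p$, and your single up-set contradiction then disposes of all three of the paper's cases at once --- a cleaner argument. On the $\op$ case your extra care is not pedantry but fills a real gloss: the paper dismisses it with ``$\op(\tau)$ is a continuous transformation of $\uup(\tau)$'', which is false for a general left-continuous non-increasing $D$. Indeed, take one producer with $c_1=0$, $e_1=1$, $\kappa_1=3$ and $D(p)=2$ for $p\leq 1$, $D(p)=1$ for $1<p\leq 2$, $D(p)=1/2$ beyond: then $\uup(\tau)=\tau$ near $\tau=1$, yet $\op(1)=1$ while $\op(\tau)=2$ for $\tau$ slightly above $1$, so $\op$ is not right continuous --- and, consistently with your diagnosis, this very configuration makes $\op$ violate the strict part of \eqref{regleChoixPrix}, so it is not an admissible rule there and Assumption \ref{ass:ElecClearingPrice} silently excludes it. Your sketch closes cleanly along the lines you indicate: if $\uup(\tau_n)\downarrow\uup(\tau)$ strictly from above while $\op(\tau_n)$ stays above $\op(\tau)$, then for large indices two \emph{distinct} values $\uup(\tau_m)<\uup(\tau_n)$ land in a common plateau of $D$, forcing $\op(\tau_m)=\op(\tau_n)$ for the admissible profiles $\bfC^{\tau_m},\bfC^{\tau_n}$ and contradicting \eqref{regleChoixPrix}; hence $\uup(\tau_n)$ is eventually equal to $\uup(\tau)$ and $\op(\tau_n)=\op(\tau)$. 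In short: same skeleton (monotone offers in $\tau$, then right continuity), but your pointwise-limit argument replaces the paper's case analysis, and your use of \eqref{regleChoixPrix} supplies a justification the paper's one-line claim for $\op$ actually needs.
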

We determine the {\it willing-to-buy-allowances functions}   $\cW_{j}(\cdot)$ and 
$\cW(\cdot)$, as follows: 
\begin{align}\label{WillingQuotaj}
\cW_{j} (\tau) = e_{j} \varphi_j(\tau) 
\quad \mbox{ and }\quad
\cW(\tau) = \sum_{j=1}^J  \cW_{j}(\tau).
\end{align}
For producer $j$, $\cW_{j}$ is the quantity of emissions it would produce under the penalization $\tau$, and consequently the quantity of allowances it would be ready to buy at price $\tau$. Given the \co2 value $\tau$, the total amount $\cW (\tau)$ represents the allowances needed to cover the global emissions generated by  the players who have won electricity market shares.
We also define the functions 
\begin{align}\label{WillingQuotaMaxj}
\ocW_{j}(\tau) = e_{j} \kappa_{j}  \ind_{\{\varphi_j(\tau)>0\}}, 
\quad \mbox{ and }\quad
\ocW(\tau) = \sum_{j=1}^J  \ocW_{j}(\tau).
\end{align}
Given that the \co2 value $\tau$, $\ocW(\tau)$ is the amount of allowances needed by  the producers who have won electricity market shares 
and want to cover their overall 
 production capacity $\kappa_{j}$.  Obviously we have 
\[ \cW (\tau) \leq \ocW(\tau), ~\forall \tau \in[0,\Pnlt].\] 
Moreover, 
\begin{lemma}\label{lem:Wdecreasing}
The function $\tau \mapsto \cW(\tau)$ is non-increasing: 
\[\cW(t') \leq \cW(t), ~\forall\; 0\leq t < t'\leq \Pnlt.\]
\end{lemma}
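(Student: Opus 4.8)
The plan is to exploit the very simple shape of the equilibrium dispatch under Assumption \ref{ass:producers}. Since each producer runs a single unit with flat marginal cost \eqref{coutsRegulationbis}, the offer functions reduce to $\Of(C^\tau_j;p)=\kappa_j\ind_{\{p>c_j+\tau e_j\}}$, and the clearing rule of Definition \ref{def:quantities} yields the trichotomy $\varphi_j(\tau)=\kappa_j$ when $c_j+\tau e_j<\pelec(\tau)$, $\varphi_j(\tau)=0$ when $c_j+\tau e_j>\pelec(\tau)$, and $\varphi_j(\tau)\in[0,\kappa_j]$ (a proportionally rationed value) when $c_j+\tau e_j=\pelec(\tau)$. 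Thus producer $j$ is dispatched precisely when its taxed cost lies below the clearing price, i.e. when the point $(e_j,c_j)$ lies below the line $\{c+\tau e=\pelec(\tau)\}$ in the $(e,c)$-plane. I would first record these facts carefully from \eqref{stratCoutMarg} and \eqref{eq:clearingQuantities}.

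Fix $0\le t<t'\le\Pnlt$ and write $p=\pelec(t)$, $p'=\pelec(t')$; by Lemma \ref{lem:pelecCroissante}, $p\le p'$. Set $d_j=\varphi_j(t')-\varphi_j(t)$ and $e^\ast=(p'-p)/(t'-t)\ge0$. The two acceptance lines $\{c+t e=p\}$ and $\{c+t'e=p'\}$ cross exactly at $e=e^\ast$; the $t'$-line lies above the $t$-line for $e<e^\ast$ and below it for $e>e^\ast$. Reading off the trichotomy, a producer's dispatched quantity can only increase ($d_j>0$) if $(e_j,c_j)$ sits on the side $e_j\le e^\ast$ where the acceptance region has expanded, and can only decrease ($d_j<0$) if $e_j\ge e^\ast$; the equality/marginal positions must be checked but fall on the correct side of $e^\ast$. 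Hence $e_j d_j\le e^\ast d_j$ for every $j$, regardless of the sign of $d_j$. The degenerate case $p=p'$ (so $e^\ast=0$) is handled directly: every taxed cost strictly increases while the clearing price is unchanged, so no producer enters, $d_j\le0$ for all $j$, and $e_j d_j\le0$ since $e_j>0$.

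The second ingredient is that the total cleared quantity is non-increasing, $\sum_j d_j=\big(\sum_j\varphi_j(t')\big)-\big(\sum_j\varphi_j(t)\big)\le 0$. This should follow from \eqref{eq:clearingQuantities}, the pointwise ordering $\Off(t';\cdot)\le\Off(t;\cdot)$ of the supply curves (raising $\tau$ removes producers from every sub-level set $\{c_j+\tau e_j<p\}$), and the monotonicity of $D$ together with that of $\pelec$, concluded by the same case split as in Lemma \ref{lem:pelecCroissante}. Combining the two ingredients gives the statement, since $e^\ast\ge0$:
\[
\cW(t')-\cW(t)=\sum_{j=1}^J e_j d_j\ \le\ e^\ast\sum_{j=1}^J d_j\ \le\ 0 .
\]

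The main obstacle is not this arithmetic but the bookkeeping around the clearing price: establishing the trichotomy rigorously and proving $\sum_j d_j\le0$ both require tracking the strict-versus-equality (marginal producer) positions and the left/right-continuity conventions on $D$ and on $\Off$ built into Definitions \ref{def:clearingElec}--\ref{def:quantities}, together with the proportional-rationing formula in the oversupply case (second branch of Definition \ref{def:quantities}) --- the same technicalities flagged for Proposition \ref{propo-Nash}. Once the trichotomy and the total-quantity monotonicity are secured, the threshold $e^\ast$ does all the work: it separates the producers who can only enter (the low-emission ones, $e_j\le e^\ast$) from those who can only leave (the high-emission ones, $e_j\ge e^\ast$), so that a weakly smaller total quantity is moreover reallocated toward lower emission rates, and total emissions $\cW$ cannot rise.
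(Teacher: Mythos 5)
Your proposal is correct in substance, but it takes a genuinely different and more economical route than the paper. The paper argues by contradiction: assuming $\cW(t')>\cW(t)$, it pairs a demand constraint (DC) with the emission hypothesis (EH), decomposes the active sets $I(t),I(t')$ and the marginal sets $\widehat{I},\widehat{I'}$, derives the emission-rate orderings \eqref{eq:classement_rate} and \eqref{eq:classement_1}--\eqref{eq:classement_3}, then multiplies (DC) by a threshold rate $\bar{e}$ and subtracts (EH), repeating this over an exhaustive case tree (i-a), (i-b), (ii-a-1), (ii-a-2), (ii-b-1), (ii-b-2). Your single threshold $e^{\ast}$ plays exactly the role of the paper's multiplier $\bar{e}$, and your crossing-lines observation recovers all of the paper's orderings in one stroke: $d_j>0$ forces $c_j+te_j\geq \pelec(t)$ and $c_j+t'e_j\leq \pelec(t')$, hence $e_j\leq e^{\ast}$, and symmetrically for $d_j<0$; the uniform inequality $e_jd_j\leq e^{\ast}d_j$ then absorbs the marginal/rationed producers through non-strict inequalities instead of the paper's explicit bookkeeping of $\widehat{I}\setminus\widehat{I'}$, $\widehat{I'}\setminus\widehat{I}$ and the pivot index $\ell$. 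Combined with $\sum_j d_j\leq 0$ (which is the paper's (DC) in disguise), the whole case analysis collapses into $\cW(t')-\cW(t)=\sum_j e_jd_j\leq e^{\ast}\sum_j d_j\leq 0$. What your approach buys is a short, unified argument; what the paper's buys is that all boundary configurations are verified explicitly, which is precisely where your plan defers work.

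Two repairs are needed when writing your plan out in full, both routine and both in territory you flagged. First, state the trichotomy with respect to the best ask $\uup(\tau)$ rather than $\pelec(\tau)$: Definition \ref{def:quantities} computes the quantities at $\uup$, so if the market fixes $\pelec=\op$, a producer with $\uup(\tau)<c_j+\tau e_j\leq \op(\tau)$ sells nothing even though its cost is weakly below the clearing price, and your ``cost below price implies full dispatch'' branch fails as stated; accordingly take $e^{\ast}=\bigl(\uup(t')-\uup(t)\bigr)/(t'-t)$, which is nonnegative because the proof of Lemma \ref{lem:pelecCroissante} establishes monotonicity of $\uup(\cdot)$ itself. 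Second, $\sum_j d_j\leq 0$ is true but needs one step beyond the ingredients you cite: in the short-supply case where the total sold at $t$ equals $\Off(t;\uup(t))<D(\uup(t))$ while $\uup(t')>\uup(t)$, the pointwise ordering $\Off(t';\cdot)\leq\Off(t;\cdot)$ alone does not conclude, since $\Off(t;\cdot)$ is larger at the higher price $\uup(t')$; you must use that the exceedance set $\{p:\Off(t;p)>D(p)\}$ is upward closed, so $D(p)<\Off(t;p)$ for every $p>\uup(t)$, and then right continuity of $\Off(t;\cdot)$ gives $D(\uup(t)^{+})\leq\Off(t;\uup(t))$, whence the total sold at $t'$ is at most $D(\uup(t'))\leq D(\uup(t)^{+})\leq\Off(t;\uup(t))$. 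With these two adjustments your argument is a complete and notably cleaner proof of the lemma.
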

The proofs of both Lemma \ref{lem:pelecCroissante} and Lemma \ref{lem:Wdecreasing} can be found in Appendix \ref{appendix:proofLemmas1et2}. 

\subsubsection{Towards an equilibrium strategy}

The main result of the section is the computation of the bounds of the interval 
 in which the coupled carbon market Nash equilibria prices evolve: we demonstrate that there is no possible deviation enabling a Nash equilibrium carbon price outside this interval. The price bounds are elaborated as  specific carbon prices  associated to two explicit strategies,  build from the {\it willing-to-buy-allowances}  functions: the {\it Lower price strategy}, and the  {\it Higher price strategy}.

In order to characterize  further Nash equilibria candidates, evolving in this price interval, we analyze  a third set of strategies that are {\it intermediate strategies}.

Those strategies rely on our last design assumption which prevents the carbon market from market failure:
\[
\cW(0)  \leq \WW: \mbox{ no auction, } \quad \quad 
\ocW(\Pnlt) \geq  \WW :  \mbox{ allowances  shortage.}
\]

\begin{assumption}{\bf On the carbon market design.~}\label{ass:hypoTW}
The available allowances $\WW$ satisfy
\[
\ocW(\Pnlt) < \WW < \cW(0). 
 \]
Moreover,  $\Pnlt$ is chosen such that no producer is sidelined from the game: for all $j$,  $ \tau \mapsto\ocW_j(\tau)$ is not  identically zero on $[0,\Pnlt]$.
\end{assumption}
Assumption \ref{ass:hypoTW} allows to define two  prices of particular interest  for the game analysis:
\begin{align}
& \pguess = \sup \{\tau\in[0,\Pnlt]~\text{s.t.}~\cW(\tau) > \WW\}, \label{eq:lower} \\
\mbox{and } \quad& 
\opguess = \sup \{\tau\in[0,\Pnlt]~\text{s.t.}~\ocW(\tau) > \WW\}.\label{eq:higher}
\end{align}
Observe that we always have $\pguess \leq \opguess$. 

\subsubsection*{Lower price through lower price strategy} 

\begin{lemma} \label{lem:lemmaStratUnder}
Consider any strategy $\bfAunder = (\Aunder_1,\ldots,\Aunder_J)$ such that 
\begin{align}
\Aunder_{j}(\tau) = \left \{
\begin{array}{ll}
\displaystyle \cW_{j}(\pguess), & \mbox{ for }  0 \leq \tau \leq \pguess, \\
\mbox{anything admissible},& \mbox{ for } \tau >  \pguess.
\end{array}
\right.
\end{align}
\item{(i)} $\pcar(\bfAunder)\geq \pguess$.
\item{(ii)} In the case where $\pcar(\bfAunder)=\pguess$,  there is no unilateral favorable deviation that clears the market at a \co2 price lower than $\pguess$.
\end{lemma}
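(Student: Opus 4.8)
The plan is to treat the two items separately, in both cases reducing everything to the aggregate carbon offer produced by $\bfAunder$ and to the monotonicity of $\cW$ established in Lemma~\ref{lem:Wdecreasing}.

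For \textit{(i)}, I would first compute the aggregate offer. Since $\Aunder_j(\tau)=\cW_j(\pguess)$ for every $j$ and every $\tau\in[0,\pguess]$, the aggregate carbon demand is the constant
\[
\sum_{j=1}^J \Aunder_j(\tau)=\sum_{j=1}^J \cW_j(\pguess)=\cW(\pguess),\qquad \tau\in[0,\pguess].
\]
By Lemma~\ref{lem:Wdecreasing} the map $\tau\mapsto\cW(\tau)$ is non-increasing, so $\{\tau:\cW(\tau)>\WW\}$ is an initial sub-interval of $[0,\Pnlt]$ with supremum $\pguess$; in particular $\cW(\tau)>\WW$ for all $\tau<\pguess$. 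The step I would isolate as the real content is the strict inequality $\cW(\pguess)>\WW$: once it is granted, $\sum_j\Aunder_j(\tau)=\cW(\pguess)>\WW$ on all of $[0,\pguess]$, hence $\pguess\in\{\tau:\bfSA(\tau)>\WW\}$, and the definition \eqref{reacmarche-prix-quotas} of the clearing price yields $\pcar(\bfAunder)\ge\pguess$ regardless of how the $\Aunder_j$ are completed on $(\pguess,\Pnlt]$ (enlarging the set can only raise the supremum). Establishing $\cW(\pguess)>\WW$ is the main obstacle: because $\cW$ may be discontinuous at $\pguess$, one has to control its value (and left limit) there, which I would do from the monotonicity of Lemma~\ref{lem:Wdecreasing} combined with the right-continuity of $\tau\mapsto\pelec(\tau)$ from Lemma~\ref{lem:pelecCroissante}, Assumption~\ref{ass:hypoTW} guaranteeing that $\pguess$ is interior so that the approach from the left is legitimate.

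For \textit{(ii)}, assume $\pcar(\bfAunder)=\pguess$ and let producer $i$ deviate unilaterally, the others keeping $\Aunder_{-i}$, to a strategy whose carbon clearing price is some $\tau'<\pguess$. The first step is to bound the allowances $i$ can secure. On $[0,\pguess]$ each rival offers the constant $\cW_j(\pguess)$, so at the price $\tau'<\pguess$ one has $\Delta(A_j)=0\ge 0$ for every $j\neq i$; by the sharing rule \eqref{reacmarche-quantite-quotas} the rivals are served in full, and since for the price to equal $\tau'$ one needs $\bfSA(\tau)\le\WW$ on $(\tau',\pguess]$, producer $i$ is left with
\[
\delta_i\le\WW-\sum_{j\neq i}\cW_j(\pguess)=\cW_i(\pguess)-(\cW(\pguess)-\WW)\le \cW_i(\pguess),
\]
using $\cW(\pguess)\ge\WW$ from \textit{(i)}. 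Consequently the capacity $i$ can cover with allowances is $\delta_i/e_i\le\cW_i(\pguess)/e_i=\varphi_i(\pguess)$.

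The second step translates this into electricity and is where the main difficulty lies. By \eqref{coutsRegulationbis} the deviation leaves $i$ with marginal cost $c_i+e_i\tau'$ up to the covered capacity $\delta_i/e_i\le\varphi_i(\pguess)$ and $c_i+e_i\Pnlt\ge c_i+e_i\pguess$ beyond it. Thus $i$'s cost curve is cheaper than the reference uniform cost $c_i+e_i\pguess$ on $[0,\varphi_i(\pguess)]$ but strictly more expensive above it, so the two offer curves cross and no direct monotonicity comparison (Remark~\ref{property:offre croissante}\textit{(ii)}) applies over the whole range. I would argue that producing beyond the covered capacity $\delta_i/e_i$ is priced at the penalty level $c_i+e_i\Pnlt$, which is not competitive at the electricity clearing price induced by the deviation — that price being no larger than in the reference outcome, since lowering the carbon price only lowers costs — so that $i$ sells at most its covered capacity, giving $\varphi_i(\text{dev})\le\delta_i/e_i\le\varphi_i(\pguess)$, i.e. no strict improvement. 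Making this clearing comparison rigorous — controlling the electricity clearing price after the deviation and ruling out sales on the penalty segment — is the crux, and I expect it to reuse the clearing analysis behind Proposition~\ref{propo-Nash}.
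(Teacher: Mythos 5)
Your plan for \textit{(i)} cannot be completed: the step you isolate as the crux, $\cW(\pguess)>\WW$, is not merely delicate but false in general, and neither Lemma~\ref{lem:Wdecreasing} nor the right-continuity of $\pelec$ in Lemma~\ref{lem:pelecCroissante} can rescue it. Monotonicity only gives $\cW(\tau)>\WW$ for $\tau<\pguess$, and $\cW$ may jump \emph{down} exactly at $\pguess$, landing at or below $\WW$. For instance, with a single unit $c_1=0$, $e_1=1$, $\kappa_1=10$ and the left-continuous demand $D(p)=10$ for $p\le 5$, $D(p)=0$ for $p>5$ (admissible under Assumption~\ref{ass:demande}), one gets $\cW(\tau)=10$ for $\tau<5$ but $\cW(5)=0$: taking $\WW=5$ gives $\pguess=5$ and $\cW(\pguess)<\WW$, and the literal constant strategy $\Aunder_j(\tau)=\cW_j(\pguess)$ on $[0,\pguess]$ with a zero tail clears at $\pcar(\bfAunder)=0$. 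The paper's own proof of \textit{(i)} silently reads the strategy as $\Aunder_j(\tau)=\cW_j(\tau)$ for $\tau\le\pguess$ (consistent with its ``Lower price strategy'' $(\cW_1,\ldots,\cW_J)$, and apparently a typo in the statement you were given): then $\bfSA(\tau)=\cW(\tau)>\WW$ for every $\tau<\pguess$, so $\pcar(\bfAunder)\ge\tau$ for all such $\tau$, and \textit{(i)} follows with no claim at $\pguess$ itself and no continuity argument at all. Relatedly, your later use of ``$\cW(\pguess)\ge\WW$ from \textit{(i)}'' is unsupported: \textit{(i)} asserts $\pcar(\bfAunder)\ge\pguess$, not an inequality on $\cW$ at $\pguess$.

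For \textit{(ii)}, your first step (rivals constant below $\pguess$, hence $\Delta(\Aunder_j)=0$ and fully served, bounding $\delta_1\le\WW-\sum_{j\neq 1}\cW_j(\pguess)$) does match the paper, but two genuine gaps follow. First, the benchmark is wrong: a favorable deviation must be measured against $\varphi_1(\bfAunder)$, and the paper only establishes $\varphi_1(\bfAunder)\le\varphi_1(\pguess)$, possibly strictly; your chain $\varphi_1(\mathrm{dev})\le\delta_1/e_1\le\varphi_1(\pguess)$ therefore leaves room for $\varphi_1(\bfAunder)<\varphi_1(\mathrm{dev})\le\varphi_1(\pguess)$ and does not exclude a strictly favorable deviation. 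The paper instead bounds $\delta_1(\bfAunder_{-1},\widetilde{A}_1)\le\delta_1(\bfAunder)$ and uses the identity $\varphi_1(\bfAunder)=\frac{1}{e_1}\delta_1(\bfAunder)$, which closes exactly this gap. Second, your clearing comparison rests on the claim that the post-deviation electricity price is no larger than the reference one ``since lowering the carbon price only lowers costs''; this is unjustified, because the deviator's coverage shrinks and part of its capacity moves to the penalty cost $c_1+e_1\Pnlt$, so the aggregate offer need not increase and the clearing price can move either way. Tellingly, the paper does not prove a one-sided price comparison: it splits into $\pelec(\bfAunder_{-1},\widetilde{A}_1)\ge\pelec(\bfAunder)$, where the fully covered rivals keep at least their shares and the non-increasing demand squeezes the deviator, and $\pelec(\bfAunder_{-1},\widetilde{A}_1)<\pelec(\bfAunder)$, where the deviator's penalized segment lies above the clearing price so it sells at most $\frac{1}{e_1}\delta_1(\bfAunder_{-1},\widetilde{A}_1)$. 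You would need to replace your single-case heuristic by this two-case analysis (or an equivalent) to make the argument sound.
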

We call the {\it Lower price strategy} $ (\cW_1,\ldots,\cW_J)$, 
 as  $\pcar((\cW_1,\ldots,\cW_J)) = \pguess$ by price definitions \eqref{reacmarche-prix-quotas} and \eqref{eq:lower}.

\begin{proof}
Point {\it (i)} is a consequence of the definition of  
$\pguess = \sup \{ \tau \in [0, \Pnlt], \mbox{ s.t. } \cW(\tau) > \WW\}$.
Since $\Aunder_{j}(\tau) = \cW_{j}(\tau)$ for $\tau \leq \pguess$, it follows that 
$\pcar(\bfAunder) = \sup \{ \tau \in [0, \Pnlt ], \mbox{ s.t. }\sum_{j} \Aunder_{j}(\tau) > \WW\} \geq \pguess$.\medskip

To prove {\it (ii)}, first note that, since we assume $\pcar(\bfAunder) = \pguess$, we have
$\varphi_{j}(\bfAunder) \leq \varphi_{j}(\pguess) =  \frac{1}{e_{j}} \cW_{j}(\pguess)$. Indeed, the carbon market clearing can decrease the global   function $\Off(\bfC^\pguess;\cdot)$ to $\Off(\bfAunder;\cdot)$, but the demand function stay unchanged. So, we still have
$\varphi_{j}(\bfAunder) = \frac{1}{e_{j}} \delta_j(\bfAunder)$. 

Suppose one producer, say Producer~1, deviates and chooses $\widetilde{A}_{1}(\cdot)$ instead of $\Aunder_{1}(\cdot)$. Suppose the new carbon price 
$\widetilde{\tau}:=\pcar(\bfAunder_{-1},\widetilde{A}_{1}) < \pguess$.
Since $\Aunder_{j}(\widetilde{\tau}^{+}) = \Aunder_{j}(\widetilde{\tau})$ for $j \neq 1$, necessarily we have $\widetilde{A}_{1}(\widetilde{\tau}^{+}) \leq  \widetilde{A}_{1}(\widetilde{\tau})$, by definition of $\widetilde{\tau}$. Then $\Delta(\widetilde{A}_{1}) \geq 0$ and it follows that 
$\delta_{1}(\bfAunder_{-1},\widetilde{A}_{1}) \leq  \delta_{1}(\bfAunder)$, but $\delta_{j}(\bfAunder_{-1},\widetilde{A}_{1}) \geq  \delta_{j}(\bfAunder)$ for the others $j\neq 1$. 

If $\pelec(\bfAunder_{-1},\widetilde{A}_{1}) \geq \pelec(\bfAunder)$, the others $j\neq 1$ produce at least electricity for the allowances they have, $\varphi_j(\bfAunder_{-1},\widetilde{A}_{1}) \geq \varphi_j(\bfAunder)$. Since the demand is decreasing we have 
$\varphi_{1}(\bfAunder_{-1},\widetilde{A}_{1}) \leq \varphi_{1}(\bfAunder)$. 

Now, if $\pelec(\bfAunder_{-1},\widetilde{A}_{1}) <  \pelec(\bfAunder)$, the offer of Producer~1 based on his penalized marginal production cost is also greater than $\pelec(\bfAunder_{-1},\widetilde{A}_{1})$. Then $\varphi_{1}(\bfAunder_{-1},\widetilde{A}_{1}) \leq \frac{1}{e_1}\delta_1(\bfAunder_{-1},\widetilde{A}_{1})\leq \varphi_{1}(\bfAunder)$.
\end{proof}

\begin{lemma} \label{lem:lemmaPasNashAtau_guess}
Suppose $\bfA$ is such that $\pcar(\bfA) < \pguess$. Then $\bfA$ is not a Nash equilibrium.
\end{lemma}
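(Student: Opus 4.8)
The plan is to use the strict inequality $\pcar(\bfA)<\pguess$ to produce an excess demand for allowances at the clearing carbon price, and then to let the producer who is most penalized by this shortage deviate on the carbon market so as to buy more allowances and thereby strictly increase its electricity share. By Remark~\ref{rem:NashPartial} together with Proposition~\ref{propo-Nash}, it is enough to keep every electricity component fixed at the marginal production cost strategy $\bfC^{\bfA}$, so the payoff to control is $\varphi_{j}(\bfA)$ and the only deviation I need to build is a deviation on the carbon market.

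First I would set $\tau_{0}:=\pcar(\bfA)$. Since $\tau_{0}<\pguess$ and $\tau\mapsto\cW(\tau)$ is non-increasing by Lemma~\ref{lem:Wdecreasing}, the definition $\pguess=\sup\{\tau:\cW(\tau)>\WW\}$ gives $\cW(\tau_{0})>\WW$. The clearing rule \eqref{reacmarche-quantite-quotas} sells at most the available cap, $\sum_{j}\delta_{j}(\bfA)\leq\WW$, whence $\sum_{j}e_{j}\varphi_{j}(\tau_{0})=\cW(\tau_{0})>\WW\geq\sum_{j}\delta_{j}(\bfA)$. By a pigeonhole argument there is a producer $j^{\ast}$ with $e_{j^{\ast}}\varphi_{j^{\ast}}(\tau_{0})>\delta_{j^{\ast}}(\bfA)$; in particular $\varphi_{j^{\ast}}(\tau_{0})>0$ and the allowance-covered capacity satisfies $\kappa^{\text{\tiny \co2}}_{j^{\ast}}(\bfA)=\delta_{j^{\ast}}(\bfA)/e_{j^{\ast}}<\varphi_{j^{\ast}}(\tau_{0})$. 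Thus $j^{\ast}$ is rationed: under $\bfA$ it holds strictly fewer allowances than it would consume were it allowed to produce its tax-$\tau_{0}$ share.

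Next I would let $j^{\ast}$ deviate on the carbon market alone, to a constant offer $\widetilde A_{j^{\ast}}(\tau)=e_{j^{\ast}}\varphi_{j^{\ast}}(\tau_{0})$ on an interval $[0,\tau_{1}]$ (dropped afterwards), where $\tau_{1}\in(\tau_{0},\pguess]$ is the induced clearing price $\pcar(\bfA_{-j^{\ast}},\widetilde A_{j^{\ast}})$. Bidding no higher than $\pguess$ is the crucial safeguard: for every $\tau<\pguess$ one still has $\cW(\tau)>\WW$, so raising the price up to $\pguess$ never forces $j^{\ast}$ to acquire allowances it would not be willing to buy. Using the clearing rules \eqref{reacmarche-prix-quotas}--\eqref{reacmarche-quantite-quotas} (and the sign of $\Delta(\widetilde A_{j^{\ast}})$), I would verify that this deviation raises the allocation, $\delta_{j^{\ast}}(\bfA_{-j^{\ast}},\widetilde A_{j^{\ast}})>\delta_{j^{\ast}}(\bfA)$, hence enlarges the covered capacity $\kappa^{\text{\tiny \co2}}_{j^{\ast}}$ towards $\varphi_{j^{\ast}}(\tau_{0})$.

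The crux, and the step I expect to be the main obstacle, is to convert this larger allowance holding into a strictly larger electricity share, because the deviation moves the carbon price from $\tau_{0}$ to $\tau_{1}$ and thereby re-clears the electricity market with reshuffled penalized costs $c_{i}^{\bfA}$. Working under Assumption~\ref{ass:producers}, I would describe each producer by a cheap block (allowance-covered, unit cost $c_{i}+e_{i}\tau$) and an expensive block (penalty, unit cost $c_{i}+e_{i}\Pnlt$) in the merit order. Since $\varphi_{j^{\ast}}(\tau_{0})>0$ forces $c_{j^{\ast}}+e_{j^{\ast}}\tau_{0}\leq\pelec(\tau_{0})\leq\pelec(\bfA)$ (the last inequality is Remark~\ref{rem:diffDePrix}), the cheap block of $j^{\ast}$ is in-merit, while under $\bfA$ it was capped at $\kappa^{\text{\tiny \co2}}_{j^{\ast}}(\bfA)<\varphi_{j^{\ast}}(\tau_{0})$; enlarging it places additional in-merit capacity of $j^{\ast}$ ahead of the now more expensive (partly uncovered) blocks of the other producers. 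Controlling the price move $\tau_{0}\to\tau_{1}$ with the monotonicity and right continuity of $\tau\mapsto\pelec(\tau)$ from Lemma~\ref{lem:pelecCroissante}, together with the priority-to-best-offer clearing of Definition~\ref{def:quantities}, I would conclude $\varphi_{j^{\ast}}(\bfA_{-j^{\ast}},\widetilde A_{j^{\ast}})>\varphi_{j^{\ast}}(\bfA)$, contradicting the Nash inequality \eqref{NashGlob}. The two delicate points to treat with care are the non-monotonicity of the carbon bids, which makes the residual-supply and sharing bookkeeping in \eqref{reacmarche-quantite-quotas} depend on the signs of the $\Delta(A_{i})$, and the borderline regime where $j^{\ast}$ already produces on its expensive block under $\bfA$, handled separately by noting that covering that production strictly lowers the offer of $j^{\ast}$ and can only weakly, then strictly, enlarge its dispatched quantity.
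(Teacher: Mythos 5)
Your overall plan is the paper's plan (fix the electricity component at the marginal cost strategy, pigeonhole a rationed producer out of $\cW(\pcar(\bfA)) > \WW \geq \sum_j \delta_j(\bfA)$, and deviate only on the carbon market via the sharing rule \eqref{reacmarche-quantite-quotas}), but the execution has two genuine gaps, both exactly at the point where the paper splits into its cases $a)$ and $b)$. First, your pigeonhole can select a producer who is useless for the argument: $e_{j^{\ast}}\varphi_{j^{\ast}}(\tau_{0}) > \delta_{j^{\ast}}(\bfA)$ is compatible with $\varphi_{j^{\ast}}(\bfA) = \kappa_{j^{\ast}}$, since a producer holding few allowances may already sell its full capacity through its penalty block (when $c_{j^{\ast}} + e_{j^{\ast}}\Pnlt$ is below the coupled electricity price). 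For such a producer no deviation is strictly favorable, so the contradiction never materializes. The paper avoids this by restricting attention from the start to producers with $\varphi_j(\bfA) < \kappa_j$, and your closing remark that covering penalty-block production ``can only weakly, then strictly, enlarge'' the dispatched quantity is precisely what fails here: it enlarges nothing when the producer is already at capacity.

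Second, and more seriously, your single unified deviation --- bid $e_{j^{\ast}}\varphi_{j^{\ast}}(\tau_{0})$ up to some $\tau_{1} \in (\tau_{0}, \pguess]$ --- rests on the claim that since $\cW(\tau) > \WW$ for all $\tau < \pguess$, ``raising the price up to $\pguess$ never forces $j^{\ast}$ to acquire allowances it would not be willing to buy.'' This inference from the aggregate to the individual is false: $\cW(\tau) > \WW$ says total willingness exceeds supply, but the excess at $\tau > \tau_{0}$ may be carried entirely by other producers, while $\cW_{j^{\ast}}(\tau)$ drops (possibly to zero) the moment $\tau$ exceeds $\tau_{0}$. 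This happens precisely when $j^{\ast}$ is the marginal producer, $c_{j^{\ast}} + e_{j^{\ast}}\tau_{0} = \pelec(\tau_{0})$ --- the paper's case $b)$, which your pigeonhole cannot exclude. In that regime a deviation that moves the carbon price strictly above $\tau_{0}$ buys allowances covering out-of-merit capacity and cannot raise $\varphi_{j^{\ast}}$; the paper instead constructs a deviation that keeps $\pcar$ \emph{exactly} at $\pcar(\bfA)$ and gains allowances purely through the priority mechanism, by raising the ask at $\pcar(\bfA)^{+}$ so that $\delta_{1}$ strictly increases via \eqref{reacmarche-quantite-quotas} (using right continuity of $\pelec(\cdot)$ from Lemma \ref{lem:pelecCroissante} to identify the marginal situation). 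Conversely, the paper's price-raising deviation in case $a)$ is only deployed after securing an interval $[\pcar(\bfA), \hat{\tau}_{1}]$ on which $\cW_{1}(\tau) = e_{1}\kappa_{1}$, i.e.\ an \emph{individual} in-merit guarantee that your argument never establishes. To repair your proof you would need to reproduce this dichotomy, at which point it becomes the paper's proof.
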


\begin{proof}
To prove this lemma we exhibit an unilateral favorable deviation of a producer.

$a)$ Assume first that at least one producer exists, say Producer~1, such that
$\varphi_{1}(\bfA) < \kappa_{1}$ and 
there exists a tax value $\hat{\tau}_{1}$ such that $\pcar(\bfA) < \hat{\tau}_{1} \leq \pguess$ and, $\cW_{1}(\tau) = {e_{1}} \kappa_{1}$ for any $\tau \in [\pcar(\bfA), \hat{\tau}_{1}]$. 

This means that  Producer~1 may sell  $\kappa_{1}$,  for any tax level $\tau$ in $[\pcar(\bfA), \hat{\tau}_{1} ]$, and consequently we have
$c_{1} + \tau e_{1} < \pelec(\tau)$ for $\tau$ in $[\pcar(\bfA),\hat{\tau}_{1} ]$.

Consider a deviation $\widetilde{A}_{1}$ of player 1, such that the resulting clearing price on \co2 market, $\pcar(\bfA_{-1},\widetilde{A}_{1} ) \in [\pcar(\bfA), \hat{\tau}_{1} ]$. 

From Remark \ref{rem:diffDePrix}, we have  
\[C_1 + \tau e_1 \leq \pelec(\pcar(\bfA_{-1},\widetilde{A}_{1} )) \leq \pelec(\bfA_{-1},\widetilde{A}_{1}).\]
This means that Producer~1 may sell  its overall covered capacity:  
$\varphi_{1}(\bfA_{-1},\widetilde{A}_{1} ) = \frac{1}{e_{1}} \delta_{1}(\bfA_{-1},\widetilde{A}_{1} )$. 

Now we define $\tau\mapsto\widetilde{A}_{1}(\tau)$ as follows, for $\varepsilon >0$ arbitrarily small and  $\pcar(\bfA) \leq \tau$, 
\[
\begin{array}{l}
\widetilde{A}_{1}(\pcar(\bfA)) =  e_1\kappa_1 ,\\
\begin{array}{rl}
\widetilde{A}_{1}(\tau) = &  \left(\WW - \sum_{j>1} A_{j}(\tau) - \varepsilon \right)\ind_{\displaystyle\{
\sum_{j\neq 1} A_{j}(\tau) +  \delta_{1}(\bfA)\geq \WW\}} \\
& + e_1\kappa_1 \ind_{\displaystyle\{
\sum_{j\neq 1} A_{j}(\tau) +  \delta_{1}(\bfA)<\WW\}} \quad \mbox{ \bf for } \tau \in (\pcar(\bfA),\hat{\tau}_{1}] \\
= &  A_{1}(\tau), \hspace{4.2cm}   \mbox{ \bf for } \tau >  \hat{\tau}_{1} .
\end{array}
\end{array}
\]

Note that $\widetilde{A}_{1}(\tau) \geq A_{1}(\tau)$ for $\pcar(\bfA)\leq \tau \leq \hat{\tau}_{1}$,  and consequently $\pcar(\bfA_{-1},\widetilde{A}_{1}) \geq  \pcar(\bfA)$.

If $\pcar(\bfA_{-1},\widetilde{A}_{1}) > \pcar(\bfA)$, then $e_1\varphi_{1}(\bfA_{-1},\widetilde{A}_{1})) = \delta(\bfA_{-1},\widetilde{A}_{1}) > \delta(\bfA)\geq e_1\varphi_{1}(\bfA)$, and we get our favorable deviation.

If $\pcar(\bfA_{-1},\widetilde{A}_{1})= \pcar(\bfA)$, we observe that when $\Delta(A_{1}) \geq 0$, we also have $\Delta(\widetilde{A}_{1}) = 0$. 
Then by the \co2 market clearing mechanism, Producer~1 gets ${e_{1}} \kappa_{1}$ allowances instead of $\delta(\bfA)$ and strictly improves its electricity market share. when $\Delta(A_{1}) <  0$, we  have $\widetilde{A}_{1}(\pcar(\bfA)^+)> {A}_{1}(\pcar(\bfA)^+)$, that also insures that Producer~1 increases $\delta(\bfA_{-1},\widetilde{A}_{1}) > \delta(\bfA)$ (see \eqref{reacmarche-quantite-quotas}). 
\medskip

$b)$ Assume now that all producers are either such that 
$\varphi_{j}(\bfA)=\kappa_{j}$ or such that  
$\varphi_{j}(\bfA)< \kappa_{j}$ and  $\cW_{j}(\pcar(\bfA)^+) < {e_{j}} \kappa_{j}$.
 Among the second category, there exists at least 
 one producer (say Producer~1) such that $\varphi_{1}(\bfA) < \varphi_{1}(\pcar(\bfA))$ with $\varphi_{1}(\pcar(\bfA))>0$ (unless to contradict $\pcar(\bfA) < \pguess$). Here we have used the notation \eqref{changeOfNotation} and \eqref{changeOfNotationTau}. 
$\cW_{1}(\pcar(\bfA)^+) < {e_{1}} \kappa_{1}$ means that $c_1 + e_1 \pcar(\bfA) = \pelec(\pcar(\bfA))$ (as $\pelec(\cdot)$ is right-continuous).

A strictly favorable deviation $\widetilde{A}_{1}$ of Producer~1, thus consists in increasing its  ask at the price $\pcar(\bfA)^+$, in order to increase its  $\delta(\bfA_{-1},\widetilde{A}_{1})$ (see \eqref{reacmarche-quantite-quotas}): 
\begin{align*}
\widetilde{A}_{1}(\tau) = &\left(\WW - \sum_{j>1} A_{j}(\tau) - \varepsilon \right)\ind_{\displaystyle\{\pcar(\bfA) <  \tau \}} + e_1\kappa_1  \ind_{\displaystyle\{\pcar(\bfA)=\tau \}}. 
\end{align*}
Then $\pcar(\bfA_{-1},\widetilde{A}_{1}) = \pcar(\bfA)$, $\widetilde{A}_{1}(\pcar(\bfA)) \geq  A_{1}(\pcar(\bfA))$, but $\widetilde{A}_{1}(\pcar(\bfA)^+) > A_{1}(\pcar(\bfA)^+)$, for $\varepsilon$ sufficiently small. This last inequality guarantees that $\delta_{1}(\bfA_{-1},\widetilde{A}_{1}) > \delta_{1}(\bfA)$ and finally 
$\varphi_{1}(\pcar(\bfA)) \geq \varphi_{1}(\bfA_{-1},\widetilde{A}_{1}) > \varphi_{1}(\bfA)$. 
\end{proof}

\subsubsection*{Higher  price through higher price strategy} 

\begin{lemma} \label{lem:lemmaStratUpper}
Consider any strategy $\bfAupper = (\Aupper_{1},\cdots, \Aupper_{J})$ such that 
\begin{align}
\Aupper_{j}(\tau) = \left \{
\begin{array}{ll}
\mbox{anything admissible}, &\mbox{ for } \tau \leq \opguess,\\
\displaystyle\ocW_{j}(\tau), &\mbox{ for }  \tau > \opguess.
\end{array}
\right.
\end{align}
\item{(i)} $\pcar(\bfAupper)\leq \opguess$.
\item{(ii)} There is no unilateral favorable deviation  that clears the market at a \co2 price higher than $\opguess$.
\end{lemma}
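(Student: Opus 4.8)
The plan is to mirror the structure of the proof of Lemma~\ref{lem:lemmaStratUnder}, exploiting the key asymmetry that \emph{above} $\opguess$ the allowances are no longer scarce, so that full-capacity coverage is available to every active producer.

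\emph{Part (i)} is immediate from the definitions. For every $\tau>\opguess$ the characterization \eqref{eq:higher} of $\opguess$ as a supremum gives $\ocW(\tau)\le\WW$, while by construction $\Aupper_j(\tau)=\ocW_j(\tau)$ for such $\tau$, so that $\sum_{j}\Aupper_j(\tau)=\ocW(\tau)\le\WW$. Hence $\{\tau:\sum_j\Aupper_j(\tau)>\WW\}\subseteq[0,\opguess]$, and taking the supremum in \eqref{reacmarche-prix-quotas} yields $\pcar(\bfAupper)\le\opguess$.

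For \emph{Part (ii)}, suppose Producer~$1$ deviates to an admissible $\widetilde{A}_1$ and that the resulting carbon price $\widetilde\tau:=\pcar(\bfAupper_{-1},\widetilde{A}_1)$ satisfies $\widetilde\tau>\opguess$. The first step is an allowance accounting. For $j\neq1$ and $\tau>\opguess$ we have $\Aupper_j(\tau)=\ocW_j(\tau)\in\{0,e_j\kappa_j\}$, whereas Producer~$1$ can bid at most $\mathcal{E}_1=e_1\kappa_1$. Because $\widetilde\tau>\opguess$, the price definition \eqref{reacmarche-prix-quotas} forces $\sum_j A_j(\tau)>\WW$ at $\widetilde\tau$ itself (when the supremum is attained). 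At such a $\tau$, if Producer~$1$ were active in the tax framework, i.e.\ $\varphi_1(\tau)>0$, then $\ocW_1(\tau)=e_1\kappa_1$ would cap its bid and give $\sum_j A_j(\tau)\le\ocW(\tau)\le\WW$, a contradiction. Hence $\varphi_1(\widetilde\tau)=0$; equivalently Producer~$1$'s marginal cost satisfies $c_1+e_1\widetilde\tau\ge\pelec(\widetilde\tau)$. Moreover, since Producer~$1$ is inactive we have $\sum_{j\neq1}\ocW_j(\widetilde\tau)=\ocW(\widetilde\tau)\le\WW$, so the competitors bidding $\ocW_j(\widetilde\tau)=e_j\kappa_j$ with $\Delta(\Aupper_j)\ge0$ are fully served by the clearing rule \eqref{reacmarche-quantite-quotas}, i.e.\ $\delta_j=e_j\kappa_j$, and their coupled offer coincides with their taxed offer at tax $\widetilde\tau$.

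The second step transfers this to the coupled market. Passing from the taxed configuration at $\widetilde\tau$ to the deviated coupled configuration leaves the competitors' offers unchanged (they are fully covered) and only lowers Producer~$1$'s own offer, whose uncovered capacity is priced at $c_1+e_1\Pnlt\ge c_1+e_1\widetilde\tau$. Since at price $\pelec(\widetilde\tau)\le c_1+e_1\widetilde\tau$ the competitors already meet the whole demand $D(\pelec(\widetilde\tau))$ (Producer~$1$ sold nothing), they still meet it in the deviated configuration; thus the coupled clearing price equals $\pelec(\widetilde\tau)$, consistently with $\pelec(\widetilde\tau)\le\pelec(\bfAupper_{-1},\widetilde{A}_1)$ from Remark~\ref{rem:diffDePrix}, and Producer~$1$ is again dispatched for nothing: $\varphi_1(\bfAupper_{-1},\widetilde{A}_1)=0$. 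As market shares are non-negative, this deviation is not favorable, which proves~(ii).

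The main obstacle is the boundary bookkeeping at $\widetilde\tau$: one must control with care the left and right limits entering the clearing rule \eqref{reacmarche-quantite-quotas}, in particular the case where the supremum defining $\pcar$ is not attained (so the excess demand occurs only on a left neighbourhood of $\widetilde\tau$ and one must pass to the limit using the right continuity of $\pelec(\cdot)$, Lemma~\ref{lem:pelecCroissante}) and the degenerate case where a competitor's share $\varphi_j$ drops to zero exactly at $\widetilde\tau$, so that $\Delta(\Aupper_j)<0$ and that producer is served only through the proportional rule. These are handled by the same limiting arguments used for the discontinuous, non-strictly-monotone offers in Proposition~\ref{propo-Nash} and Lemma~\ref{lem:lemmaStratUnder}, but they are exactly where the delicate part of the estimate lies.
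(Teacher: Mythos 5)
Your proposal is correct and follows essentially the same route as the paper's proof: part (i) directly from the definition of $\opguess$, and for part (ii) the observation that a deviation clearing at $\widetilde{\tau}>\opguess$ forces $\ocW_{1}(\widetilde{\tau})=0$ (so $c_{1}+e_{1}\widetilde{\tau}\geq\pelec(\widetilde{\tau})$), while the competitors $j\neq 1$ obtain the full allowances they ask for, so the coupled electricity price coincides with the taxed price $\pelec(\widetilde{\tau})$ and the deviator sells nothing. Your explicit accounting for why $\varphi_{1}(\widetilde{\tau})=0$, and your flagging of the boundary cases (supremum not attained, $\Delta(\Aupper_{j})<0$), are if anything more detailed than the paper's own one-paragraph argument, which asserts these points directly.
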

We call the {\it Higher price strategy} $ (\ocW_1,\ldots,\ocW_J)$, as  $\pcar((\ocW_1,\ldots,\ocW_J)) = \opguess$ by price definitions \eqref{reacmarche-prix-quotas} and \eqref{eq:higher}.
\begin{proof}
Point {\it(i)} follows directly from the definition of $\opguess$.

To prove  {\it (ii)},  suppose one producer, say Producer~1, chooses its strategy $\widetilde{A}_{1}(\cdot)$ instead of $\Aupper_{1}(\cdot)$, and that the resulting \co2 price is $\widetilde{\tau} := \pcar(\bfAupper_{-1},\widetilde{A}_{1}) > \opguess$. Necessarily, due to the definition of $\bfAupper$, this means that $\ocW_{1}(\widetilde{\tau}) = 0$, which in turn means that
$c_{1} + \widetilde{\tau}e_{1} > \pelec(\widetilde{\tau})$. To conclude, it is sufficient to notice that any Producer $j \neq 1$ obtains what he asks for, i.e.
$\delta_{j}(\bfAupper_{-1},\widetilde{A}_{1})=  \ocW_j(\widetilde{\tau}^{+})$, from which it follows that the 
{\it coupled} electricity  price equals the {\it taxed} electricity price: $\pelec(\bfAupper_{-1},\widetilde{A}_{1}) = \pelec(\widetilde{\tau})$,  and then $\varphi_{1}(\bfAupper_{-1},\widetilde{A}_{1}) = \ocW_i(\widetilde{\tau}) =0$ and the deviation  of 1 is not favorable.
\end{proof} 

A strategy $\bfA$ is said to be {\it effective} if all the producers that bought some allowances produce some electricity: 
\[\forall j, \delta_j(\bfA) >0 \Rightarrow \varphi_j(\pcar(\bfA)) >0.\]

\begin{lemma} \label{lem:lemmaPasNashStrongAtau_barre_guess}
\item{(i)} Let $\bfA$ admissible such that $\pcar(\bfA) >  \opguess$. Then $\bfA$ is not an effective strategy. 
\item{(ii)} Let $\bfA$ admissible such that $\pcar(\bfA) >  \opguess$. Then $\bfA$ is not a strong Nash equilibrium.
\end{lemma}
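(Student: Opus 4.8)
The plan is to treat both items by contradiction, extracting from the hypothesis $\pcar(\bfA) > \opguess$ a producer that pays for allowances it cannot convert into electricity sales. For \textit{(i)}, set $\tau^{\ast} := \pcar(\bfA)$. Since the carbon market is capped and clears with excess demand, the allocation rule \eqref{reacmarche-quantite-quotas} distributes exactly $\WW$ allowances, so $\sum_{j} \delta_{j}(\bfA) = \WW$, while each producer is bounded by its cap $\delta_{j}(\bfA) \leq \mathcal{E}_{j} = e_{j}\kappa_{j}$. Suppose, for contradiction, that $\bfA$ is effective. Then every buyer ($\delta_{j}(\bfA) > 0$) produces at the taxed level $\tau^{\ast}$, so $\ocW_{j}(\tau^{\ast}) = e_{j}\kappa_{j} \geq \delta_{j}(\bfA)$; for a non-buyer the inequality $\delta_{j}(\bfA) = 0 \leq \ocW_{j}(\tau^{\ast})$ is trivial.

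Summing these bounds over $j$ gives $\WW = \sum_{j}\delta_{j}(\bfA) \leq \sum_{j}\ocW_{j}(\tau^{\ast}) = \ocW(\tau^{\ast})$. On the other hand, since $\tau^{\ast} > \opguess$ and $\opguess$ is by \eqref{eq:higher} the supremum of $\{\tau : \ocW(\tau) > \WW\}$, we have $\ocW(\tau^{\ast}) \leq \WW$. The two bounds force $\ocW(\tau^{\ast}) = \WW$ and, in turn, every inequality above to be an equality: each buyer uses its full cap ($\delta_{j}(\bfA) = e_{j}\kappa_{j}$) and no non-buyer produces at $\tau^{\ast}$. It remains to rule out this boundary case, which is the crux of \textit{(i)}: because $\tau^{\ast} = \sup\{\tau : \bfSA(\tau) > \WW\} > \opguess$, there is $\tau_{1} \in (\opguess, \tau^{\ast}]$ with $\bfSA(\tau_{1}) > \WW \geq \ocW(\tau_{1})$; comparing the two sums termwise, and using $A_{k}(\tau_{1}) \leq e_{k}\kappa_{k}$ together with $\ocW_{k}(\tau_{1}) \in \{0, e_{k}\kappa_{k}\}$, some producer $k$ must satisfy $A_{k}(\tau_{1}) > 0$ yet $\varphi_{k}(\tau_{1}) = 0$. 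Transporting this witness up to $\tau^{\ast}$ --- via the right continuity and monotonicity of $\tau \mapsto \pelec(\tau)$ (Lemma \ref{lem:pelecCroissante}) and the fact that $\delta_{k}(\bfA) > 0$ --- produces a buyer that does not produce at $\tau^{\ast}$, contradicting effectiveness.

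For \textit{(ii)}, recall that a strong Nash equilibrium admits no coalition whose members can all strictly increase their shares by a joint deviation that changes only their own bids. By \textit{(i)}, $\bfA$ is not effective, so there is a producer $k$ with $\delta_{k}(\bfA) > 0$ and $\varphi_{k}(\tau^{\ast}) = 0$: it is paying for allowances that yield no sales. I would let $k$ curtail its carbon demand so that the cleared carbon price drops to some $\tau' \leq \opguess$; by Lemma \ref{lem:pelecCroissante} the electricity price then weakly decreases, $\pelec(\tau') \leq \pelec(\tau^{\ast})$, while now $\ocW(\tau') \geq \WW$, so strictly more capacity is willing and able to produce. The coalition is taken to be $k$ together with the producers whose share strictly grows at $\tau'$; one checks that the deviation is admissible and that $k$ itself strictly gains, since at the lower tax it becomes competitive ($\varphi_{k}(\tau') > 0$) rather than wasting its allowances.

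The main obstacle in both items is the non-monotonicity of an individual producer's dispatch as a function of the carbon price: the merit order between two units with costs $c_{i} + \tau e_{i}$ and $c_{j} + \tau e_{j}$ can switch as $\tau$ varies, so the set $\{j : \varphi_{j}(\tau) > 0\}$ need not be nested. Consequently the delicate steps are (a) excluding the equality case $\ocW(\tau^{\ast}) = \WW$ in \textit{(i)} by carrying the ``demands-but-does-not-produce'' witness from a nearby price $\tau_{1}$ up to the clearing price $\tau^{\ast}$, and (b) in \textit{(ii)}, exhibiting a single coalition deviation that strictly benefits every member simultaneously; both rely essentially on the monotonicity and right continuity of $\pelec(\cdot)$ and on the willing-to-buy functions $\cW, \ocW$ controlled through Lemma \ref{lem:Wdecreasing} and the definitions \eqref{eq:lower}--\eqref{eq:higher}.
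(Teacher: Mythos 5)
Your item \textit{(i)} follows the same counting as the paper up to the sandwich $\WW \leq \ocW(\tau^{\ast}) \leq \WW$, and your extraction of a witness $k$ with $A_{k}(\tau_{1})>0$ and $\ocW_{k}(\tau_{1})=0$ is correct. The genuine gap is the ``transport'' of that witness to $\tau^{\ast}$, which you state as if it were routine but which actually fails for the very reason you flag at the end. The bids $A_{k}(\cdot)$ are arbitrary (non-monotone) functions, so $A_{k}(\tau_{1})>0$ says nothing about $A_{k}(\pcar(\bfA))$ or $A_{k}(\pcar(\bfA)^{+})$, the only values the allocation rule \eqref{reacmarche-quantite-quotas} sees; hence ``the fact that $\delta_{k}(\bfA)>0$'' is not a fact but an unproven claim. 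Symmetrically, $\varphi_{k}(\tau_{1})=0$ does not give $\varphi_{k}(\tau^{\ast})=0$, because the set of in-merit producers is not nested in $\tau$ (merit order switches), and Lemma \ref{lem:pelecCroissante} controls the price $\pelec(\cdot)$, not individual dispatch. So in the boundary case $\ocW(\tau^{\ast})=\WW$ (possible when $\WW$ happens to equal the subset sum $\sum_{j\in S} e_{j}\kappa_{j}$ of the producing set at $\tau^{\ast}$), your argument terminates without a contradiction. For comparison: the paper does not perform your equality analysis at all; its proof of \textit{(i)} is one line, reading effectiveness as forcing $\ocW_{j}(\pcar(\bfA))=e_{j}\kappa_{j}$ for every buyer and declaring this ``clearly in contradiction'' with the definition \eqref{eq:higher} of $\opguess$ --- i.e., it implicitly treats $\ocW(\pcar(\bfA))\geq\WW$ as already incompatible with $\pcar(\bfA)>\opguess$. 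Your finer reading exposes a real subtlety the paper glosses over, but your patch does not close it.

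In item \textit{(ii)} the key step is also unsupported, and here the paper's route genuinely differs from yours. You need your specific witness $k$ to end with $\varphi_{k}(\tau')>0$, and you assert ``at the lower tax it becomes competitive'' without proof; nothing forces the particular buyer-who-does-not-produce at $\tau^{\ast}$ into the merit order at $\tau'$ (its cost $c_{k}+e_{k}\tau'$ may still exceed $\pelec(\tau')$). The paper's proof is designed to avoid exactly this commitment: it takes the coalition $\mathcal{K}=\{j:\ocW_{j}(\pcar(\bfA))=0\}$, which is non-empty directly by the definition of $\opguess$ (no appeal to \textit{(i)} is needed), has all of $\mathcal{K}$ switch to the higher-price strategy $\Aupper_{j}$ of Lemma \ref{lem:lemmaStratUpper}, notes the carbon price then drops below $\pcar(\bfA)$, and claims only that \emph{some} member of $\mathcal{K}$ --- one with $\ocW_{j}>0$ at the new price, who therefore receives $\delta_{j}>0$ and sells $\varphi_{j}>0$ --- strictly improves, while the remaining members are unchanged at zero. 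That ``weak improvement for all, strict for one'' deviation is precisely what the paper's notion of strong Nash requires, and it never needs a prescribed individual to gain. Your coalition of ``$k$ plus the producers whose share strictly grows at $\tau'$'' is moreover ill-formed as a deviation (those producers do not change their bids, and under your stated strict-gain definition you would owe a proof that every member strictly gains). As written, both the equality case in \textit{(i)} and the strict-gain claim in \textit{(ii)} are genuine gaps.
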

As a consequence of this lemma, if a producer (or a set of producers) that does not produce electricity,  tries  to block the auction game of the carbon market  by buying all the allowances he can, then there always exists a coalition with favorable  deviation. 

\begin{proof}
$(i).$ Effective means that for all producers  such that $\delta_{j}(\bfA) > 0$, we have $\ocW_{j}(\pcar(\bfA))= e_j \kappa_j $, which is clearly in contradiction with the definition of $\opguess$. 

$(ii).$ Given $\bfA$, such that $\pcar(\bfA) >  \opguess$, we consider the coalition of producers $\mathcal{K}$  such that for $j\in\mathcal{K}$,  $\ocW_{j}(\pcar(\bfA))=0$. $\mathcal{K}$ is clearly non-empty  by definition of $\opguess$.  Consider the following cooperating  deviation of $\mathcal{K}$:  
\begin{align*}
\widetilde{A}_{j}(\cdot) = \Aupper_{j}(\cdot), \quad\mbox{for }j\in \mathcal{K}.
\end{align*}
Then $\pcar(\bfA_{-\mathcal{K}},\widetilde{A}_{\mathcal{K}})< \pcar(\bfA)$, and at least for one member of the coalition $\mathcal{K}$, $\delta_{j}(\bfA_{-\mathcal{K}},\widetilde{A}_{\mathcal{K}})> 0$ when $\ocW_{j}(\pcar(\bfA_{-\mathcal{K}},\widetilde{A}_{\mathcal{K}}))>0$. This means that $\varphi_{j}(\bfA_{-\mathcal{K}},\widetilde{A}_{\mathcal{K}}) >0$ which is a strictly  favorable deviation of $j$, whereas the situation is unchanged for the others  in $\mathcal{K}$ that still produce nothing. 
Thus, we exhibit a coalition that allows a deviation from $\bfA$ that benefits to all of its members, and  that benefits strictly to at least one. Then $\bfA$ is not a strong Nash equilibrium.
\end{proof}

\subsubsection*{Price interval}

From Lemmas  \ref{lem:lemmaPasNashAtau_guess} and   \ref{lem:lemmaPasNashStrongAtau_barre_guess}, we have the following: 
\begin{corollary}
If $\bfA$ is a strong Nash equilibrium,  or if it is an effective  Nash equilibrium, then $\pcar(\bfA) \in [\pguess, \opguess]$.
\end{corollary}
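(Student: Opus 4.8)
The plan is to obtain both bounds $\pguess \leq \pcar(\bfA)$ and $\pcar(\bfA) \leq \opguess$ purely by contraposition of the two preceding lemmas, handling the two hypotheses (strong Nash equilibrium, effective Nash equilibrium) in parallel. The single observation that makes everything work is that in both cases $\bfA$ is in particular a Nash equilibrium: a strong Nash equilibrium is by definition a Nash equilibrium that is moreover robust to coalitional deviations, and an effective Nash equilibrium is a Nash equilibrium that additionally satisfies the effectiveness property. This is what lets me invoke Lemma \ref{lem:lemmaPasNashAtau_guess} uniformly.

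For the lower bound I would argue by contradiction. Suppose $\pcar(\bfA) < \pguess$. Then Lemma \ref{lem:lemmaPasNashAtau_guess} asserts that $\bfA$ is not a Nash equilibrium, which contradicts the hypothesis under either assumption (since both a strong and an effective Nash equilibrium are Nash equilibria). Hence $\pcar(\bfA) \geq \pguess$.

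For the upper bound I would split according to which hypothesis is in force, the two cases sharing the common forbidden regime $\pcar(\bfA) > \opguess$. If $\bfA$ is a strong Nash equilibrium, the contrapositive of Lemma \ref{lem:lemmaPasNashStrongAtau_barre_guess}(ii) rules out $\pcar(\bfA) > \opguess$. If instead $\bfA$ is an effective Nash equilibrium, hence in particular an effective strategy, the contrapositive of Lemma \ref{lem:lemmaPasNashStrongAtau_barre_guess}(i) again rules out $\pcar(\bfA) > \opguess$. In either case $\pcar(\bfA) \leq \opguess$, and combining with the lower bound yields $\pcar(\bfA) \in [\pguess, \opguess]$.

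There is essentially no obstacle: the statement is a direct corollary of the two lemmas. The only point that warrants a moment of care is the bookkeeping of which part of Lemma \ref{lem:lemmaPasNashStrongAtau_barre_guess} supplies the upper bound in each case — part (ii) for the strong Nash equilibrium and part (i), accessed through the effectiveness of $\bfA$, for the effective Nash equilibrium — together with the elementary remark that both notions are Nash equilibria so that the lower-bound lemma applies to them without modification.
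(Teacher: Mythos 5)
Your proof is correct and follows exactly the route the paper takes: the paper derives the corollary directly from Lemma \ref{lem:lemmaPasNashAtau_guess} (whose contrapositive gives the lower bound, since both a strong and an effective Nash equilibrium are in particular Nash equilibria) and Lemma \ref{lem:lemmaPasNashStrongAtau_barre_guess} (part (ii) for the strong case, part (i) via effectiveness for the effective case). Your bookkeeping of which part of the second lemma applies in which case matches the paper's intent precisely; nothing is missing.
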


The interval in which the coupled carbon market Nash equilibria prices evolve is then $[\pguess, \opguess]$. This price range is generated by the existing gap between the functions $\cW(\cdot)$ and $\ocW(\cdot)$.  

Thus  a condition for a single unique carbon price is that this gap shrinks to zero: the equality between the two
 {\it willing-to-buy-allowances functionals} occurs i.e. for any value $\tau$, and any producer $i$, the allowances needed to cover the global emissions generated by a player who has won electricity market shares and the allowances needed by  a producer who has won electricity market shares and wants to cover its overall production capacity are the same. Clearly, this is very unlikely to happen. 

\medskip 
It is worth of mentioning that the same lemmas  apply when 
producers have an electricity production power plants portfolio, or when one modifies 
the maximal cap $\mathcal{E}_{j}$ of allowances that  each producer $j$ may buy while one redefines $\tau \mapsto \ocW_j(\tau)$ by 
$$\ocW_j(\tau) = \mathcal{E}_{j} \ind_{\{\varphi_j(\tau) >0\}}.$$

Note that if one increases  the maximal cap,  $\opguess$ increases. 

\subsubsection*{Intermediate strategies} 

Consider any strategy profile $\bfBE = (B_{1}, \cdots, B_{J})$ satisfying the following:
\begin{align}
B_{j}(\tau) = \left \{
\begin{array}{ll}
\displaystyle\cW_{j}(\pguess),  &\mbox{ for } \tau \leq  \pguess, \\
\mbox{ anything admissible}, &\mbox{ for } \pguess < \tau \leq \opguess, \\
\displaystyle \ocW_{j}(\tau), &\mbox{ for } \tau >  \opguess.
\end{array}
\right.
\end{align}
This is not in general an equilibrium, nevertheless we have the following properties:
\begin{lemma}\label{lem:intermediate}
\item{(i)} $\pcar(\bfBE) \in [\pguess,\opguess]$.
\item{(ii)} If there exists a favorable deviation from a producer, say Producer~1, that chooses $\widetilde{B}_{1}$ instead of $B_{1}$,  such that $\pcar(\bfBE_{-1}, \widetilde{B}_{1}) < \pguess$, then there exists another favorable deviation $\widehat{B}_{1}$ defined by 
\[
\widehat{B}_{1} = 
\left\{ 
\begin{array}{ll}
\widetilde{B}_{1}(\tau), &\mbox{ for } \tau > \pguess, \\
\cW_{1}(\pguess), &\mbox{ for } \tau \leq \pguess
\end{array}
\right.
\]
such that $\pcar(\bfBE_{-1}; \widehat{B}_{1}) = \pguess$, and such that 
$\varphi_{1}(\bfBE_{-1},\widehat{B}_{1}) \geq
\varphi_{1}(\bfBE_{-1}, {\widetilde{B}}_{1})$.
\end{lemma}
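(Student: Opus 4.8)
The plan is to prove the two claims separately, leaning heavily on the structure of the carbon-market clearing rules and on Lemma~\ref{lem:lemmaStratUnder}. For point \textit{(i)}, I would first establish the lower bound $\pcar(\bfBE) \geq \pguess$. Since $B_j(\tau) = \cW_j(\pguess)$ for all $\tau \leq \pguess$, the aggregate offer satisfies $\bfSA(\tau) = \sum_j \cW_j(\pguess) = \cW(\pguess)$ on $[0,\pguess]$. By the definition of $\pguess$ in \eqref{eq:lower} together with right-continuity-type considerations, $\cW(\pguess) \geq \WW$, and more precisely the supremum structure forces $\bfSA(\tau) > \WW$ for $\tau$ approaching $\pguess$ from below (or exactly the boundary behaviour that yields $\pcar \geq \pguess$). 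For the upper bound $\pcar(\bfBE) \leq \opguess$, I would observe that for $\tau > \opguess$ we have $B_j(\tau) = \ocW_j(\tau)$, so $\bfSA(\tau) = \ocW(\tau) \leq \WW$ by the definition of $\opguess$ in \eqref{eq:higher}; hence no $\tau > \opguess$ can satisfy $\bfSA(\tau) > \WW$, giving $\pcar(\bfBE) \leq \opguess$ directly from \eqref{reacmarche-prix-quotas}. This part is essentially bookkeeping against the two price definitions.

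The substantive part is point \textit{(ii)}. The strategy is to take the given favorable deviation $\widetilde{B}_1$ clearing below $\pguess$ and \emph{surgically replace} its behaviour on $[0,\pguess]$ by the value $\cW_1(\pguess)$, leaving it untouched above $\pguess$, producing $\widehat{B}_1$. First I would verify that $\widehat{B}_1$ is admissible (it takes values in $[0,\mathcal{E}_1]$ since $\cW_1(\pguess) = e_1\varphi_1(\pguess) \leq e_1\kappa_1 = \mathcal{E}_1$). Next I must show the clearing price jumps up to exactly $\pguess$: on $[0,\pguess]$ the aggregate offer with $\widehat{B}_1$ equals $\sum_{j\neq 1} B_j(\pguess) + \cW_1(\pguess)$, which on that interval is $\sum_j \cW_j(\pguess) = \cW(\pguess) \geq \WW$, while for $\tau > \pguess$ the deviating component $\widehat{B}_1 = \widetilde{B}_1$ is unchanged and the resulting clearing behaviour coincides with that of $(\bfBE_{-1},\widetilde{B}_1)$ above $\pguess$, which by hypothesis clears \emph{below} $\pguess$ — so no $\tau > \pguess$ satisfies the strict inequality defining $\pcar$. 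Combining these two observations pins $\pcar(\bfBE_{-1}; \widehat{B}_1) = \pguess$.

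The final and most delicate step is the market-share inequality $\varphi_1(\bfBE_{-1},\widehat{B}_1) \geq \varphi_1(\bfBE_{-1}, \widetilde{B}_1)$. Here I would argue along the lines of Lemma~\ref{lem:lemmaStratUnder}\textit{(ii)} and Lemma~\ref{lem:lemmaPasNashAtau_guess}: raising the clearing price from $\widetilde\tau < \pguess$ up to $\pguess$ means Producer~1 now obtains allowances $\delta_1(\bfBE_{-1},\widehat{B}_1) = \cW_1(\pguess) = e_1\varphi_1(\pguess)$ fully at the higher price (since its offer is constant on $[0,\pguess]$, so $\Delta(\widehat{B}_1)\geq 0$ at the clearing point and it is fully served), covering capacity $\varphi_1(\pguess)$. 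The key inequality $\cW_1(\pguess) \geq \delta_1(\bfBE_{-1}, \widetilde{B}_1)$ would follow because at the lower clearing price $\widetilde\tau$ the constant-offer producer is served at most its ask $\widetilde{B}_1(\widetilde\tau)$ and its penalized marginal cost $c_1 + e_1\tau$ is lower, so its realized electricity share cannot exceed what it achieves at $\pguess$. I expect \textbf{this last comparison} to be the main obstacle: it requires carefully tracking how the electricity clearing $\varphi_1(\cdot)$ responds to the simultaneous change in carbon price and in allowances bought, handling the monotonicity of $\pelec(\cdot)$ (Lemma~\ref{lem:pelecCroissante}), the capacity cap, and the possibility that raising the carbon price simultaneously lowers competitors' offers — exactly the coupling subtleties that made the proofs of Lemmas~\ref{lem:lemmaStratUnder} and \ref{lem:lemmaPasNashAtau_guess} intricate, and I would reuse their deviation-analysis machinery rather than redo it from scratch.
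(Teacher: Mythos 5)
Your point \textit{(i)} and your derivation that $\pcar(\bfBE_{-1},\widehat{B}_{1})=\pguess$ are sound and essentially match the paper: the paper obtains \textit{(i)} by citing Lemma~\ref{lem:lemmaStratUnder}-\textit{(i)} and Lemma~\ref{lem:lemmaStratUpper}-\textit{(i)}, which you merely inline, and your observation that $(\bfBE_{-1},\widehat{B}_{1})$ coincides above $\pguess$ with $(\bfBE_{-1},\widetilde{B}_{1})$, which clears \emph{below} $\pguess$, is the right (and in fact more explicit than the paper's) way to cap the new price at $\pguess$. The genuine gap is in the market-share inequality, precisely the step you flag as the main obstacle, and the sub-claim you offer there is false. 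You assert that because $\widehat{B}_{1}$ is constant on $[0,\pguess]$ one has $\Delta(\widehat{B}_{1})\geq 0$ at the clearing point, so Producer~1 is fully served with $\delta_{1}(\bfBE_{-1},\widehat{B}_{1})=\cW_{1}(\pguess)$. But $\Delta$ compares the ask \emph{at} the clearing price with the ask just \emph{above} it: $\Delta(\widehat{B}_{1})=\widehat{B}_{1}(\pguess^{+})-\widehat{B}_{1}(\pguess)=\widetilde{B}_{1}(\pguess^{+})-\cW_{1}(\pguess)$, and $\widetilde{B}_{1}(\pguess^{+})$ is arbitrary (it may well be $0$). Constancy below $\pguess$ is irrelevant to $\Delta$; whenever $\widetilde{B}_{1}(\pguess^{+})<\cW_{1}(\pguess)$, Producer~1 has $\Delta(\widehat{B}_{1})<0$ and is rationed according to \eqref{reacmarche-quantite-quotas} (a real possibility, since the total ask at $\pguess$ is $\cW(\pguess)$, which may strictly exceed $\WW$). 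Consequently your chain through the inequality $\cW_{1}(\pguess)\geq\delta_{1}(\bfBE_{-1},\widetilde{B}_{1})$ does not close: what is needed is a \emph{lower} bound on $\delta_{1}(\bfBE_{-1},\widehat{B}_{1})$, not the full-service value.

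The paper closes this step by a residual computation you never make. At the deviated price $\widetilde{\tau}<\pguess$, every non-deviator has a flat ask around $\widetilde{\tau}$, namely $B_{j}(\widetilde{\tau})=B_{j}(\widetilde{\tau}^{+})=\cW_{j}(\pguess)$, so $\Delta(B_{j})=0$ and each $j\neq 1$ is served in full; this pins the deviator's allotment exactly: $\delta_{1}(\bfBE_{-1},\widetilde{B}_{1})=\WW-\sum_{j\neq 1}\cW_{j}(\pguess)$. Under $\widehat{B}_{1}$ at price $\pguess$, each $j\neq 1$ receives at most $\cW_{j}(\pguess)$ (those with $B_{j}(\pguess^{+})<\cW_{j}(\pguess)$ are themselves rationed), whence $\delta_{1}(\bfBE_{-1},\widehat{B}_{1})\geq\WW-\sum_{j\neq 1}\cW_{j}(\pguess)=\delta_{1}(\bfBE_{-1},\widetilde{B}_{1})$, even though Producer~1 need not be fully served. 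The conclusion then follows from $\varphi_{1}(\bfBE_{-1},\widehat{B}_{1})=\frac{1}{e_{1}}\delta_{1}(\bfBE_{-1},\widehat{B}_{1})$ together with $\varphi_{1}(\bfBE_{-1},\widetilde{B}_{1})\leq\frac{1}{e_{1}}\delta_{1}(\bfBE_{-1},\widetilde{B}_{1})$, the same covered-capacity bound used in Lemma~\ref{lem:lemmaStratUnder}. Your plan to ``reuse the deviation-analysis machinery'' of Lemmas~\ref{lem:lemmaStratUnder} and~\ref{lem:lemmaPasNashAtau_guess} does not substitute for this: those proofs never establish full service at $\pguess$ either, and once the allotments are computed as above the comparison is elementary, requiring none of the $\pelec(\cdot)$-monotonicity tracking you anticipate.
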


\begin{proof}
Point {\it(i)} follows directly from Lemma \ref{lem:lemmaStratUnder}-{\it(i)} and Lemma \ref{lem:lemmaStratUpper}-{\it(i)}.

To prove {\it(ii)}, we first observe that, as producers $j\neq 1$ are served first on the carbon market, 
\[
\delta_{1}( \bfB_{-1}; \widetilde{B}_1 ) = \WW - \sum_{j\neq 1} \cW_{j}(\pguess).
\]
Moreover, we have $\pcar(\bfB_{-1}, \widehat{B}_1) = \pguess $, and from the \co2 market mechanism it follows that 
\[
\delta_{1}(\bfB_{-1}, \widehat{B}_1)  \geq \delta_{1}(\bfB_{-1}, \widetilde{B}_1). 
\]
Since $\widetilde{B}_{j}(\pcar(\bfBE_{-1}, \widetilde{B}_{1}))=
\widetilde{B}_{j}(\pcar(\bfBE_{-1}, \widetilde{B}_{1})^{+})= \cW_{j}(\pguess)$ for any $j\neq 1$, it follows that
$\delta_{1}(\bfB_{-1}, \widetilde{B}_1) = \WW - \sum_{j \neq 1} \cW_{j}(\pguess)$.
Indeed, for strategy $(\bfB_{-1}, \widehat{B}_1)$, the producers $j\neq 1$ such that $B_{j}(\pguess^{+}) < \cW_{j}(\pguess)$ receive a quantity of quotas $\delta_{j}(\bfB_{-1}, \widehat{B}_1) \leq \cW_{j}(\pguess)$, from which 
$\delta_{1}(\bfB_{-1}, \widehat{B}_1) = \WW - \sum_{j}\delta_{j}(\bfB_{-1}, \widehat{B}_1) \geq \delta_{1}(\bfB_{-1}, \widehat{B}_1)$. We also deduce that $\varphi_{1}(\bfB_{-1}, \widehat{B}_1) = \frac{1}{e_{1}} \delta_{1}(\bfB_{-1}, \widehat{B}_1)$. 
To conclude, it is sufficient to notice that 
$\varphi_{1}(\bfB_{-1}, \widehat{B}_1) = \frac{1}{e_{1}} \delta_{1}(\bfB_{-1}, \widehat{B}_1) \geq \frac{1}{e_{1}} \delta_{1}(\bfB_{-1}, \widetilde{B}_1) \geq \varphi_{1}(\bfB_{-1}, \widetilde{B}_1)$. 
\end{proof}

The following aims to characterize the form of effective Nash equilibria. 
\begin{corollary}
Let ${\bf E}$ be  an  effective Nash equilibrium (i.e $\pcar(\bfE)\leq \opguess$). Then the following ${\bf E'}$ is also an effective Nash equilibrium: 
\begin{align}
E'_{j}(\tau) = \left \{
\begin{array}{ll}
\cW_{j}(\pguess) ,&\mbox{ for }  \tau \leq  \pguess,\\
E_{j}(\tau), &\mbox{ for }~ \pguess < \tau \leq \opguess, \\
\ocW_{j}(\tau),  &\mbox{ for }  \tau > \opguess.
\end{array}
\right.
\end{align}
\end{corollary}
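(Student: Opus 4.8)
The plan is to recognize $\bfEp$ as a particular instance of the intermediate strategy profile $\bfBE$ of Lemma \ref{lem:intermediate} (its three branches coincide with those of $\bfBE$, the free middle branch being $E_j$), and then to transfer the equilibrium properties of $\bfE$ to $\bfEp$ in three moves: locate the clearing price of $\bfEp$, match its market outcome to that of $\bfE$, and rule out profitable unilateral deviations.

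First, since $\bfEp$ has the intermediate form, Lemma \ref{lem:intermediate}-(i) immediately gives $\pcar(\bfEp)\in[\pguess,\opguess]$; and because $\bfE$ is an effective Nash equilibrium, the price-interval corollary (which places any effective Nash price in $[\pguess,\opguess]$) gives $\pcar(\bfE)\in[\pguess,\opguess]$ as well. On the half-open interval $(\pguess,\opguess]$ the two profiles coincide, $E'_j=E_j$; moreover for $\tau>\opguess$ one has $\sum_j E'_j(\tau)=\ocW(\tau)\le\WW$, while for $\tau\le\pguess$ the aggregate $\sum_j E'_j$ equals the constant $\cW(\pguess)$ (which is $\ge\WW$ by the clearing-price characterisation of $\pguess$). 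I would use these facts to argue that the supremum defining $\pcar$ is attained inside the common region, so $\pcar(\bfEp)=\pcar(\bfE)$; and since the clearing rules \eqref{reacmarche-prix-quotas}--\eqref{reacmarche-quantite-quotas} read off each $\delta_j$ from the values of the aggregate at $\pcar$ and $\pcar^+$, which lie where $\bfEp$ and $\bfE$ agree, I would conclude $\delta_j(\bfEp)=\delta_j(\bfE)$ for all $j$, hence identical penalized cost functions $c_j^{\bfEp}=c_j^{\bfE}$ and therefore $\varphi_j(\bfEp)=\varphi_j(\bfE)$. In particular $\bfEp$ is effective because $\bfE$ is.

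For the Nash property, I would take an arbitrary unilateral deviation $\widetilde A_1$ of Producer~1 and split on the induced price $\widetilde\tau=\pcar(\bfEp_{-1},\widetilde A_1)$. If $\widetilde\tau>\opguess$, then the remaining producers still play $\ocW_j$ above $\opguess$, so the argument of Lemma \ref{lem:lemmaStratUpper}-(ii) (which only uses the form of the non-deviating players above $\opguess$) applies and forces $\varphi_1(\bfEp_{-1},\widetilde A_1)=0$, hence no gain. If $\widetilde\tau<\pguess$, Lemma \ref{lem:intermediate}-(ii) (applied with $\bfBE=\bfEp$) replaces $\widetilde A_1$ by $\widehat A_1$, clearing at exactly $\pguess$ and granting at least as much to Producer~1, reducing this case to the next. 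Finally, for $\widetilde\tau\in[\pguess,\opguess]$ I would compare against a deviation from $\bfE$: setting $A_1^{\sharp}:=\widetilde A_1\,\ind_{[0,\opguess]}$ keeps the $\bfE$-profile aggregate below $\WW$ for $\tau>\opguess$ (using $\pcar(\bfE)\le\opguess$), so $(\bfE_{-1},A_1^{\sharp})$ clears at the same $\widetilde\tau$ with the same aggregate near $\widetilde\tau$ as $(\bfEp_{-1},\widetilde A_1)$ — the two profiles agreeing on $(\pguess,\opguess]$. This gives $\varphi_1(\bfEp_{-1},\widetilde A_1)=\varphi_1(\bfE_{-1},A_1^{\sharp})\le\varphi_1(\bfE)=\varphi_1(\bfEp)$, the middle inequality being the Nash property of $\bfE$. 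Assembling the three cases shows $\bfEp$ admits no favorable deviation.

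The main obstacle is the bookkeeping at the two boundary prices $\pguess$ and $\opguess$. Exactly there the modification separating $\bfEp$ from $\bfE$ — and separating the comparison profile $(\bfE_{-1},A_1^{\sharp})$ from $(\bfEp_{-1},\widetilde A_1)$ — sits at the clearing price itself, so the left/right values and the jumps $\Delta(A_i)$ feeding the proportional-sharing formula \eqref{reacmarche-quantite-quotas} need not agree a priori, and one cannot simply read off equal allocations. Resolving this is where I expect the real effort: I would invoke the right-continuity of $\tau\mapsto\pelec(\tau)$ (Lemma \ref{lem:pelecCroissante}) and the monotonicity of $\cW$ (Lemma \ref{lem:Wdecreasing}) to pin down the allocations at $\tau=\pguess$ and $\tau=\opguess$, treating the discontinuities precisely as in the delicate arguments already carried out in Lemmas \ref{lem:lemmaStratUnder}--\ref{lem:intermediate}.
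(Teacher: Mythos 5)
Your proposal is correct and takes essentially the same route as the paper's own proof: localize $\pcar(\bfEp)$ in $[\pguess,\opguess]$ via the intermediate-strategy lemma, kill deviations clearing above $\opguess$ by the higher-price-strategy mechanism (forcing $\varphi_1=0$), reduce deviations clearing below $\pguess$ via Lemma \ref{lem:intermediate}-(ii), and transfer mid-range deviations to deviations from $\bfE$ using the agreement of $\bfEp$ and $\bfE$ on $(\pguess,\opguess]$ together with the Nash property of $\bfE$. If anything you are more explicit than the paper, which asserts $\varphi_1(\bfEp)=\varphi_1(\bfE)$ without proof and dismisses the case $\pcar\in[\pguess,\opguess]$ with ``same arguments apply,'' whereas you construct the comparison deviation $A_1^{\sharp}$ and honestly flag the boundary bookkeeping at $\pguess$ and $\opguess$ that the paper glosses over.
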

\begin{proof}
From Lemmas \ref{lem:lemmaStratUnder} and \ref{lem:lemmaStratUpper},  $\pcar(\bfE) \in [\pguess,\opguess]$. 
Consider a deviation that produces a bigger carbon price: Producer~1 deviates from  ${E_1'}$ to $\tilde{E_1'}$  with $\pcar( \tilde{E_1'},\bfEp_{-1})> \opguess $. Then by definition of $ \opguess$, $\varphi_1(\tilde{E_1'},\bfEp_{-1}) = 0$. Indeed, a deviation to this  bigger price is possible only if $\ocW_1(\pcar( (\tilde{E_1'},\bfEp_{-1}))=0$.

Now if Producer~1 deviates from  $E_1'$ to $\tilde{E_1'}$  with $\pcar( (\tilde{E_1'},\bfEp_{-1})) < \pguess $, and if we assume that this deviation is strictly favorable:  $\varphi_1(\tilde{E_1'},\bfEp_{-1}) > \varphi_1(\bfEp)$.  Then according to Lemma \ref{lem:intermediate}, we consider $\hat{E_1'}$ that  gives $\pcar( \hat{E_1},\bfEp_{-1}) = \pguess$.  And we still have that   $\varphi_1(\hat{E_1'},\bfEp_{-1}) > \varphi_1(\bfEp)$. 
But the deviation $(\hat{E_1'},\bfE_{-1})$ from $\bfE$ produces the same price and shares than   $(\hat{E_1'},\bfEp_{-1}))$. Since we also have  $\varphi_1(\bfEp) = \varphi_1(\bfE)$, we get a strictly favorable deviation to $\bfE$ which gives the contradiction. 

Same arguments apply when Producer~1 deviates from  $E_1'$ to $\tilde{E_1'}$  with $\pcar( \tilde{E_1'})$ in $[\pguess,\opguess]$. 
\end{proof}

\section{Conclusion}
Once \co2 is emitted into the atmosphere,  it remains there for more than a century. Estimating its value is an essential indicator for efficiently defining policy. Carbon valuation is crucial for designing markets that foster emission reductions. In this paper, we established the links between an electricity market and a carbon auction market through an analysis of electricity producers' strategies. We proved that they lead to the interval where relevant Nash equilibria evolve,  enabling the computation of equilibrium prices on both markets. 
For each producer, each equilibrium derives the level of electricity produced and the \co2  emissions covered. 

For a given design and set of players, the information provided by the interval may be interpreted as a diagnosis of market behavior in terms of prices and volume. {Indeed, it enables the computation of the \co2 emissions actually released, and opens the discussion of a relevant carbon market in terms of mitigation issues.}

In addition to this analysis of the Nash equilibrium we plan to analyze the electricity production mix, with a particular focus on renewable shares that do not participate in emissions.

\subsubsection*{Acknowledgement}
{This work was partly supported by Grant  0805C0098 from ADEME.}

\appendix
\section{Appendix}

\subsection{Proof of Proposition \ref{propo-Nash}} \label{appendix:proofNash}

\noindent
{\bf A. First we prove the dominance property \textit{(i)}.} 
\medskip

Suppose that one producer, let us say producer $1$, deviates and chooses  $C_{1}$ 
instead of $s_{1}$. We have to show that its market share cannot be reduced by this deviation. 
By definition of the admissibility (see \eqref{eq:classeStratAdmiss}) we have
\[
s_{1}(q) \geq C_{1}(q), \forall q\in [0,\kappa_1].
\]
Hence the offer functions defined by \eqref{defOffrej} satisfy 
${\Of}(s_{1};\cdot) \leq {\Of}(C_{1};\cdot)$. 
By adding the unchanged offers of the other producers 
\begin{equation}\label{in1}
\Off((\bfs_{-1}, s_{1});\cdot) \leq \Off((\bfs_{-1}, C_{1});\cdot), 
\end{equation}
where $(\bfs_{-1}, C_{1})$ denotes the strategy profile that includes Producer~1 deviation.
The minimum market clearing price  \eqref{reacmarche-prix}  for strategy profile  $\bfs$ is
\begin{align*}
\uup(\bfs) = \inf \{ p , \;  \Off(\bfs ; p) >  D(p) \}. 
\end{align*}
The minimum  market clearing price \eqref{reacmarche-prix} for strategy profile $(\bfs_{-1},C_{1})$ 
is
\begin{align*}
\uup(\bfs_{-1},C_{1}) =  \inf \{ p , \;  \Off((\bfs_{-1},C_{1}) ; p) >  D(p) \}. 
\end{align*}
The inequality \eqref{in1} together with the fact that the demand $D(\cdot)$ is a non-increasing  
function imply that $\uup(\bfs_{-1},C_{1}) \leq \uup(\bfs)$,
from which,  with \eqref{regleChoixPrix} we deduce that
\[
\pelec(\bfs_{-1},C_{1})\leq \pelec(\bfs). 
\]

Now let us show that Producer~1 does not reduce its market share by deviating from   $s_{1}(\cdot)$ to $C_{1}(\cdot)$, that is $\varphi_{1}(\bfs_{-1}, C_{1})  \geq  \varphi_{1}(\bfs)$. 

\medskip
For the sake of clarity we adopt, in this paragraph, the following notation:
\begin{align*}
\begin{array}{ll}
\uup_{\bfs}  &:=  \uup(\bfs)  \\
\pelec_{\bfs} & :=  \pelec(\bfs) 
\end{array}
\quad \mbox{ and } \quad 
\begin{array}{ll}
\uup_{\bfs C} &:= \uup(\bfs_{-1},C_{1})  \\
\pelec_{\bfs C}& := \pelec((\bfs_{-1},C_{1})). 
\end{array}
\end{align*}

\medskip\noindent
{\bf We first consider the case where $\uup_{\bfs C}< \uup_{\bfs}$. ~} ~By definition of the minimum clearing price $\uup_{\bfs C}$, the fact that $D(\uup_{\bfs})  \leq D(\uup_{\bfs C})$ and the fact that $\Off((\bfs_{-1},C_{1});\cdot)$ is non-decreasing, we have 
\[
D(\pelec_{\bfs})\leq D(\uup_{\bfs})  \leq D(\uup_{\bfs C}) \leq \Off((\bfs_{-1},C_{1});\uup_{\bfs C}) \leq 
\Off((\bfs_{-1},C_{1});\pelec_{\bfs C}). 
\]
Hence,  
\begin{align*}
\Off((\bfs_{-1}, s_{1}), \pelec_{\bfs})\wedge D(\pelec_{\bfs}) & \leq \Off((\bfs_{-1},C_{1});\pelec_{\bfs C})\wedge D(\pelec_{\bfs C}),\\
\Off((\bfs_{-1}, s_{1}), \uup_{\bfs})\wedge D(\uup_{\bfs}) & \leq \Off((\bfs_{-1},C_{1});\uup_{\bfs C})\wedge D(\uup_{\bfs C}).
\end{align*}
From the market clearing \eqref{eq:clearingQuantities}	 we get
\begin{align*}
\varphi_{1}(\bfs_{-1}, s_{1}) -  \varphi_{1}(\bfs_{-1},C_{1})  =& 
\Off((\bfs_{-1}, s_{1}), \pelec_{\bfs})\wedge D(\pelec_{\bfs}) - \Off((\bfs_{-1},C_{1});\pelec_{\bfs C})\wedge D(\pelec_{\bfs C})\\&
+ \sum_{j>1} \left(\varphi_j(\bfs_{-1},C_{1}) - \varphi_j(\bfs_{-1}, s_{1})\right). 
\end{align*}
According to Definition \ref{def:quantities}, let us denote
\begin{align*}
\mathcal{E}(\uup_{\bfs}) = \left\{j \in \{2,\ldots,J\}\mbox{ s.t.  }\Delta^- \Of(s_j ; \uup_{\bfs}) > 0 \right\}.
\end{align*}
We have 
\begin{align*}
\varphi_{1}(\bfs_{-1}, s_{1}) -  \varphi_{1}(\bfs_{-1},C_{1}) =& 
\Off((\bfs_{-1}, s_{1}); \pelec_{\bfs})\wedge D(\pelec_{\bfs}) - \Off((\bfs_{-1},C_{1});\pelec_{\bfs C})\wedge D(\pelec_{\bfs C}) \\&
+ \sum_{j >  1, j\notin \mathcal{E}(\uup_{\bfs})} \left(\varphi_j(\bfs_{-1},C_{1}) - \Of(s_j; \pelec_{\bfs})\right) \\&
 +   \sum_{j > 1, j\in \mathcal{E}(\uup_{\bfs})}  (\varphi_j(\bfs_{-1},C_{1})  - \varphi_j(\bfs_{-1}, s_{1}))\\
\leq & 
\Off((\bfs_{-1}, s_{1}); \pelec_{\bfs})\wedge D(\pelec_{\bfs}) - \Off((\bfs_{-1},C_{1});\pelec_{\bfs C})\wedge D(\pelec_{\bfs C}) \\&
+ \sum_{j >  1, j\notin \mathcal{E}(\uup_{\bfs})} \left( \Of(s_j;\pelec_{\bfs C}) - \Of(s_j; \pelec_{\bfs} )\right) \\&
 +   \sum_{j > 1, j\in \mathcal{E}(\uup_{\bfs})}  (\varphi_j(\bfs_{-1},C_{1})  - \varphi_j(\bfs_{-1}, s_{1})).
\end{align*}
Since $\pelec_{\bfs C} \leq \pelec_{\bfs}$  we get
\begin{align*}
\varphi_{1}(\bfs_{-1}, s_{1}) -  \varphi_{1}(\bfs_{-1},C_{1}) \leq &    
\sum_{j > 1, j\in \mathcal{E}(\uup_{\bfs})}(\varphi_j(\bfs_{-1},C_{1})  - \varphi_j(\bfs_{-1}, s_{1})).
\end{align*}
But for any $j\in \mathcal{E}(\uup_{\bfs})$, the quantity  
$\Of(s_{j}; \uup_{\bfs}^{-}) \leq \varphi_j(\bfs_{-1}, s_{1})$.  As $\Of(\bfs_{j} ; \cdot)$ is non-decreasing ans since we have assumed $\uup_{\bfs C} < \uup_{\bfs}$,  we get
$$
\Of(s_j; \uup_{\bfs C}^{-}) \leq \Of(s_j; \uup_{\bfs}^-) \leq \varphi_j(\bfs_{-1}, s_{1}).
$$
For such  $j>1$ we thus have
$$
\varphi_j(\bfs_{-1},C_{1})  - \varphi_j(\bfs_{-1},s_{1}) 
\leq 
\varphi_j(\bfs_{-1},C_{1})- \Of(s_j; {\pelec_{\bfs}}^-) \leq  \varphi_j(\bfs_{-1},C_{1})- \Of(s_j;\pelec_{\bfs C}) \leq 0, $$
from which it follows that 
$
\varphi_{1}(\bfs_{-1}, s_{1}) -  \varphi_{1}(\bfs_{-1},C_{1}) \leq 0.
$

\medskip\noindent
{\bf Now consider the case where  $\uup_{\bfs} = \uup_{\bfs C}:= \uup$. ~}~
Due to the market rule (\ref{regleChoixPrix}), we necessarily  have $\pelec_{\bfs} = \pelec_{\bfs C} :=\pelec$.

\medskip\noindent
{$\bullet$ If  $\Off((\bfs_{-1}, s_{1});\pelec) \leq \Off((\bfs_{-1}, C_{1});\pelec)\leq D(\pelec)$, } then 
by the market clearing 
\begin{align*}
\varphi_{1}(\bfs_{-1}, s_{1})   = \Of(s_1;\pelec) \leq \Of(C_1;\pelec) = \varphi_{1}(\bfs_{-1},C_{1}).
\end{align*} 

\medskip\noindent
{$\bullet$ If  $\Off((\bfs_{-1}, s_{1});\pelec) \leq D(\pelec)\leq \Off((\bfs_{-1}, C_{1});\pelec)$, }then
\begin{align*}
\varphi_{1}(\bfs_{-1}, s_{1})   = \Of(s_1;\pelec) \leq & D(\pelec) - \sum_{j>1} \varphi_{j}(\bfs_{-1},s_{1}) = D(\pelec) - \sum_{j>1} \Of(s_j;\pelec)  \\
\leq  & D(\pelec) - \sum_{j>1} \varphi_j(\bfs_{-1},C_{1})  = \varphi_1(\bfs_{-1},C_{1}).
\end{align*}

\medskip\noindent
{$\bullet$ If $D(\pelec) < \Off((\bfs_{-1}, s_{1});\pelec) \leq \Off((\bfs_{-1},C_{1});\pelec)$, }
by the market clearing we get
\begin{align*}
\varphi_{1}(\bfs_{-1}, s_{1}) -  \varphi_{1}(\bfs_{-1},C_{1})  =& \Off((\bfs_{-1}, s_{1}), \pelec)\wedge D(\pelec)  - \Off((\bfs_{-1},C_{1});\pelec)\wedge D(\pelec) \\
& + \sum_{j>1} (\varphi_j(\bfs_{-1},C_{1})  - \varphi_j(\bfs_{-1}, s_{1}))\\
\leq& \sum_{j>1} (\varphi_j(\bfs_{-1},C_{1})  - \varphi_j(\bfs_{-1}, s_{1}))\\
\leq& \sum_{j>1,j\in\mathcal{E}(\uup)} (\varphi_j(\bfs_{-1},C_{1})  - \varphi_j(\bfs_{-1}, s_{1})).
\end{align*}
From \eqref{reacmarche-quantite-elec}, we have for $j \in \mathcal{E}(\uup)$
\begin{align*}
& \varphi_j(\bfs_{-1},s_{1}) =  \Of(s_j,\uup^{-})  + \Delta^- \Of(s_j;\uup) \dfrac{\left( D(\uup) - \Off((\bfs_{-1}, s_{1}),\uup^{-})\right)}{\displaystyle \Delta^- \Off((\bfs_{-1}, s_{1}),\uup)}\\
\mbox{and } \quad \mbox{ }  & \varphi_j(\bfs_{-1},C_{1})  =  \Of(s_j;\uup^{-})  + \Delta^- \Of(s_j;\uup) \dfrac{\left( D(\uup) - \Off((\bfs_{-1},C_{1});\uup^{-})\right)}{\displaystyle \Delta^- \Off((\bfs_{-1},C_{1});\uup)}.
\end{align*}
Hence, if $\mathcal{E}(\uup)$ is non empty then at least one producer exists, $j \neq 1$ such that
$\Delta^-\Of(s_j;\uup) > 0$. and from the desegregation of  $\Off$ and definition of  $\Delta^-$ it results that
\begin{align*}
&\varphi_{1}(\bfs_{-1}, s_{1}) -  \varphi_{1}(\bfs_{-1}, C_{1})\\
& = \sum_{j>1,j\in\mathcal{E}(\uup)} \Delta^- \Of(s_j,\uup) 
\left( 
\dfrac{\left( D(\uup) - \Off(\bfs_{-1},\uup^{-}) - \Of(C_1;\uup^{-})\right)}
      {\displaystyle \Off((\bfs_{-1}, C_{1});\uup)-\Off(\bfs_{-1},\uup^{-}) -\Of(C_1;\uup^{-}) } \right. \\
&\hspace{5cm}\left. 
- \dfrac{\left( D(\uup) - \Off(\bfs_{-1},\uup^{-}) - \Of(s_1,\uup^{-}) \right) }
    {\displaystyle \Off((\bfs_{-1}, s_{1}),\uup)- \Off(\bfs_{-1},\uup^{-}) - \Of(s_1;\uup^{-}) }
\right). 
\end{align*}
We note that 
\begin{align*}
0 & < \Off((\bfs_{-1}, s_{1});\uup)- \Off(\bfs_{-1},\uup^{-}) - \Of(C_1;\uup^{-}) \\
& <   \Off((\bfs_{-1}, C_{1});\uup)-\Off(\bfs_{-1},\uup^{-}) -\Of(C_1;\uup^{-}), 
\end{align*}
and that  $D(\uup) - \Off((\bfs_{-1}, C_{1});\uup^{-}) >0$ by definition of $\uup$. Then
\begin{align*}
&\varphi_{1}(\bfs_{-1}, s_{1}) -  \varphi_{1}(\bfs_{-1}, C_{1})\\
&\quad \leq \sum_{j>1,j\in\mathcal{E}(\uup)}\Delta^- \Of(s_j;\uup)
 \times  
 \left( 
 \dfrac{\left( D(\uup) - \Off(\bfs_{-1},\uup_-) - \Of(C_1;\uup_-)\right)}
       {\displaystyle  \Off((\bfs_{-1}, s_{1});\uup)- \Off(\bfs_{-1},\uup^{-}) - \Of(C_1;\uup^{-}) } 
 \right. \\
&\hspace{5cm}\left.       - 
 \dfrac{\left( D(\uup) - \Off(\bfs_{-1},\uup_-) - \Of(s_1;\uup_-)\right)}
       {\displaystyle \Off((\bfs_{-1}, s_{1});\uup)- \Off(\bfs_{-1},\uup^{-}) - \Of(s_1;\uup^{-}) }
 \right). 
\end{align*}
Since $D(\uup) \leq \Off((\bfs_{-1}, s_{1});\uup)$ and $\Of(C_1; \uup^{-})\geq \Of(s_1 ; \uup^{-})$, we can deduce 
that
$
\varphi_{1}(\bfs_{-1}, s_{1}) -  \varphi_{1}(\bfs_{-1}, C_{1}) \leq 0.
$
This follows from the fact that when $A \leq B$, the map   $x \mapsto\dfrac{A-x}{B-x}$  is decreasing on $[0,A)$.

\medskip\noindent
{\bf B. We prove the  uniqueness property \textit{(iii)}:} all Nash equilibria induce the same  electricity price and same quantities of electricity bought to each producer.  

First, we state the following  consequence of the dominance property  \textit{(i)}: 
\begin{lemma}\label{lem:dominancebis}
For any admissible strategy ${\bf{s}} = (s_1,\ldots,s_J)$, such that $\uup(\bfs) = \uup(\bfC)$, if  producer $j$ is such that  $s_j = C_j$,  then 
$$ \varphi_j(\bfs) \geq  \varphi_j(\bfC).$$
\end{lemma}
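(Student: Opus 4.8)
The plan is to reduce the comparison to the two market-clearing regimes of Definition~\ref{def:quantities} and to treat $\varphi_j(\bfs)$ versus $\varphi_j(\bfC)$ regime by regime, the whole argument being powered by the fact that producer $j$ plays the same strategy in both profiles. First I would record the two structural inputs. Since $\bfs$ is admissible we have $s_i(q)\ge C_i(q)$ for every $i$, and by hypothesis $s_j=C_j$; by Remark~\ref{property:offre croissante}(ii) this yields $\Of(s_i;\cdot)\le\Of(C_i;\cdot)$ for $i\neq j$ and $\Of(s_j;\cdot)=\Of(C_j;\cdot)$, hence the aggregate domination $\Off(\bfs;\cdot)\le\Off(\bfC;\cdot)$ with the $j$-th offer unchanged. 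Second, the hypothesis $\uup(\bfs)=\uup(\bfC)=:\uup$ together with the consistency rule \eqref{regleChoixPrix} forces $\pelec(\bfs)=\pelec(\bfC)=:\pelec$. I then split according to which branch of \eqref{reacmarche-quantite-elec} governs $\bfC$.

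If $\bfC$ lies in the non-rationing branch, i.e. $D(\pelec)\ge\Off(\bfC;\pelec)$, then a fortiori $D(\pelec)\ge\Off(\bfs;\pelec)$, so $\bfs$ is non-rationed as well and $\varphi_j(\bfs)=\Of(s_j;\pelec)=\Of(C_j;\pelec)=\varphi_j(\bfC)$, giving equality. If $\bfC$ is rationed at $\uup$ while $\bfs$ is not, I would invoke \eqref{eq:clearingQuantity}: $\varphi_j(\bfC)\le\Of(C_j;\uup)\le\Of(C_j;\pelec)=\varphi_j(\bfs)$, the last step because $\pelec\ge\uup$ and $\Of(C_j;\cdot)$ is non-decreasing.

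The core case, and the main obstacle, is when both $\bfs$ and $\bfC$ are rationed at $\uup$. Writing the proportional rule \eqref{reacmarche-quantite-elec} and using $\Of(s_j;\uup^{-})=\Of(C_j;\uup^{-})$ and $\Delta^-\Of(s_j;\uup)=\Delta^-\Of(C_j;\uup)\ge 0$, it suffices to compare the common fill ratios $r_\bfs=\frac{D(\uup)-\Off(\bfs;\uup^{-})}{\Delta^-\Off(\bfs;\uup)}$ and $r_\bfC=\frac{D(\uup)-\Off(\bfC;\uup^{-})}{\Delta^-\Off(\bfC;\uup)}$, and to prove $r_\bfs\ge r_\bfC$. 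Setting $a_\bfs=\Off(\bfs;\uup^{-})\le\Off(\bfC;\uup^{-})=a_\bfC$ and $a'_\bfs=\Off(\bfs;\uup)\le\Off(\bfC;\uup)=a'_\bfC$, the definition of $\uup$ with the left-continuity of $D$ gives $a_\bfs,a_\bfC\le D(\uup)$, while rationing gives $D(\uup)<a'_\bfs,a'_\bfC$; in particular all four denominators below are positive.

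I would conclude in two monotone steps. Since the numerator $D(\uup)-a_\bfs$ is non-negative and $a'_\bfs\le a'_\bfC$, shrinking the denominator gives $r_\bfs=\frac{D(\uup)-a_\bfs}{a'_\bfs-a_\bfs}\ge\frac{D(\uup)-a_\bfs}{a'_\bfC-a_\bfs}$. Then, applying the elementary monotonicity fact already used at the end of Part~A (the map $x\mapsto\frac{A-x}{B-x}$ is decreasing on $[0,A)$ whenever $A\le B$) with $A=D(\uup)$, $B=a'_\bfC$ and $x$ decreasing from $a_\bfC$ to $a_\bfs$, I obtain $\frac{D(\uup)-a_\bfs}{a'_\bfC-a_\bfs}\ge\frac{D(\uup)-a_\bfC}{a'_\bfC-a_\bfC}=r_\bfC$. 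Chaining these gives $r_\bfs\ge r_\bfC$, and substituting back into \eqref{reacmarche-quantite-elec} with $\Delta^-\Of(C_j;\uup)\ge 0$ yields $\varphi_j(\bfs)\ge\varphi_j(\bfC)$, which closes the last case. The delicate point throughout is exactly this ratio comparison: one must track that lowering the rivals' offers lowers both the pre-jump mass $a$ and the post-jump mass $a'$, and that the fill ratio is monotone in both arguments in the right direction.
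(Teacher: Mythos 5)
Your proof is correct and takes essentially the same route as the paper's: the paper likewise splits on whether $\Off(\bfs;\uup)\leq D(\uup)$ (handling the non-rationed and mixed situations at once via \eqref{eq:clearingQuantity}, where you treat them as two subcases) and, when both profiles are rationed, compares exactly the same fill ratios by first enlarging the denominator from $\Delta^-\Off(\bfs;\uup)$ to $\Off(\bfC;\uup)-\Off(\bfs;\uup^-)$ and then invoking the monotonicity of $x\mapsto \frac{A-x}{B-x}$ on $[0,A)$ with $A=D(\uup)$, $B=\Off(\bfC;\uup)$. The only differences are bookkeeping, so there is nothing to fix.
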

\begin{proof}
As arguments are very similar to the proof of \textit{(i)}, we just sketch them. 
Let $\bfs$ such that $\uup(\bfs)=\uup(\bfC):= \uup $. Assume that Producer~1 is such that  $s_1 =C_1$.\\
{$\bullet$ If $\Off(\bfs;\uup) \leq D(\uup)$, } then by the market clearing 
\begin{align*}
\varphi_{1}(\bfs) = \Of(s_1;\uup)  = \Of(C_1;\uup) \geq  \varphi_{1}(\bfC). 
\end{align*}
{$\bullet$ If $D(\uup) <  \Off(\bfs;\uup) \leq \Off(\bfC;\uup) $, } by the market clearing we get 
\begin{align*}
& \varphi_1(\bfs_{-1},C_{1})  =  \Of(C_1;\uup^-)  + \Delta^- \Of(C_1;\uup) \dfrac{\left( D(\uup) - \Off((\bfs_{-1};C_{1});\uup^-)\right)}{\displaystyle \Delta^- \Off((\bfs_{-1};C_{1});\uup)} \\
\mbox{and } \quad \mbox{ } &
\varphi_1(\bfC_{-1},C_{1}) =  \Of(C_1;\uup^-)  + \Delta^- \Of(C_1;\uup) \dfrac{\left( D(\uup) - \Off((\bfC_{-1}; C_{1}),\uup^-)\right)}{\displaystyle \Delta^- \Off((\bfC_{-1}; C_{1}),\uup)}. 
\end{align*}
Thus, 
\begin{align*}
&\varphi_1(\bfs_{-1},C_{1}) - \varphi_1(\bfC_{-1},C_{1})\\
& = 
\Delta^- \Of(C_1;\uup) \left( 
\dfrac{\left( D(\uup) - \Off((\bfs_{-1},C_{1});\uup^-)\right)}{\displaystyle \Off((\bfs_{-1}, C_{1});\uup)- \Off((\bfs_{-1},C_{1});\uup^-)} 
- \dfrac{\left( D(\uup) - \Off((C_{-1},C_{1});\uup^-)\right)}{\displaystyle \Off((C_{-1}, C_{1});\uup)- \Off((C_{-1},C_{1});\uup^-)}
\right).
\end{align*}
Assuming that $\Delta^- \Of(C_1;\uup)>0$, we note that  
\begin{align*}
0 < {\displaystyle \Off((\bfs_{-1}, C_{1});\uup)- \Off((\bfs_{-1},C_{1});\uup^-)}  \leq  {\displaystyle \Off((C_{-1}, C_{1});\uup)- \Off((\bfs_{-1},C_{1});\uup^-)} .
\end{align*}
Since $D(\uup) - \Off((C_{-1}, C_{1});\uup^-) >0$ by definition of $\uup$,  
\begin{align*}
&\varphi_1(\bfs_{-1},C_{1}) - \varphi_1(C_{-1},C_{1})\\
&\geq \Delta^- \Of(C_1;\uup)
\left( 
\dfrac{\left( D(\uup) - \Off((\bfs_{-1},C_{1});\uup^-)\right)}{\displaystyle \Off((C_{-1}, C_{1});\uup)- \Off((\bfs_{-1},C_{1});\uup^-)}  
- \dfrac{\left( D(\uup) - \Off((C_{-1},C_{1});\uup^-)\right)}{\displaystyle \Off((C_{-1}, C_{1});\uup)- \Off((C_{-1},C_{1});\uup^-)}\right).
\end{align*}
As $\Off((\bfs_{-1},C_{1});\uup^-) \leq \Off((C_{-1},C_{1});\uup^-)$,  we get 
$\varphi_1(\bfs_{-1},C_{1}) - \varphi_1(C_{-1},C_{1}) \geq 0.$
\end{proof}

\medskip\noindent
{\bf We prove that the quantities are the same for all Nash equilibria.  }
Let $\bfw$ an other Nash equilibrium that differs from $\bfC$.  On the global offers we always have 
$
\Off(\bfw;\cdot) \leq \Off(\bfC;\cdot)
$
that implies
$
 \uup(\bfw) \geq \uup(\bfC).
$
Note that  when $\uup(\bfC) = \plol$, all admissible strategies $\bfs$ are Nash as 
$\varphi_j(\bfC) =  \varphi_j(\bfs) = \kappa_j,$ for all $j.$

By the offers ordering, it is straightforward to show that 
\[\sum_{j=1}^J \varphi_j(\bfw) \leq  \sum_{j=1}^J \varphi_j(\bfC).\]
Assume that the quantities are not the same, then there exists a producer, say Producer~1, such that $\varphi_1(\bfw) < \varphi_1(\bfC).$
And we also have 
\begin{align*}
\varphi_1(\bfw) < \varphi_1(\bfC) \leq \Of(C_1; \pelec(\bfC)) \leq \Of(C_1;\pelec(\bfw_{-1},C_1). 
\end{align*}

If $\uup(\bfC) = \uup(\bfw_{-1},C_1)$, then by Lemma \ref{lem:dominancebis}, we have that $ \varphi_1(\bfw_{-1},C_1) \geq \varphi_1(\bfC)$ and hence  $ \varphi_1(\bfw_{-1},C_1) > \varphi_1(\bfw)$. In other words, $\bfw$ has a strictly favorable deviation for Producer~1 that contradicts the assumption that $\bfw$ is a Nash equilibrium.

Now if $\uup(\bfC) < \uup(\bfw_{-1},C_1)$, by \eqref{eq:clearingQuantity}, 
\begin{align*}
\varphi_1(\bfw) < \varphi_1(\bfC) \leq \Of(C_1; \uup(\bfC)) \leq 
\Of(C_1;\uup((\bfw_{-1},C_1))^-)\leq \varphi_1(\bfw_{-1},C_1),
\end{align*}
and the same conclusion follows.

\medskip\noindent
{\bf We prove that the equilibrium best bid price is unique: $\op(\bfw) = \op(\bfC)$, for  an other Nash equilibrium $\bfw$.} 
Assume the contrary,  $\op(\bfw) >  \op(\bfC)$. Then by the definition of $\op(\cdot)$, we have that 
$
D(\uup(\bfw)) < D(\uup(\bfC)).
$

From \eqref{eq:clearingQuantities} and  \eqref{eq:clearingQuantity}, 
\[\sum_{j=1}^J \varphi_j(\bfw) \leq  D(\uup(\bfw))  < D(\uup(\bfC)^+) \leq D(\uup(\bfC)) \wedge \Off(\bfC;\uup(\bfC)) = \sum_{j=1}^J \varphi_j(\bfC) \]
that contradicts the fact that Nash equilibria have same clearing quantities.

\subsection{Proofs of Lemma \ref{lem:pelecCroissante} and Lemma \ref{lem:Wdecreasing}}
\label{appendix:proofLemmas1et2}
\newcommand{\SD}{S_{D}}
\newcommand{\SK}{S_{\kappa}}

\subsubsection*{Proof of Lemma \ref{lem:pelecCroissante}.} 

Although the result of this lemma is  intuitive, the proof is rather  technical. This is due to  our assumptions,  in particular regarding demand, that  allow  the demand function to have discontinuity points and some non-elasticity areas (see Assumption \ref{ass:demande}). 

More precisely, if we define the map $\tau \mapsto \Off(\tau; p)$ by 
\[\Off(\tau; p) = \sum_{i=1}^J \Of(C_{j}^{\tau}(\cdot);p) = \sum_{i=1}^J \kappa_i \ind_{\{p \geq c_i + 
\tau e_i\}} = \sum_{i=1}^J \kappa_i \ind_{\{\tau \leq  \frac{ p - c_i}{e_i}\}},  \]
then we can observe that, for any $p>0$ far enough from the $c_i$, and any $\tau' \geq \tau$,
\[
\Off(\tau'; p) \leq \Off(\tau; p) \quad 
\mbox{ and } \quad 
\lim_{\epsilon \rightarrow 0^{+}}\Off(\tau+\epsilon; p) = \Off(\tau; p).
\]

We call $\SD = \{p_d ; \lim_{\epsilon \rightarrow 0^+} D(p_d + \epsilon) < D(p_d)\}$, the set of discontinuity points of the   
Demand function. 

We call $\SK = \{p_c ; D(p_c) = \sum \kappa_i\}$, the set of prices that make demand coincide with some accumulation of  production capacities. 

We observe that $\pelec(\tau) \in  \{ c_i + \tau e_i , i=1,\ldots,j\} \cup \SD\cup \SK$.  In particular, from Definition \ref{def:clearingElec}, 
$\uup(\tau) = \inf\{p>0; \Off(\tau;p) > D(p)\},$
and we obtain that
$D(\uup(\tau+\epsilon)) \leq \Off(\tau+\epsilon; \uup(\tau+\epsilon))\leq \Off(\tau; \uup(\tau+\epsilon))$
from which we conclude that $\uup(\tau+\epsilon) \geq \uup(\tau)$. 

Now we prove the right continuity of $\tau \mapsto \uup(\tau)$. Let us fix a $\tau$.

\subsubsection*{(i) We first consider the case $D(\uup(\tau)) < \Off(\tau; \uup(\tau))$. }This means that $\uup(\tau)$ is of the form $c_\ell + \tau \ell$, for a given $\ell$. Then when $\epsilon >0$ is small enough, we also have $\uup(\tau + \epsilon) = c_\ell + (\tau + \epsilon) e_\ell$. Indeed, 
$D(c_\ell + (\tau + \epsilon) e_\ell) \leq D(c_\ell + \tau e_\ell)$ and for a small enough $\epsilon$, 
\[
\Off(\tau;c_\ell + \tau e_\ell) = \kappa_\ell + 
\sum_{i\neq\ell}\kappa_i \ind_{\{\tau \leq \frac{c_\ell - c_i }{1 - e_i/e_\ell}}\} = \Off(\tau +\epsilon;c_\ell + (\tau + \epsilon) e_\ell).
\]
Thus, $D(c_\ell + (\tau + \epsilon) e_\ell) < \Off(\tau +\epsilon;c_\ell + (\tau + \epsilon) e_\ell)$ which implies that $\uup(\tau) + e_\ell\epsilon = c_\ell + (\tau + \epsilon)e_\ell\geq \uup(\tau+\epsilon)$ and hence
$ e_\ell\epsilon \geq \uup(\tau+\epsilon) - \uup(\tau).$

\subsubsection*{(ii) We consider next  the case $D(\uup(\tau)) >   \Off(\tau; \uup(\tau))$. }This means that $\uup(\tau)\in \SD$ is at a discontinuity point, say $p_d$ of the demand,  $\uup(\tau)= p_d$. 
Then, for any $\delta>0$,
\[D(\uup(\tau) + \delta) < \Off(\tau; \uup(\tau) + \delta).\] 
But
\[
\Off(\tau; p_d +\delta) = \sum_{i=1}^J \kappa_i \ind_{\{\tau \leq  \frac{ p_d + \delta  - c_i}{e_i}\}}
\]
and we can choose $\delta$ to be small enough so that $\tau \neq \frac{ p_d + \delta  - c_i}{e_i}$. 
Then, for a small enough $\epsilon$, 
\[
D(\uup(\tau) +\delta) < \Off(\tau; \uup(\tau) + \delta) = \Off(\tau + \epsilon; \uup(\tau) + \delta),
\] 
which implies that $\uup(\tau) + \delta \geq \op(\tau+\epsilon)$, so we obtain
$ \delta \geq \uup(\tau+\epsilon) - \uup(\tau) \geq 0.$

\subsubsection*{(iii) We consider now the case $D(\uup(\tau)) =   \Off(\tau; \uup(\tau))$. }This means that $\uup(\tau)\in \SK$, say $\uup(\tau)=p_c$  
Then, for any $\delta>0$,
\[D(\uup(\tau) + \delta) < \Off(\tau; \uup(\tau) + \delta).\] 
But, 
\[\Off(\tau; p_c +\delta) = \sum_{i=1}^J \kappa_i \ind_{\{\tau \leq  \frac{ p_c + \delta  - c_i}{e_i}\}}\]
and we can choose $\delta$ small enough such that $\tau \neq \frac{ p_c + \delta  - c_i}{e_i}$. 
Then, for $\epsilon$ small enough, 
\[D(\uup(\tau) +\delta) < \Off(\tau; \uup(\tau) + \delta) = \Off(\tau + \epsilon; \uup(\tau) + \delta)\] 
which implies that $\uup(\tau) + \delta \geq \uup(\tau+\epsilon)$, so we get 
$\delta \geq \uup(\tau+\epsilon) - \uup(\tau) \geq 0.$
The right-continuity of $\tau \mapsto \op(\tau)$ follows, by definition as $\op(\tau)$ is a continuous transformation of $\uup(\tau)$.

\subsubsection{Proof of Lemma \ref{lem:Wdecreasing}.} 

The proof consists in a complete analysis of the entire combination of situations, but each situation is elementary. 

Let us suppose the opposite, that is there exists $0\leq t < t'\leq \Pnlt$ such that the emission levels are 
$\cW(t') >  \cW(t)$.

We define the function $\tau \mapsto I(\tau)$ valued in the subsets of $\{1,\ldots,J\}$ that lists the producers in the electricity  market producing at tax level $\tau$: 
\[i\in I(\tau)\quad\mbox{if} \quad  \varphi_i(\tau) > 0.\]
In particular we have for all $\tau\in[0,\Pnlt]$, 
\[\cW(\tau) = \sum_{i\in I(\tau)} e_i \varphi_i(\tau).\] 
\subsubsection*{(i) We first examine the situation  $I(t') = I(t)$.}

To shorten the expressions, we adopt the following shortened notation
\[ I(t) = I \quad \mbox{ and  } \quad I(t') = I'.\]

{\bf (i-a)} If $\sum_{i\in I}\varphi_i(t) = D(t)$ then, from the demand constraint (DC)  and the emission levels hypothesis (EH), we have 

\renewcommand{\theequation}{DC}
\begin{align}
 \sum_{i\in I}\varphi_i(t) = D(t) \geq  D(t')\geq \sum_{i\in I'}\varphi_i(t')
\end{align}
\renewcommand{\theequation}{EH}
\begin{align}
 \sum_{i\in I}\varphi_i(t) e_i  <  \sum_{i\in I'}\varphi_i(t')e_i.
\end{align}
\renewcommand{\theequation}{\thesection.\arabic{equation}}
We denote by $\widehat{I}$ the subset of $I$ of index   such that $c_i+t e_i = \uup(t)$. In particular, when $j\in I\setminus\widehat{I}$, then $\varphi_j(t) = \kappa_j$. 

Note that there exists at most one index (say $\ell$) in the set $\widehat{I}\cap\widehat{I'}$. 
If $j \in \widehat{I}\setminus\widehat{I'}$ and $k\in\widehat{I'}\setminus\widehat{I}$,  then, by the definition of the sets
\begin{align*}
\begin{array}{ll}
c_j + e_j t = c_{\ell} + e_{\ell} t, & \quad c_k + e_k t < c_j + e_j t,\\
c_j + e_j t' < c_{\ell} + e_{\ell} t', & \quad  c_k + e_k t' = c_{\ell} + e_{\ell} t',\\
 c_j + e_j t' < c_k + e_k t , & \quad c_k + e_k t < c_{\ell} + e_{\ell} t,
\end{array}
\end{align*}
from which, we easily deduce that 
\begin{align}\label{eq:classement_rate}
\max\{e_j, j\in \widehat{I}\setminus\widehat{I'}\} <  e_{\ell} < \min\{e_k,k\in \widehat{I'}\setminus\widehat{I}\}.
\end{align}
 
Now we decompose the sets $I$ and $I'$ in  the demand constraint (DC) and the emission levels hypothesis (EH) as follows: 
\renewcommand{\theequation}{DC}
\begin{align}
& \sum_{n\in I\setminus\widehat{I}\cup \widehat{I'}}\kappa_n + \varphi_{\ell}(t) + \sum_{i\in \widehat{I}\setminus\widehat{I'}}\varphi_i(t) 
+ \sum_{k\in \widehat{I'}\setminus\widehat{I}}\kappa_k 
 \geq  \sum_{n\in I\setminus\widehat{I}\cup \widehat{I'}}\kappa_n + \varphi_{\ell}(t') + \sum_{i\in \widehat{I}\setminus\widehat{I'}}\kappa_i
+ \sum_{k\in \widehat{I'}\setminus\widehat{I}}\varphi_k(t'),
\end{align}
\renewcommand{\theequation}{EH}
\begin{align}
& \sum_{n\in I\setminus\widehat{I}\cup \widehat{I'}}e_n \kappa_n + e_{\ell} \varphi_{\ell}(t) + \sum_{i\in \widehat{I}\setminus\widehat{I'}}e_i \varphi_i(t) 
+ \sum_{k\in \widehat{I'}\setminus\widehat{I}}e_k\kappa_k  \nonumber   \\
& \qquad< \sum_{n\in I\setminus\widehat{I}\cup \widehat{I'}}e_n \kappa_n + e_{\ell} \varphi_{\ell}(t') + \sum_{i\in \widehat{I}\setminus\widehat{I'}}e_i \kappa_i
+ \sum_{k\in \widehat{I'}\setminus\widehat{I}}e_k \varphi_k(t').
\end{align}
After simplification, we obtain
\renewcommand{\theequation}{DC}
\begin{align}
&  \varphi_{\ell}(t) + \sum_{i\in \widehat{I}\setminus\widehat{I'}}\varphi_i(t) 
+ \sum_{k\in \widehat{I'}\setminus\widehat{I}}\kappa_k \geq  \varphi_{\ell}(t') + \sum_{i\in \widehat{I}\setminus\widehat{I'}}\kappa_i
+ \sum_{k\in \widehat{I'}\setminus\widehat{I}}\varphi_k(t'), 
\end{align}
\renewcommand{\theequation}{EH}
\begin{align}
&  e_{\ell} \varphi_{\ell}(t) + \sum_{{\small i\in \widehat{I}\setminus\widehat{I'}}}e_i \varphi_i(t) 
+ \sum_{k\in \widehat{I'}\setminus\widehat{I}}e_k\kappa_k< e_{\ell} \varphi_{\ell}(t') + \sum_{i\in \widehat{I}\setminus\widehat{I'}}e_i \kappa_i
+ \sum_{k\in \widehat{I'}\setminus\widehat{I}}e_k \varphi_k(t').
\end{align}
\renewcommand{\theequation}{\thesection.\arabic{equation}}
Assume first that $\varphi_{\ell}(t) + \sum_{i\in \widehat{I}\setminus\widehat{I'}}\varphi_i(t)\geq \varphi_{\ell}(t') + \sum_{i\in \widehat{I}\setminus\widehat{I'}}\kappa_i$. Equivalently,  we have 
\[\varphi_{\ell}(t)- \varphi_{\ell}(t') \geq  \sum_{i\in \widehat{I}\setminus\widehat{I'}}(\kappa_i -\varphi_i(t))\]
and from \eqref{eq:classement_rate}, 
\[e_{\ell}\left(\varphi_{\ell}(t)- \varphi_{\ell}(t')\right) \geq \sum_{i\in \widehat{I}\setminus\widehat{I'}}e_i (\kappa_i -\varphi_i(t)).\]
By combining the above with the emission levels hypothesis (EH), we obtain the following contradiction: $\sum_{k\in \widehat{I'}\setminus\widehat{I}}e_k\kappa_k < \sum_{k\in \widehat{I'}\setminus\widehat{I}}e_k \varphi_k(t')$. 

Assume now that $\varphi_{\ell}(t) + \sum_{i\in \widehat{I}\setminus\widehat{I'}}\varphi_i(t) <  \varphi_{\ell}(t') + \sum_{i\in \widehat{I}\setminus\widehat{I'}}\kappa_i$.  Multiplying the demand constraint (DC) by  $\hat{e}:=\min\{e_k,k\in \widehat{I'}\setminus\widehat{I}\}$, we get  
\begin{align*}
\sum_{k\in \widehat{I'}\setminus\widehat{I}}e_k (\kappa_k - \varphi_k(t')) \geq \hat{e} \left(\varphi_{\ell}(t)- \varphi_{\ell}(t')\right) + \hat{e} \sum_{i\in \widehat{I}\setminus\widehat{I'}}(\kappa_i -\varphi_i(t)).
\end{align*}
But from (EH) and \eqref{eq:classement_rate}, we also have
\begin{align*}
\sum_{k\in \widehat{I'}\setminus\widehat{I}}e_k (\kappa_k - \varphi_k(t')) < {e_{\ell}} \left(\varphi_{\ell}(t)- \varphi_{\ell}(t')\right) + {e}_{\ell} \sum_{i\in \widehat{I}\setminus\widehat{I'}}(\kappa_i -\varphi_i(t)),
\end{align*}
then 
\begin{align*}
0\geq (\hat{e}-{e_{\ell}}) \left(\varphi_{\ell}(t)- \varphi_{\ell}(t')\right) + (\hat{e}-{e_{\ell}}) \sum_{i\in \widehat{I}\setminus\widehat{I'}}(\kappa_i -\varphi_i(t)),
\end{align*}
which contradicts our assumption. 

\noindent
{\bf (i-b)} If $\displaystyle \sum_{i\in I}\varphi_i(t) <  D(t)$ then, 
for all $i\in I$, $\varphi_i(t)= \kappa_i$ and (EH) is necessarily false. 

\newcommand{\II}{\,\,{I}\!\!\!\!\!\!{I}\,\,\,'}
\subsubsection*{(ii) We examine the situation $I(t') \neq I(t)$}

We add the following shortened notation: $I(t)\cap I'(t) = {\II}$.\\
We break down $I$ and $I'$ into the sets $\II$, $I\backslash I'$ and $I'\backslash I$. We denote by $\widehat{I}$ the set of index $i\in I$ such that $c_i+t e_i = \uup(t)$. In particular, when $j\in I \backslash  \widehat{I}$, then $\varphi_j(t) = \kappa_j$. 

We first derive some generic relations between the emission rates for these. 

Among the indexes in the set $\II$, we observe that at most one index exists (say $\ell$) in the set $\widehat{I}\cap\widehat{I'}$. 
If $j \in \widehat{I}\backslash\widehat{I'}$, if $k\in\widehat{I'}\backslash\widehat{I}$,  then, by the definition of the sets
\begin{align*}
\begin{array}{ll}
c_j + e_j t = c_{\ell} + e_{\ell} t, & \quad c_k + e_k t < c_j + e_j t ,\\
c_j + e_j t' < c_{\ell} + e_{\ell} t', & \quad c_k + e_k t' = c_{\ell} + e_{\ell} t' ,\\
c_j + e_j t' < c_k + e_k t  , & \quad  c_k + e_k t < c_{\ell} + e_{\ell} t ,\\
\end{array}
\end{align*}
from which, we easily deduce that 
\begin{equation}\label{eq:classement_1}
\begin{aligned}
\hat{e}:=   \max\left\{e_j, j\in \II\cap\left( \widehat{I}\backslash\widehat{I'}\right)\right\} <  e_{\ell} < \min\left\{e_k,k\in \II\cap\left(\widehat{I'}\backslash\widehat{I}
\right)\right\}:=\hat{e}'.
\end{aligned}
\end{equation}

\noindent  For $j\in I\backslash I'$ and  $k \in I'\backslash I$, we have
\begin{align*}
c_j + e_j t  &< c_k + e_k t \quad \mbox{ and } \quad 
c_j + e_j t' >  c_k + e_k t' 
\end{align*}
from which, we also easily deduce that 
\begin{align}\label{eq:classement_2}
\max\{e_k, k \in I'\backslash I\} < \min\{e_j, j\in I\backslash I'\}.
\end{align}
For the same $j$ and $k$, for $(\hat{c},\hat{e})$ representative of index in $\II\cap \widehat{I}\setminus \widehat{I'}$, and  $(\hat{c}',\hat{e}')$ representative of index in $\II\cap \widehat{I'}\setminus \widehat{I}$, we  also have 
\begin{align*}
\begin{array}{ll}
c_j + e_j t  &\leq  \hat{c} + \hat{e}  t \\
c_j + e_j t' &>  \hat{c} + \hat{e}  t'
\end{array}
\quad\mbox{ and }\quad 
 \begin{array}{ll}
c_k + e_k t  &>  \hat{c}' + \hat{e}'  t \\
c_k + e_k t' &\leq  \hat{c}' + \hat{e}'  t'
\end{array}
\end{align*}
from which, we  deduce that 
\begin{equation}\label{eq:classement_3}
\begin{aligned}
\min\{e_j, j\in I\backslash I'\} &> (e_{\ell},\hat{e})\vee \max\{e_k, k \in I'\backslash I\}\\
\max\{e_k, k \in I'\backslash I\} &< (e_{\ell}, \hat{e}')\wedge \min\{e_j, j\in I\backslash I'\}.
\end{aligned}
\end{equation}

We divide the analysis in cases. In the first one the demand is fully satisfied for the price $\pelec(t)$. 

\medskip\noindent
{\bf (ii-a)} If $\sum_{i\in I}\varphi_I(t) = D(\pelec(t))$, 

\renewcommand{\theequation}{DC}
\begin{align}
\sum_{i\in I\backslash I' }\varphi_I + \sum_{i\in \II }\varphi_i(t) = D(\pelec(t)) \geq  D(\pelec(t'))\geq \sum_{i\in \II}\varphi_i(t') + \sum_{i\in I'\backslash  I}\varphi_i(t') , 
\end{align}
\renewcommand{\theequation}{EH}
\begin{align}
\sum_{i\in I\backslash I' }\varphi_i(t)e_i + \sum_{i\in \II}\varphi_i(t) e_i  <  \sum_{i\in \II}\varphi_i(t')e_i  + \sum_{i\in I'\backslash I }\varphi_i(t) e_i .
\end{align}
\renewcommand{\theequation}{\thesection.\arabic{equation}}
We must then examine the following two subcases, relative to the situations where the demand is satisfied or not at the price $\pelec(t')$.

\medskip\noindent {\bf (ii-a-1)} If $\sum_{i\in I'}\varphi_i(t') <  D(\pelec(t'))$, then $\varphi_i(t') = \kappa_i$ for all $i\in I'$ and 
\renewcommand{\theequation}{DC}
\begin{align}
\sum_{j\in I\backslash I' }\varphi_j(t) + \sum_{i\in \II }\varphi_i(t)  >  \sum_{i\in \II}\kappa_i + \sum_{k\in I'\backslash  I}\kappa_k,
\end{align}
\renewcommand{\theequation}{EH}
\begin{align}
\sum_{j\in I\backslash I' }\varphi_j(t)e_j + \sum_{i\in \II}\varphi_i(t) e_i  <  \sum_{i\in \II}\kappa_i e_i  + \sum_{k\in I'\backslash I }\kappa_k e_k .
\end{align}
\renewcommand{\theequation}{\thesection.\arabic{equation}}
As $\varphi_i(t) = \kappa_i$ when $i\in (I\backslash \widehat{I})\cap \II$, we can simplify the two sides of (DC) and (EH) by the sum over $(I\backslash \widehat{I})\cap \II$. The remaining part of $\II$ is $\{\ell \}  \cup \left(\widehat{I}\backslash\widehat{I'} \cap \II \right)$: 
\renewcommand{\theequation}{DC}
\begin{align}
\sum_{j\in I\backslash I' }\varphi_j(t) 
+ \varphi_{\ell} 
+ \sum_{i\in\widehat{I}\backslash\widehat{I'} \cap \II }\varphi_i(t)  
>  
\kappa_{\ell} 
+ \sum_{i\in\widehat{I}\backslash\widehat{I'} \cap \II }\kappa_i 
+ \sum_{k\in I'\backslash  I}\kappa_k ,
\end{align}
\renewcommand{\theequation}{EH}
\begin{align}
\sum_{j\in I\backslash I' }e_j\varphi_j(t) 
+ e_{\ell} \varphi_{\ell} 
+ \sum_{i\in\widehat{I}\backslash\widehat{I'} \cap \II }e_i \varphi_i(t)  
< 
e_{\ell} \kappa_{\ell} 
+ \sum_{i\in\widehat{I}\backslash\widehat{I'} \cap \II }e_i \kappa_i 
+ \sum_{k\in I'\backslash  I}e_k \kappa_k.
\end{align}
\renewcommand{\theequation}{\thesection.\arabic{equation}}
Then we multiply (DC) by $\bar{e} := (e_{\ell},\hat{e})\vee \max\{e_k, k \in I'\backslash I\}$, and we obtain by \eqref{eq:classement_3}
\begin{align*}
\sum_{j\in I\backslash I' }e_j \varphi_j(t) 
+\bar{e} \varphi_{\ell} 
+ \bar{e}\sum_{i\in\widehat{I}\backslash\widehat{I'} \cap \II }\varphi_i(t)  
>  
\bar{e} \kappa_{\ell} 
+\bar{e}  \sum_{i\in\widehat{I}\backslash\widehat{I'} \cap \II }\kappa_i 
+ \sum_{k\in I'\backslash  I}e_k \kappa_k.
\end{align*}
We subtract with (EH) : 
\begin{align*}
(\bar{e}-e_{\ell})  \varphi_{\ell} 
+ \sum_{i\in\widehat{I}\backslash\widehat{I'} \cap \II } (\bar{e} -e_i) \varphi_i(t)  
>  
(\bar{e} -e_{\ell})\kappa_{\ell} 
+ \sum_{i\in\widehat{I}\backslash\widehat{I'} \cap \II }(\bar{e} -e_i)  \kappa_i.
\end{align*}
But $\bar{e}\geq e_{\ell}$ when $\ell$ exists, and $\bar{e}\geq \hat{e}\geq e_i$ for  $i\in\widehat{I}\backslash\widehat{I'} \cap \II$. So we obtain our contradiction. 

\medskip\noindent {\bf (ii-a-2)} If $\sum_{i\in I'}\varphi_i(t') = D(\pelec(t'))$, then 
\renewcommand{\theequation}{DC}
\begin{align}
\sum_{j\in I\backslash I' }\varphi_j(t) 
+ \sum_{i\in \II }\varphi_i(t)  
>  \sum_{i\in \II}\varphi_i(t') 
+ \sum_{k\in I'\backslash  I}\varphi_k(t'), 
\end{align}
\renewcommand{\theequation}{EH}
\begin{align}
\sum_{j\in I\backslash I' }\varphi_j(t)e_j + \sum_{i\in \II}\varphi_i(t) e_i  <  \sum_{i\in \II}\varphi_i(t') e_i  + \sum_{k\in I'\backslash I }\varphi_k(t') e_k.
\end{align}
We decompose $I\backslash I' = \left(I\backslash(I'\cup \widehat{I}) \right) \cup \widehat{I}\backslash I'$ and $I'\backslash I = \left(I'\backslash (I\cup \widehat{I'})\right) \cup \widehat{I'}\backslash  I$: 
\renewcommand{\theequation}{DC}
\begin{align}
& \sum_{j\in I\backslash (I'\cup \widehat{I}) }\kappa_j 
+ \sum_{j\in \widehat{I}\backslash I'}\varphi_j(t) 
+ \sum_{i\in \II }\varphi_i(t)  
 >  \sum_{i\in \II}\varphi_i(t') 
+ \sum_{k\in \widehat{I'}\backslash  I}\varphi_k(t') 
+ \sum_{k\in I'\backslash (I\cup \widehat{I'}) }\kappa_k,
\end{align}
\renewcommand{\theequation}{EH}
\begin{align}
& \sum_{j\in I\backslash (I'\cup \widehat{I}) }e_j \kappa_j 
+ \sum_{j\in \widehat{I}\backslash I'}e_j \varphi_j(t) 
+ \sum_{i\in \II }e_i \varphi_i(t)  
<  \sum_{i\in \II}e_i \varphi_i(t') 
+ \sum_{k\in \widehat{I'}\backslash  I}e_k \varphi_k(t') 
+ \sum_{k\in I'\backslash (I\cup \widehat{I'})}e_k \kappa_k.
\end{align}
We also break down the set $\II = (I\cap I')$: 
\begin{align*}
\II  =  & \left(\II\cap\{\ell\}\right)  \cup \left(\II\cap\widehat{I}\backslash\widehat{I'}\right)\cup \left(\II\cap \widehat{I'}\backslash\widehat{I}\right)\cup \left({I}\backslash\widehat{I} \cap{I'}\backslash\widehat{I'}) \right).
\end{align*}
\renewcommand{\theequation}{DC}
\begin{align}
& \sum_{j\in I\backslash (I'\cup \widehat{I})) }\kappa_j 
+ \sum_{j\in \widehat{I}\backslash I'}\varphi_j(t) 
+\varphi_{\ell}(t) 
+ \sum_{i\in \widehat{I}\backslash\widehat{I'} \cap \II} \varphi_i(t) 
+ \sum_{i\in \widehat{I'}\backslash\widehat{I} \cap \II} \varphi_i(t)  \nonumber \\
& >  
\varphi_{\ell}(t') 
+ \sum_{i\in \widehat{I}\backslash\widehat{I'} \cap \II} \varphi_i(t') 
+ \sum_{i\in \widehat{I'}\backslash\widehat{I} \cap \II} \varphi_i(t')
+ \sum_{k\in \widehat{I'}\backslash  I}\varphi_k(t') 
+ \sum_{k\in I'\backslash (I\cup \widehat{I'})}\kappa_k ,
\end{align}
\renewcommand{\theequation}{EH}
\begin{align}
& \sum_{j\in I\backslash (I'\cup \widehat{I}) }e_j \kappa_j 
+ \sum_{j\in \widehat{I}\backslash I'}e_j \varphi_j(t) 
+e_{\ell} \varphi_{\ell}(t) 
+ \sum_{i\in \widehat{I}\backslash\widehat{I'} \cap \II} e_i \varphi_i(t) 
+ \sum_{i\in \widehat{I'}\backslash\widehat{I} \cap \II} e_i \varphi_i(t) \nonumber\\
& <  
e_{\ell} \varphi_{\ell}(t') 
+ \sum_{i\in \widehat{I}\backslash\widehat{I'} \cap \II} e_i \varphi_i(t') 
+ \sum_{i\in \widehat{I'}\backslash\widehat{I} \cap \II} e_i \varphi_i(t')
+ \sum_{k\in \widehat{I'}\backslash  I}e_k \varphi_k(t') 
+ \sum_{k\in I'\backslash (I\cup \widehat{I'})}e_k \kappa_k . 
\end{align}
\renewcommand{\theequation}{\thesection.\arabic{equation}}
For index $i$ in the last subset $({I}\backslash\widehat{I} \cap{I'}\backslash\widehat{I'})$, we have $\varphi_i(t) = \kappa_i$ and $\varphi_i(t') = \kappa_i$,  so we simplify (DC) and (EH) from this last subset. Thus, 
\renewcommand{\theequation}{DC}
\begin{align}
& \sum_{j\in I\backslash (I'\cup \widehat{I}) }\kappa_j 
+ \sum_{j\in \widehat{I}\backslash I'}\varphi_j(t) 
+\varphi_{\ell}(t) 
+ \sum_{i\in \widehat{I}\backslash\widehat{I'} \cap \II} \varphi_i(t) 
+ \sum_{i\in \widehat{I'}\backslash\widehat{I} \cap \II} \kappa_i \nonumber\\
&\qquad >  
\varphi_{\ell}(t') 
+ \sum_{i\in \widehat{I}\backslash\widehat{I'} \cap \II} \kappa_i 
+ \sum_{i\in \widehat{I'}\backslash\widehat{I} \cap \II} \varphi_i(t')
+ \sum_{k\in \widehat{I'}\backslash  I}\varphi_k(t') 
+ \sum_{k\in I'\backslash (I\cup \widehat{I'}) }\kappa_k,
\end{align}
\renewcommand{\theequation}{EH}
\begin{align}
& \sum_{j\in I\backslash (I'\cup \widehat{I}) }e_j \kappa_j 
+ \sum_{j\in \widehat{I}\backslash I'}e_j \varphi_j(t) 
+e_{\ell} \varphi_{\ell}(t) 
+ \sum_{i\in \widehat{I}\backslash\widehat{I'} \cap \II} e_i \varphi_i(t) 
+ \sum_{i\in \widehat{I'}\backslash\widehat{I} \cap \II} e_i \kappa_i \nonumber \\
&\qquad  <  
e_{\ell} \varphi_{\ell}(t') 
+ \sum_{i\in \widehat{I}\backslash\widehat{I'} \cap \II} e_i \kappa_i
+ \sum_{i\in \widehat{I'}\backslash\widehat{I} \cap \II} e_i \varphi_i(t')
+ \sum_{k\in \widehat{I'}\backslash  I}e_k \varphi_k(t') 
+ \sum_{k\in I'\backslash (I\cup \widehat{I'} }e_k \kappa_k. 
\end{align}
\renewcommand{\theequation}{\thesection.\arabic{equation}}
We multiply (DC) by $\bar{e}:=(e_{\ell},\hat{e})\vee \max\{e_k, k \in I'\backslash I\}$ , we get by \eqref{eq:classement_3}
\begin{align*}
& \sum_{j\in I\backslash (I'\cup \widehat{I}) }e_j \kappa_j 
+ \sum_{j\in \widehat{I}\backslash I'}e_j \varphi_j(t) 
+\bar{e} \varphi_{\ell}(t) 
+ \bar{e}\sum_{i\in \widehat{I}\backslash\widehat{I'} \cap \II} \varphi_i(t) 
+ \bar{e}\sum_{i\in \widehat{I'}\backslash\widehat{I} \cap \II} \kappa_i\\
& >  
\bar{e} \varphi_{\ell}(t') 
+ \bar{e} \sum_{i\in \widehat{I}\backslash\widehat{I'} \cap \II} \kappa_i 
+ \bar{e} \sum_{i\in \widehat{I'}\backslash\widehat{I} \cap \II} \varphi_i(t')
+ \sum_{k\in \widehat{I'}\backslash  I}e_k \varphi_k(t') 
+ \sum_{k\in I'\backslash (I\cup \widehat{I'} }e_k \kappa_k.
\end{align*}
We subtract (EH) 
\begin{align*}
&(\bar{e} - e_{\ell})  \varphi_{\ell}(t) 
+ \sum_{i\in \widehat{I}\backslash\widehat{I'} \cap \II} (\bar{e} - e_i) \varphi_i(t) 
+ \sum_{i\in \widehat{I'}\backslash\widehat{I} \cap \II} (\bar{e} - e_i)  \kappa_i\\
& >  
(\bar{e} - e_{\ell})  \varphi_{\ell}(t') 
+ \sum_{i\in \widehat{I}\backslash\widehat{I'} \cap \II} (\bar{e} - e_i)  \kappa_i 
+ \sum_{i\in \widehat{I'}\backslash\widehat{I} \cap \II} (\bar{e} - e_i)  \varphi_i(t').
\end{align*}
We arrange the terms 
\begin{align*}
&(\bar{e} - e_{\ell})  \varphi_{\ell}(t) 
+ \sum_{i\in \widehat{I}\backslash\widehat{I'} \cap \II} (\bar{e} - e_i) \varphi_i(t) 
+ \sum_{i\in \widehat{I'}\backslash\widehat{I} \cap \II} (\bar{e} - e_i)  \kappa_i\\
& >  
(\bar{e} - e_{\ell})  \varphi_{\ell}(t') 
+ \sum_{i\in \widehat{I}\backslash\widehat{I'} \cap \II} (\bar{e} - e_i)  \kappa_i 
+ \sum_{i\in \widehat{I'}\backslash\widehat{I} \cap \II} (\bar{e} - e_i)  \varphi_i(t').
\end{align*}
If $\ell$ exists, then $\bar{e} = e_{\ell}$ and 
\begin{equation}\label{eq:tmp_for_iib2_with_l}
\begin{aligned}
 \sum_{i\in \widehat{I'}\backslash\widehat{I} \cap \II} (e_{\ell} - e_i)  \left(\kappa_i - \varphi_i(t')\right)
 & >  
\sum_{i\in \widehat{I}\backslash\widehat{I'} \cap \II} (e_{\ell} - e_i)  \left( \kappa_i -  \varphi_i(t) \right), \\
 \sum_{i\in \widehat{I'}\backslash\widehat{I} \cap \II} (e_{\ell} - \hat{e}')  \left(\kappa_i - \varphi_i(t')\right)
&  >  
\sum_{i\in \widehat{I}\backslash\widehat{I'} \cap \II} (e_{\ell} - \hat{e})  \left( \kappa_i -  \varphi_i(t) \right). 
\end{aligned}
\end{equation}
But $\hat{e} < e_{\ell} <  \hat{e}'$, and the contradiction follows. \\
If $\ell$ does not exist, then $\bar{e} = \hat{e}\vee \max\{e_k, k \in I'\backslash I\}$ 
\begin{equation}\label{eq:tmp_for_iib2_without_l}
\begin{aligned}
 \sum_{i\in \widehat{I'}\backslash\widehat{I} \cap \II} (\bar{e} - e_i)  \left(\kappa_i - \varphi_i(t')\right)
 & >  
\sum_{i\in \widehat{I}\backslash\widehat{I'} \cap \II} (\bar{e} - e_i)  \left( \kappa_i -  \varphi_i(t) \right), \\
 \sum_{i\in \widehat{I'}\backslash\widehat{I} \cap \II} (\bar{e} - \hat{e}')  \left(\kappa_i - \varphi_i(t')\right)
&  >  
\sum_{i\in \widehat{I}\backslash\widehat{I'} \cap \II} (\bar{e} - \hat{e})  \left( \kappa_i -  \varphi_i(t) \right) .
\end{aligned}
\end{equation}
But $\max\{e_k, k \in I'\backslash I\} < \hat{e}'$, and the contradiction follows. 

\medskip\noindent
{\bf (ii-b)} If $\sum_{i\in I}\varphi_i(t) <  D(\pelec(t))$ then 
for all $i\in I$, $\varphi_i(t)= \kappa_i$. 

\noindent {\bf (ii-b 1)} If $\sum_{i\in I'}\varphi_i(t') <  D(\pelec(t'))$, then $\varphi_i(t') = \kappa_i$ for all $i\in I'$. Moreover, we have that $\Off(t,\uup(t) )\geq D(\uup(t)) + \varepsilon) \geq D(\uup(t')) > \Off(t',\uup(t'))$ and (DC)-(EH) becomes
\renewcommand{\theequation}{DC}
\begin{align}
\sum_{j\in I\backslash I' }\kappa_j  >  \sum_{k\in I'\backslash  I}\kappa_k,
\end{align}
\renewcommand{\theequation}{EH}
\begin{align}
\sum_{j\in I\backslash I' }e_j \kappa_j  <  \sum_{k\in I'\backslash I }e_k \kappa_k. 
\end{align}
\renewcommand{\theequation}{\thesection.\arabic{equation}}
Then, we multiply (DC) by $\min\{e_j ;j\in I\backslash I'\} \geq \max\{e_k; k \in I'\backslash I\}$, and we obtain a contradiction with (EH). 

\medskip\noindent {\bf (ii-b-2)} If $\sum_{i\in I'}\varphi_i(t') = D(\pelec(t'))$, we go back to the analysis of the case {\bf (ii-a-2)}, with the main difference that all quantities $\varphi_i(t)$ are now equal to $\kappa_i$. We go to inequalities  \eqref{eq:tmp_for_iib2_with_l} and \eqref{eq:tmp_for_iib2_without_l} which are simplified as the right-had sides are now zero. The contradiction follows with the same arguments.

\end{document}